\documentclass[a4paper,12pt]{amsart}

\usepackage{amsthm}
\usepackage{amsmath}
\usepackage{geometry}
\usepackage{amsfonts}
\usepackage{graphicx}
\usepackage{array}
\usepackage{amssymb}
\usepackage{mathtools}


	\renewcommand{\d}{\mathrm{d}}


\newtheorem{theorem}{Theorem}

\newtheorem{corollary}[theorem]{Corollary}
\newtheorem{proposition}[theorem]{Proposition}

\newtheorem{lemma}[theorem]{Lemma}


\numberwithin{equation}{section}
\numberwithin{theorem}{section}


\begin{document}

\title[Hamiltonian Field theories with boundaries]{Covariant Hamiltonian field theories on manifolds with boundary: Yang-Mills theories}

\author{A. Ibort}
\address{ICMAT and Depto. de Matem\'aticas, Univ. Carlos III de Madrid, Avda. de la
Universidad 30, 28911 Legan\'es, Madrid, Spain.}
\email{albertoi@math.uc3m.es}
\author{A. Spivak}
\address{Dept. of Mathematics, Univ. of California at Berkeley, 903 Evans Hall, 94720 Berkeley CA, USA}
\email{ameliaspivak@gmail.com}

\begin{abstract}
The multisymplectic formalism of field theories developed by many mathematicians over the last fifty years is extended in this work
to deal with manifolds that have boundaries. In particular, we develop a multisymplectic framework for first order covariant Hamiltonian field theories on manifolds with boundaries. This work is a geometric fulfillment of Fock's characterization of field theories as it appears in recent work by Cattaneo, Mnev and Reshetikhin \cite{Ca14}. This framework leads to a true geometric understanding of conventional choices for boundary conditions. For example, the boundary condition that the pull-back of the 1-form on the cotangent space of fields at the boundary vanish, i.e. $\Pi^*\alpha=0$ , is shown to be a consequence of our finding that the boundary fields of the theory lie in the 0-level set of the moment map of the gauge group of the theory. 

 It is also shown that the natural way to interpret Euler-Lagrange equations as an evolution system near the boundary is as a presymplectic system in an extended phase space containing the natural configuration and momenta fields at the boundary together with extra degrees of freedom corresponding to the transversal components at the boundary of the momenta fields of the theory. The consistency conditions at the boundary are analyzed and the reduced phase space of the system is determined to be a symplectic manifold with a distinguished isotropic submanifold corresponding to the boundary data of the solutions of Euler-Lagrange equations. This setting makes it possible to define well-posed boundary conditions, and provides the adequate setting for the canonical quantization of the system.  

The notions of the theory will be tested against three significant examples:  scalar fields, Poisson $\sigma$-model and Yang-Mills theories.
\end{abstract}

\maketitle
\tableofcontents


\section{Introduction}\label{sec:introduction}
Multisymplectic geometry provides a convenient framework for describing first order covariant field theories both in the Lagrangian and Hamiltonian formalism \cite{Ca91}. See also the GimMsy papers \cite{Go98} and \cite{Go04}. However, the role of boundaries in the multisymplectic formalism, relevant as it is in the construction of the corresponding quantum field theories, has not been incorporated in a unified geometrical picture of the theory.    
  
In this paper we will extend the multisymplectic formalism to deal with first-order covariant Hamiltonian field theories on manifolds with boundary, providing a consistent geometrical framework, for instance, for the perturbative quantization program recently set up by A. Cattaneo, P. Mnev and N. Reshetikhin \cite{Ca14} for theories on manifolds with boundary. The restriction to the boundary will provide the canonical Hamiltonian formalism needed for the canonical quantization picture that will be developed in detail elsewhere.  We will concentrate on the classical setting and we will prepare the ground to introduce a graded setting that will become useful when dealing with
the quantization of gauge theories. 

The history of the construction of a geometrical picture for field theories is extensive with many relevant contributions. We refer the reader to the comprehensive texts \cite{Gi09} and \cite{Bi11} and to references therein. The first author's earlier work on the subject, \cite{Ca91}, benefited from \cite{Ga72}, \cite{Go73}, \cite{Ki76}, \cite{Ki79} and many others. The ambitious GimMsy papers \cite{Go98} and \cite{Go04} were the first parts of a project that aimed to reconcile a multisymplectic geometrical formalism for field theories with the canonical picture needed for quantization. For recent work on the geometry of of first and higher order classical field theories see \cite{Gr12} and \cite{Gr15}. For recent work extending the multisymplectic formalism to higher order Hamiltonian theories see \cite{Le05}, \cite{Ec07}, \cite{Ro09}, \cite{Vi10}, \cite{Pr14} and references therein.

 In \cite{Ca11} the authors lay out Fock's unpublished account of the general structure of Lagrangian field theories on manifolds with boundary. Adjusting this account in the obvious
way, one arrives at the general structure of Hamiltonian field theories on manifolds with boundary $\mathit{a}$ $\mathit{la}$ Fock. The multisymplectic formalism we develop in this work to describe first-order Hamiltonian field theories on manifolds with boundary, puts meat and ligaments onto the bones, so to speak, of the Fock-inspired account of a Hamiltonian field theory. The immediate fruits of our formalism include a formula and a proof for the differential of the action functional, valid for any classical theory. In \cite{Ca11} and \cite{Ca14} and in many other works, for each classical field they consider, the authors have to come up with a different action functional and a different expression for the differential of the action functional. To do so they need to decide what the momenta of the theory need to be. In the multisymplectic formalism we develop, having one expression for the action functional and one expression for its differential that works for all classical theories, means in particular that we do not have to choose what the momenta fields should be for each physical theory we want to examine, our multisymplectic formalism identifies them for us. From there we are able to prove once and for all, that for all classical theories $\Pi(EL)$, the boundary values of the solutions the Euler-Lagrange equations, is an isotropic submanifold of $T^*\mathcal{F}_{\partial M}$ (see Sect. \ref{sec:general}). Our formalism is then shown to yield geometric insight into conventional choices for boundary conditions.

   This paper is devoted to systematically describing the classical ingredients in the proposed Hamiltonian framework. The description of the theory in the bulk, while  following along the lines already established in the literature, also includes analysis of the role of boundary terms in the computation of the critical points of the action functional.  Thus a natural relation emerges between the action functional and the canonical symplectic structure on the space of fields at the boundary.   It is precisely this relations that allows a better understanding of the role of boundary conditions. The canonical 1-form on the space of fields at the boundary can be directly related to the charges of the gauge symmetries of the theory, allowing us in this way to explain why admissible boundary conditions are determined by Lagrangian submanifolds on the space of fields at the boundary.

Once the geometrical analysis of the theory has been performed, the space of quantum states of the theory would be obtained, in the best possible situations,  by canonical or geometrical quantization of a reduced symplectic manifold of fields at the boundary that would describe its ``true'' degrees of freedom.   The propagator of the theory would be obtained by quantizing a Lagrangian submanifold of the reduced phase space of the theory provided by the specification of admissible boundary conditions. The latter should preserve the fundamental symmetries of the theory, in the sense that the charges associated to them should be preserved. The resulting overall picture as descrbed in the case of Chern-Simons theory  \cite{At90}, is that the functor defining a quantum field theory is obtained by geometric quantization of the quasi-category of Lagrangian submanifolds associated to admissible boundary conditions at the boundaries of space-times and their corresponding fields. (This picture is being currently extended to the Poisson $\sigma$-model \cite{Co13}, \cite{CC14}.)

The level of rigor of this work is that of standard differential geometry: When dealing with finite-dimensional objects, they will be smooth differentiable manifolds, locally trivial bundles, etc., however when dealing with infinite-dimensional spaces, we will assume, as customary, that the rules of global differential calculus apply and we will use them freely without providing constructions that will lead to bona fide Banach manifolds of maps and sections.  Also, the notation of variational differentials and derivatives will be used for clarity without attempting to discuss the classes of spaces of generalised functions needed to justify their use.

The paper is organized as follows:  Section \ref{sec:general} is devoted to summarizing the basic geometrical notions underlying the theory. The multisymplectic formalism is briefly reviewed, the action principle and a fundamental formula exhibiting the differential of the action functional of the theory is presented and proved. The role of symmetries, moment maps at the boundary and boundary conditions are elucidated.   Section \ref{sec:presymplectic} presents the evolution formulation of the theory near the boundary.  The presymplectic picture of the system will be established and the subsequent constraints analysis is laid out.    Its relation with reduction with respect to the moment map at the boundary is pointed out.   Real scalar fields and the Poisson $\sigma$-model are analyzed to illustrate the theory.    Finally, Section \ref{sec:Yang-Mills}  concentrates on the study of Yang-Mills theories on manifolds with boundary as first-order Hamiltonian field theories in the multisymplectic framework and the Hamiltonian reduced phase space of the theory is described.



\section{The multisymplectic formalism for first order covariant Hamiltonian field theories on manifolds with boundary}\label{sec:general}


\subsection{The setting: the multisymplectic formalism}\label{sec:multisymplectic}

The geometry of Lagrangian and Hamiltonian field theories has been examined in the literature from varying perspectives. For our purposes here we single out for summary the Hamiltonian multisymplectic description of field theories on manifolds without boundary found in \cite{Ca91}. Everything in this section will apply also to manifolds possessing boundaries. In the next section we will consider only  manifolds having boundaries and we will extend the multisymplectic formalism to deal with Hamiltonian field theories over such manifolds.

A manifold $M$ will model the space or spacetime at each point of which the classical field under discussion assumes a value. We will therefore take $M$ to be an oriented $m = 1+d$ dimensional smooth manifold.   In most situations $M$ is either Riemannian or Lorentzian and time-oriented. We will denote the metric on $M$ by $\eta$. In either case we will denote by $\mathrm{vol}_M$ the volume form defined by the metric $\eta$ on $M$. In an arbitrary local chart $x^{\mu}$ this volume form takes the form $\mathrm{vol}_M = \sqrt{|\eta|} \d x^0 \wedge \d x^1 \cdots \wedge \d x^d $.  Notice however that the only structure on $M$ required to provide the kinematical setting of the theory will be a volume form $\mathrm{vol}_M$ and, unless specified otherwise, local coordinates will be chosen such that $\mathrm{vol}_M = \d x^0 \wedge \d x^1 \cdots \wedge \d x^d$.

The fundamental geometrical structure of a given theory will be provided by a fiber bundle over $M$, $\pi \colon E \to M$. Local coordinates adapted to the fibration will be denoted as $(x^\mu, u^a)$, $a= 1, \ldots, r$, where $r$ is the dimension of the standard fiber.   

Let $J^1E$ denote the first jet bundle of the bundle $E$, i.e.,  at each point $(x,u) \in E$, the fiber of $J^1E$ consists of the set of equivalence classes of germs of sections of $\pi\colon E \rightarrow M$. If we let $\pi_1^0$ be the projection map, $\pi_1^0 \colon J^1E \rightarrow E$, then $(J^1E,\pi_1^0,E)$ is an affine bundle over $E$ modelled on the linear bundle $VE \otimes \pi^*(T^*M)$ over $E$. (See \cite{Sa89}, \cite{Ca91} and \cite{Gr15} for details on affine geometry and the construction of the various affine bundles naturally associated to $E \to M$.)

If $(x^{\mu};u^a)$, is a bundle chart for the bundle $\pi \colon E \to M$ then we will denote by $(x^{\mu},u^a;u_{\mu}^a)$ a local chart for $J^1E$.

The affine dual of $J^1E$ is the vector bundle over $E$ whose fiber at $\xi = (x,u)$ is the linear space of affine maps $\mathrm{Aff}(J^1E_\xi, \mathbb{R})$.   The vector bundle $\mathrm{Aff}(J^1E, \mathbb{R})$ possesses a natural subbundle defined by constant functions along the fibers of $J^1E \to E$, that we will denote again, abusing notation, as $\mathbb{R}$.  The quotient bundle $\mathrm{Aff}(J^1E, \mathbb{R})/\mathbb{R}$ will be called the covariant phase space bundle of the theory, or the phase space for short.   Notice that such bundle, denoted in what follows by $P(E)$ is the vector bundle with fiber at $\xi = (x,u) \in E$ given by $(V_uE\otimes T_x^*M)^* \cong T_xM \otimes(V_uE)^*\cong \mathrm{Lin}(V_uE,T_xM)$ and projection $\tau_1^0 \colon P(E) \to E$.

Local coordinates on $P(E)$ can be introduced as follows:
Affine maps on the fibers of $J^1E$ have the form $u_{\mu}^a \mapsto \rho_0 + \rho_a^{\mu}u_{\mu}^a$ where $u_{\mu}^a$ are natural coordinates on the fiber over the point $\xi$ in $E$ with coordinates $(x^{\mu},u^a)$. Thus an affine map on each fiber over $E$ has coordinates $\rho_0, \rho^{\mu}_a$, with $\rho^\mu_a$ denoting linear coordinates on $TM \otimes VE^*$ associated to bundle coordinates $(x^\mu, u^a)$.   Functions constant along the fibers are described by the numbers $\rho_0$, hence elements in the fiber of $P(E)$ have coordinates $\rho_a^{\mu}$.  Thus a bundle chart for the bundle $\tau_1^0\colon P(E) \to E$ is given by $(x^\mu, u^a;  \rho^\mu_a)$.

The choice of a distinguished volume form $\mathrm{vol}_M$ in $M$ allows us to identify the fibers of $P(E)$ with a subspace of $m$-forms on $E$ as follows (\cite{Ca91}):
The map $u_{\mu}^a \mapsto \rho_a^{\mu}u_{\mu}^a$ corresponds to the $m$-form $\, \, \rho_a^\mu \, \d u^a\wedge \mathrm{vol}_{\mu}$
where vol$_{\mu}$ stands for $i_{{\partial}/{\partial x^{\mu}}}$vol$_M.$
Let ${\bigwedge}^m (E)$ denote the bundle of $m$-forms on $E$. Let 
${\bigwedge}_k^m(E)$ be the subbundle of ${\bigwedge}^m (E)$ consisting of those $m$-forms which vanish when $k$ of their arguments are vertical.  So in our local coordinates, elements of ${\bigwedge}_1^m(E)$, i.e., $m$-forms on $E$ that vanish when one of their arguments is vertical, commonly called semi-basic 1-forms, have the form $\rho_a^{\mu} \, \d u^a\wedge \mathrm{vol}_\mu + \rho_0 \mathrm{vol}_M$, and elements of ${\bigwedge}_0^m(E)$, i.e., basic $m$-forms, have the form $\rho_0\mathrm{vol}_M$. These bundles form a short exact sequence:
$$
0\rightarrow\textstyle{\bigwedge}^m_0E\hookrightarrow
	\textstyle{\bigwedge}^m_1E\rightarrow P(E)\rightarrow0 \, .
$$
Hence ${\bigwedge}_1^m E$ is a real line bundle over $P(E)$ and, for each point $\zeta = (x,u,\rho)\in P(E)$, the fiber is the quotient $\bigwedge_1^m (E) _\zeta/ \bigwedge_0^m (E)_\zeta$.
		
  The bundle $\textstyle{\bigwedge}_1^m(E)$ carries a canonical $m$--form  which may be defined by a generalization of the definition of the canonical 1-form on the cotangent bundle of a manifold.  Let $\sigma \colon \textstyle{\bigwedge}_1^m(E) \to E$  be the canonical projection, then the canonical $m$-form $\Theta$ is defined by 
$$
\Theta_\varpi(U_1,U_2,\ldots,U_m) = \varpi(\sigma_*U_1, \ldots, \sigma_*U_m)
$$
where $\varpi\in\bigwedge^m_1(E)$ and $U_i\in T_\varpi(\bigwedge^m_1(E))$.
As described above, given bundle coordinates $(x^\mu,u^a)$ for $E$
we have coordinates $(x^\mu,u^a,\rho, \rho^\mu_a)$
on $\bigwedge^m_1(E)$ adapted to them  
and the point $\varpi\in\bigwedge^m_1(E)$ with coordinates
$(x^\mu,u^a;\rho, \rho^\mu_a )$ is the $m$-covector
$\varpi =  \rho^\mu_a\, \d u^a\wedge \mathrm{vol}_\mu + \rho \, \mathrm{vol}_M$.
With respect to these same coordinates we have the local expression
$$
\Theta =  \rho^\mu_a \, \d u^a \wedge \mathrm{vol}_\mu  + \rho\, \mathrm{vol}_M \, ,
$$
for $\Theta$, where $\rho$ and $\rho^\mu_a$ are now to be interpreted as coordinate
functions.
	
The $(m+1)$-form $\Omega = \d \Theta $ defines a multisymplectic structure on the manifold $\bigwedge_1^m(E)$, i.e.$(\bigwedge^m_1(E),\Omega)$ is a multisymplectic manifold. There is some variation in the literature on the definition of multisymplectic manifold. For us, following \cite{Ca91}, \cite{Go98} and \cite{Ca99}, a multisymplectic manifold is a pair $(X,\Omega)$ where $X$ is a manifold of some dimension $m$ and $\Omega$ is a $d$-form on $X$, $d \geq 2$, and $\Omega$ is closed and nondegenerate. By nondegenerate we mean that if $i_v{\Omega} = 0$ then $v=0$.
	
We will refer to $\bigwedge^m_1E$, by $M(E)$ to emphasize that it is a multisymplectic manifold. We will denote the projection $M(E) \to E$
by $\nu$, while the projection $M(E) \to P(E)$ will be
denoted by $\mu$.  Thus $\nu = \tau^0_1\circ\mu$, with $\tau_1^0 \colon P(E) \to E$
the canonical projection.(See figure 1.)

A Hamiltonian $H$ on $P(E)$ is a section of
$\mu$. Thus in local coordinates 
$$
H(\rho_a^{\mu}\, \d u^a\wedge\mathrm{vol}_{\mu}) = \rho_a^{\mu}\d u^a\wedge \mathrm{vol}_{\mu}-\mathbf{H}(x^{\mu},u^a, \rho_a^{\mu}) \mathrm{vol}_M \, ,
$$ 
where $\mathbf{H}$ is here a real-valued function also called the Hamiltonian function of the theory.
	
We can use the Hamiltonian section $H$ to define an $m$-form on $P(E)$
by pulling back the canonical $m$-form $\Theta$ from $M(E)$.  We call
the form so obtained the Hamiltonian $m$-form associated with $H$ and denote
it by $\Theta_H$. Thus if we write the section defined in local coordinates $(x^\mu, u^a;\rho, \rho_a^\nu )$ as 
\begin{equation}\label{rhoH}
\rho = - \mathbf{H}(x^{\mu}, u^a, \rho_a^\mu ) \, ,
\end{equation}
then
\begin{equation}\label{ThetaH}
	\Theta_H = \rho_a^\mu\, \d u^a \wedge \mathrm{vol}_\mu - \mathbf{H}(x^\mu,u^a, \rho_a^\mu) \, \mathrm{vol}_M \, .
\end{equation}
\,
 In Eqs. \eqref{rhoH} and \eqref{ThetaH}, the minus sign in front of the Hamiltonian is chosen to be in keeping with the traditional conventions in mechanics for the integrand of the action over the manifold. When the form $\Theta_H$ is pulled back to the manifold $M$, as described in Section \ref{sec:sections}, the integrand of the action over $M$ will have a form reminiscent of that of mechanics, with a minus sign in front of the Hamiltonian. See equation \eqref{action_phip}.  In what follows, unless there is risk of confussion, we will use the same notation $H$ both for the section and the real-valued function $\mathbf{H}$ defined by a Hamiltonian.

	
\subsection{The action and the variational principle}\label{sec:variational}


\subsubsection{Sections and fields over manifolds with boundary}\label{sec:sections}
From here on, in addition to being an oriented smooth manifold with either a Riemannian or a Lorentzian metric, $M$ has a boundary $\partial M$. The orientation chosen on $\partial M$ is consistent with the orientation on $M$. Everything in the last section applies. The presence of boundaries, apart from being a natural ingredient in any attempt of constructing a field theory, will enable us to enlarge the use to which the multisymplectic formalism can be applied, starting with the statement and proof of Lemma 2.1.	

The fields $\chi$ of the theory in the Hamiltonian formalism constitute a class of sections of the bundle $\tau_1 : P(E) \to M$.    $P(E)$ is a bundle over $E$ with projection $\tau_1^0$ and it is a bundle over $M$ with projection $\tau_1 = \pi\circ \tau_1^0$.  The sections that will be used to describe the classical fields in the Hamiltonian formalism are those sections $\chi\colon M \to P(E)$, i.e.  $\tau_1\circ \chi  = \mathrm{id}_M$,  such that $\chi = P \circ \Phi$ where $\Phi \colon M \rightarrow E$ is a section of $\pi: E\rightarrow M$, i.e. $\pi \circ \Phi = \mathrm{id}_M$, and  $P \colon E \to P(E)$ is a section of $\tau_1^0 \colon P(E) \rightarrow E$ i.e. $\tau_1^0 \circ P = \mathrm{id}_P$. (See Figure \ref{sections}).  The sections $\Phi$ will be called the configuration fields or just the configurations, and the sections $P$ the momenta fields of the theory.   In other words $u^a = \Phi^a(x)$ and $\rho_a^\mu = P_a^\mu (\Phi(x))$ will provide local expression for the section $\chi = P \circ \Phi$.
 We will denote such a section $\chi$ by $(\Phi, P)$ to stress the iterated bundle structure  of $P(E)$ and we will refer to $\chi$ as a double section.

\begin{figure}[ht]
\centering
\includegraphics[width=8cm]{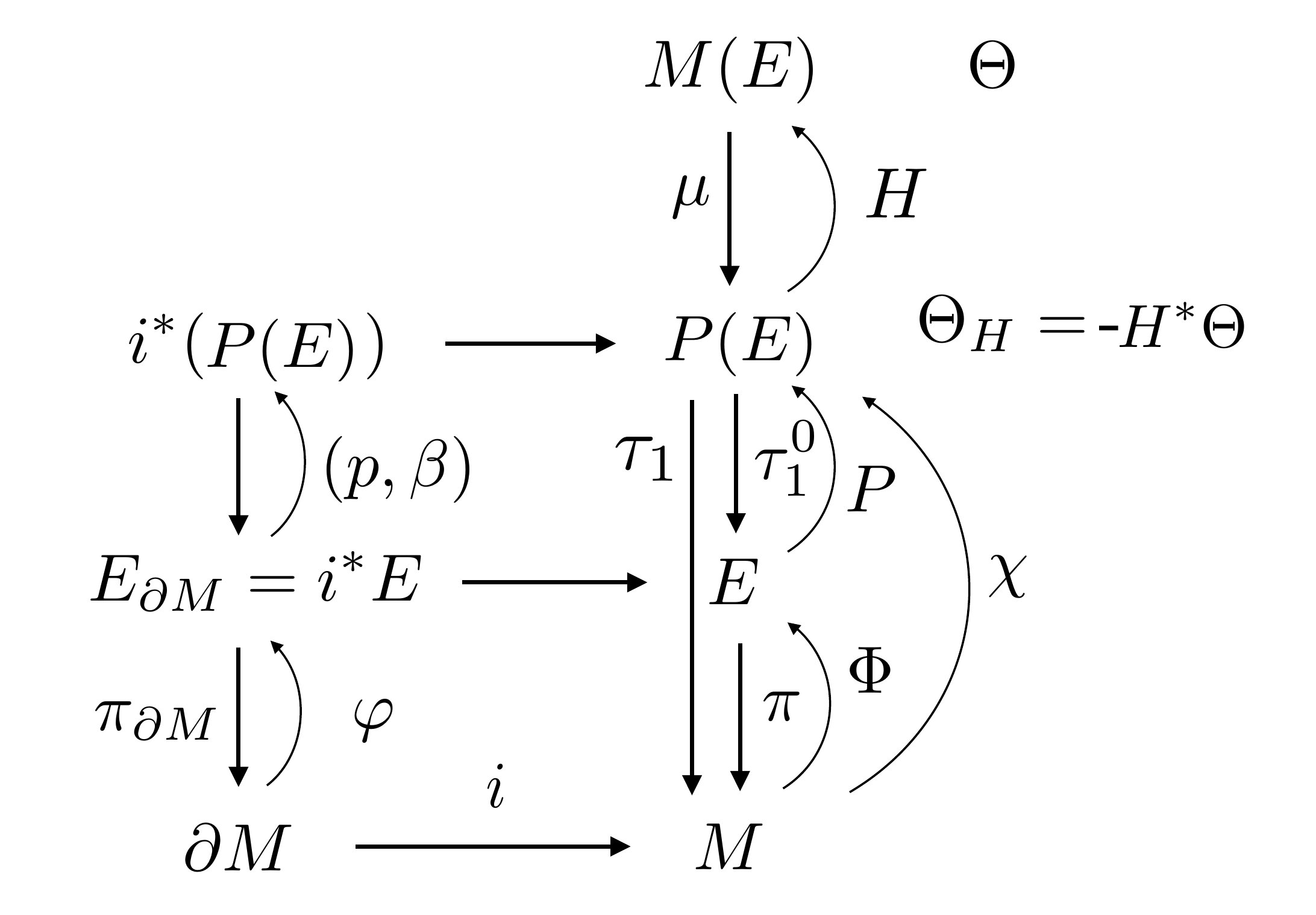}
\caption{Bundles, sections and fields: configurations and momenta}\label{sections}
\end{figure}

We will denote by $\mathcal{F}_M$ the space of sections $\Phi$ of the bundle $\pi \colon E \rightarrow M $, that is $\Phi \in \mathcal{F}_M$, and we will denote by $\mathcal{F}_{P(E)}$ the space of double sections $\chi = (\Phi, P)$.  Thus $\mathcal{F}_{P(E)}$  represents the space of fields of the theory, configurations and momenta, in the first order covariant Hamiltonian formalism. 

The equations of motion of the theory will be defined by means of a variational principle, i.e., they will be related to the critical points of an action functional $S$ on $\mathcal{F}_{P(E)}$. Such action will be given simply by
\begin{equation}\label{action}
 S(\chi ) = \int_M \chi^*\Theta_H \, ,
 \end{equation}
or in a more explicit notation,
\begin{equation}\label{action_phip}
S(\Phi,P) = \int_M \left( P_a^\mu (x) \partial_\mu \Phi^a (x) - H(x,\Phi(x), P(x)) \right) \mathrm{vol}_M ,
\end{equation}
where $P_a^\mu (x)$ is shorthand for $P_a^\mu (\Phi (x))$.  Of course, as is usual in the derivations of equations of motion via variational principles, we assume that the integral in Eq. \eqref{action} is well defined.  It is also assumed that the `differential' symbol in equation $(2.5)$ below, defined in terms of directional derivatives, is well defined and that the same is true for any other similar integrals that will appear in this work.

\begin{lemma}\label{dS }
 With the above notations we obtain,
\begin{equation}\label{dSfirst}
\mathrm{d} S (\chi) (U) = \int_M \chi^* \left(i_{\widetilde U} \d \Theta _H \right) + \int_{\partial M} (\chi\circ i)^* \left(i_{\widetilde U} \Theta_H \right) \, , 
\end{equation}
where $U$ is a vector field on $P(E)$ along the section $\chi$, $\widetilde{U}$ is any extension of $U$ to a tubular neighborhood of the range of $\chi$, and $i\colon \partial M \to M$ is the canonical embedding.
\end{lemma}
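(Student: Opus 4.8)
The plan is to realize $\mathrm{d}S(\chi)(U)$ as an honest directional derivative and then transfer the $t$-derivative onto the integrand as a Lie derivative. Concretely, I would fix an extension $\widetilde{U}$ of $U$ to a tubular neighborhood of the range of $\chi$, let $\phi_t$ denote its (local) flow, and set $\chi_t := \phi_t \circ \chi$. Since $\frac{\mathrm{d}}{\mathrm{d}t}\big|_{t=0}\phi_t(\chi(x)) = \widetilde{U}(\chi(x)) = U(x)$, the curve $t \mapsto \chi_t$ has velocity $U$ at $t=0$, so by definition of the differential, and using $(\phi_t\circ\chi)^* = \chi^*\circ\phi_t^*$,
\[
\mathrm{d}S(\chi)(U) = \frac{\mathrm{d}}{\mathrm{d}t}\Big|_{t=0} S(\chi_t) = \frac{\mathrm{d}}{\mathrm{d}t}\Big|_{t=0} \int_M \chi^*\phi_t^*\Theta_H .
\]

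Next I would differentiate under the integral sign, legitimate under the regularity already granted in the text, and use that $\chi^*$ is $t$-independent, obtaining $\int_M \chi^*\big(\frac{\mathrm{d}}{\mathrm{d}t}\big|_{t=0}\phi_t^*\Theta_H\big) = \int_M \chi^*\big(\mathcal{L}_{\widetilde{U}}\Theta_H\big)$, the last equality being the very definition of the Lie derivative. Cartan's magic formula $\mathcal{L}_{\widetilde{U}}\Theta_H = i_{\widetilde{U}}\,\mathrm{d}\Theta_H + \mathrm{d}\,i_{\widetilde{U}}\Theta_H$ then splits the integrand into two pieces. For the exact piece I would use that pullback commutes with the exterior derivative, $\chi^*\mathrm{d}(i_{\widetilde{U}}\Theta_H) = \mathrm{d}\big(\chi^*(i_{\widetilde{U}}\Theta_H)\big)$, and apply Stokes' theorem on the oriented manifold with boundary $M$ to convert it into $\int_{\partial M} i^*\chi^*(i_{\widetilde{U}}\Theta_H) = \int_{\partial M}(\chi\circ i)^*(i_{\widetilde{U}}\Theta_H)$. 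Combining the two pieces yields exactly \eqref{dSfirst}; the degrees match, since $i_{\widetilde{U}}\mathrm{d}\Theta_H$ is an $m$-form integrated over the $m$-manifold $M$ while $i_{\widetilde{U}}\Theta_H$ is an $(m-1)$-form integrated over $\partial M$.

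The point that genuinely needs care is the claimed independence of the right-hand side from the arbitrary extension $\widetilde{U}$, since $\widetilde{U}$ is an auxiliary choice while the left-hand side is not. I would argue this at the level of the integrands rather than through the flow: at a point $\chi(x)$ in the range of $\chi$, the value of $\chi^*(i_{\widetilde{U}}\mathrm{d}\Theta_H)$ on tangent vectors $v_1,\dots,v_m \in T_xM$ is $(\mathrm{d}\Theta_H)_{\chi(x)}\big(\widetilde{U}(\chi(x)), \chi_*v_1,\dots,\chi_*v_m\big)$, which sees only $\widetilde{U}(\chi(x)) = U(x)$ and never the derivatives of $\widetilde{U}$; the same holds verbatim for the boundary integrand. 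Hence both terms depend solely on $U$ along $\chi$, so the formula is well defined and independent of the extension. The only remaining thing to be mindful of is interpretive rather than technical: for the variation $\chi_t$ to stay within the space of sections one takes $U$ vertical for $\tau_1$, so that $\tau_1\circ\phi_t\circ\chi = \mathrm{id}_M$ to first order; but since the derivation uses only the first-order datum $U$ and the extended functional $\chi\mapsto\int_M\chi^*\Theta_H$ makes sense for arbitrary maps $M\to P(E)$, the displayed identity holds as stated for any $U$, and no deeper obstacle arises beyond this bookkeeping.
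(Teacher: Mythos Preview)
Your argument is correct and follows essentially the same route as the paper: build a variation from the flow of an extension $\widetilde{U}$, differentiate under the integral to obtain the Lie derivative, apply Cartan's formula, and use Stokes' theorem on the exact piece. The paper's proof is terser and does not spell out the independence of the right-hand side from the choice of extension, which you supply; that extra paragraph is a welcome addition rather than a departure.
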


\begin{proof}  If $\chi$ is a section of $P(E)$, then we denote by $T_\chi \mathcal{F}_{P(E)}$ the tangent bundle to the space of fields at $\chi$. Tangent vectors $U$ to $\mathcal{F}_{P(E)}$ at $\chi$, i.e., $U \in T_\chi \mathcal{F}_{P(E)}$, are just vector fields $U$ on $P(E)$ along the map $\chi$ or, in other words, maps $U \colon M \to TP(E)$ such that $\tau_{P(E)}\circ U = \chi$, where $\tau_{P(E)} \colon TP(E) \to P(E)$ denotes the canonical tangent bundle projection.

Thus if $U\in T_\chi \mathcal{F}_{P(E)}$, with $U(x) \in T_{\chi (x)}P(E)$, then 
consider a curve $\chi_\lambda (x) = \chi (\lambda, x) \colon (-\epsilon, \epsilon) \times M \to P(E)$, such that $\chi (0, x) = \chi (x)$, and 
$$
U(\chi(x)) =\left. \frac{\partial}{\partial \lambda}\right|_{\lambda = 0} \chi(\lambda, x) \, .
$$
We can extend the vector field $U$ to a tubular neighborhood $T_\chi$ of the image of $\chi$ in $P(E)$ and we will denote it by $\tilde{U}$.  Consider the local flow $\varphi_\lambda$ of $\tilde{U}$,
$$
\frac{\d}{\d\lambda} \varphi_\lambda = \tilde{U} \circ \varphi_\lambda \, ,
$$
or in other words, let us denote the integral curves of $\tilde{U}$ by $\varphi_\lambda (\xi)$,  $\xi \in T_\chi \subset P(E)$.
Then if $\xi = \chi (x)$ we have, $\varphi_\lambda(\xi) = \varphi_\lambda (\chi (x))= \chi (\lambda, x) = (\chi\circ \chi_\lambda)(x)$, i.e., $\varphi_\lambda \circ \chi = \chi_\lambda$. We thus obtain,
\begin{eqnarray}
\d S(\chi) (U)  &=& \left.\frac{\d}{\d\lambda}\right|_{\lambda = 0} S(\chi_\lambda)  =  \left. \frac{\d}{\d\lambda}\right|_{\lambda = 0} \int_M \chi_\lambda^* \Theta_H = \nonumber \\
&=&  \int_M \left. \chi^*\frac{\partial}{\partial\lambda}\right|_{\lambda = 0} \varphi_\lambda^*\Theta_H = \int_M \chi^* (\mathcal{L}_{\tilde{U}} \Theta_H) = \nonumber  \\ 
&=& \int_M \chi^* d(i_{\tilde{U}}\Theta_H) + \int_M \chi^* i_{\tilde{U}} \d \Theta _H \, . \label{3rdline}
\end{eqnarray}

Applying Stokes' theorem to the first term in eq. \eqref{3rdline} then yields eq. \eqref{dSfirst}.
\end{proof}


\subsubsection{The cotangent bundle of fields at the boundary}\label{sec:cotangent_boundary}

The boundary term contribution to $\mathrm{d} S$ in eq. \eqref{dSfirst}, that is, $\int_{\partial M} (\chi\circ i)^* \left(i_{\tilde U} \Theta_H\right)$, suggests that there is a family of fields at the boundary that play a special role.  Actually, we notice that the field $\tilde{U}$ being vertical with respect to the projection $\tau_1\colon P(E) \to M$ has the local form $\tilde{U} = A^a \,  \partial/\partial u^a + B_\mu^a  \, \partial/\partial \rho_\mu^a$. Hence we obtain for the boundary term,
\begin{equation}\label{boundaryfirst}
\int_{\partial M} (\chi\circ i)^* \left(i_{\widetilde U} \Theta_H \right) = \int_{\partial M} (\chi\circ i)^*  \rho_a^\mu \, A^a  \, \mathrm{vol}_\mu  = \int_{\partial M} i^*(P_a^\mu\, A^a \, \mathrm{vol}_\mu)
\end{equation}
for $\chi = (\Phi, P)$. 

We will assume now and in what follows, that there exists a collar around the boundary $U_\epsilon \cong (-\epsilon, 0] \times \partial M$.  We choose local coordinates $(x^0,x^k)$, on the collar such that $x^0 = t \in (-\epsilon, 0]$ , and $x^k$, $k = 1, \ldots, d$, define local coordinates for $\partial M$. In these coordinates $\mathrm{vol}_{U_\epsilon} = \d t \wedge \mathrm{vol}_{\partial M}$ with $\mathrm{vol}_{\partial M}$ a volume form on $\partial M$. The r.h.s. of eq. \eqref{boundaryfirst} becomes,
\begin{equation}\label{boundary_final}
\int_{\partial M} i^*(P_a^\mu\, A^a \, \mathrm{vol}_\mu)  = \int_{\partial M} p_a \, A^a \, \mathrm{vol}_{\partial M} \, ,
\end{equation}
where $p_a = P_a^0\circ i$ is the restriction to $\partial M$ of the zeroth component of the momenta field $P_a^\mu$ in a local coordinate chart of the previous form.

 Consider the space of fields at the boundary obtained by restricting the zeroth component of sections $\chi$ to $\partial M$, that is the fields of the form (see Figure \ref{sections})
$$
\varphi^a = \Phi^a \circ i \, , \qquad p_a = P_a^0 \circ i \, .
$$
Notice that the fields $\varphi^a$ are nothing but sections of the bundle $i^*E$, the pull-back along $i$ of the bundle $E$, while the space of fields $p_a$ can be thought of as 1-semibasic $d$-forms on $i^*E \to \partial M$.   This statement is made precise in the following:

\begin{lemma}\label{decomposition}   Given a collar around $\partial M$, $U_\epsilon \cong (-\epsilon, 0] \times \partial M$, and a volume form $\mathrm{vol}_{\partial M}$ on $\partial M$ such that $\mathrm{vol}_{U_{\epsilon}} = \d t \wedge \mathrm{vol}_{\partial M}$ with $t$ the normal coordinate in $U_\epsilon$, then the pull-back bundle $i^*(P(E))$ is a bundle over the pull-back bundle $i^*E$ and decomposes 
naturally as $i^*P(E) \cong \bigwedge_1^m(i^*E) \oplus \bigwedge_1^{m-1}(i^*E)$.   If $i^*\zeta \in i^*P(E)$, we will denote by $p$ and $\beta$ the components of the previous decomposition, that is, $i^*\zeta = p + \beta$.
\end{lemma}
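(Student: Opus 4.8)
The plan is to build the decomposition fiberwise, starting from the splitting of $TM$ along $\partial M$ furnished by the collar, and then to transport each summand to a bundle of $1$-semi-basic forms on $i^*E$ through the very same volume-form identification used in Section~\ref{sec:multisymplectic}. First I would observe that $i^*P(E)$ is a bundle over $i^*E$ by mere functoriality of the pull-back: since $\tau_1^0\colon P(E)\to E$ and $\pi\colon E\to M$ are the two projections of the iterated bundle and pulling back along $i\colon\partial M\to M$ commutes with them, the map $i^*\tau_1^0\colon i^*P(E)\to i^*E$ is a bundle projection whose fiber over $(y,e)\in i^*E$ is $P(E)_e\cong\mathrm{Lin}(V_eE,T_{i(y)}M)$.

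Next I would use the collar. The normal coordinate $t=x^0$ determines along $\partial M$ a canonical splitting $T_{i(y)}M=\langle\partial/\partial t\rangle\oplus i_*T_y\partial M$ into the transversal line and the tangent space to the boundary. Tensoring with $(V_eE)^*$ and using $\mathrm{Lin}(V_eE,W)\cong(V_eE)^*\otimes W$ yields the fiberwise direct sum
\[
\mathrm{Lin}(V_eE,T_{i(y)}M)\cong\bigl((V_eE)^*\otimes\langle\partial/\partial t\rangle\bigr)\oplus\bigl((V_eE)^*\otimes T_y\partial M\bigr),
\]
which in adapted coordinates is precisely the separation of the fiber coordinate $\rho^\mu_a$ into its transversal part $\rho^0_a$ and its tangential part $\rho^k_a$, $k=1,\dots,d$.

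It then remains to identify the two summands with bundles of $1$-semi-basic forms on $i^*E$, letting $\mathrm{vol}_{\partial M}$ play on $\partial M$ the role that $\mathrm{vol}_M$ played on $M$ in Section~\ref{sec:multisymplectic}. For the transversal summand I would send $\rho^0_a\,\d u^a\otimes\partial/\partial t$ to $p=\rho^0_a\,\d u^a\wedge\mathrm{vol}_{\partial M}$, an $m$-form on $i^*E$ with a single vertical leg, hence an element of $\bigwedge_1^m(i^*E)$; as both fibers have dimension $r$ this is a fiberwise isomorphism. For the tangential summand, using $\mathrm{vol}_k=i_{\partial/\partial x^k}\mathrm{vol}_M=-\,\d t\wedge i_{\partial/\partial x^k}\mathrm{vol}_{\partial M}$, I would send $\rho^k_a\,\d u^a\otimes\partial/\partial x^k$ to $\beta=\rho^k_a\,\d u^a\wedge i_{\partial/\partial x^k}\mathrm{vol}_{\partial M}$, a $1$-semi-basic $(m-1)$-form, i.e. an element of $\bigwedge_1^{m-1}(i^*E)$. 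Writing $i^*\zeta=p+\beta$ then gives the asserted natural decomposition.

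The main obstacle is to make precise the word \emph{natural}: the splitting of $TM|_{\partial M}$, and hence the whole decomposition, is canonical only once the collar (equivalently the transversal field $\partial/\partial t$) and the factorization $\mathrm{vol}_M=\d t\wedge\mathrm{vol}_{\partial M}$ are fixed, which are exactly the standing hypotheses; I would then verify independence of the residual freedom in the tangential coordinates $x^k$. A related subtlety, which I would state explicitly, is that the assignment $\rho^k_a\mapsto\beta$ lands in the genuinely one-vertical-leg part of $\bigwedge_1^{m-1}(i^*E)$ and avoids the basic forms $\bigwedge_0^{m-1}(i^*E)$, so $\beta$ is to be read as the semi-basic representative, and together with the degree-$m$ piece $p$ it accounts for all of $i^*P(E)$.
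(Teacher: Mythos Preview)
Your proposal is correct and follows essentially the same approach as the paper's proof: both identify the fiber of $i^*P(E)$ over $(x,u)$ with $T_xM\otimes(V_uE)^*$, use the collar to split off the normal direction, and then pass to semi-basic forms via contraction with the volume form, arriving at the same representatives $p=\rho^0_a\,\d u^a\wedge\mathrm{vol}_{\partial M}$ and $\beta=\rho^k_a\,\d u^a\wedge i_{\partial/\partial x^k}\mathrm{vol}_{\partial M}$. The only cosmetic difference is the order of operations---the paper contracts with $\mathrm{vol}_M$ first and then splits using $\mathrm{vol}_M=\d t\wedge\mathrm{vol}_{\partial M}$, whereas you split $TM|_{\partial M}$ first and contract each summand separately---and your added remarks on naturality and on avoiding the basic subbundle $\bigwedge_0^{m-1}(i^*E)$ are welcome clarifications that the paper leaves implicit.
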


\begin{proof}  By definition of pull-back, the fiber over a point $x \in \partial M$ of the bundle $i^*E$, consists of all vectors in $E_x$.  The pull-back bundle $i^*P(E)$ is a bundle over $i^*E$, the fiber over $(x,u) \in i^*E$ is $T_xM\otimes VE_u^*$.   Using the volume form $\mathrm{vol}_M$, we identify this fiber with $\bigwedge^{m-1}(T_xM)\otimes VE_u^*$ by contracting elements $\varpi = v \otimes \alpha \in T_xM\otimes VE_u^*$ with $\mathrm{vol}_M(x)$.  The collar neighborhood $U_\epsilon$ introduces a normal coordinate $t\in (-\epsilon, 0]$ such that $\mathrm{vol}_{U_{\epsilon}}
 = \d t \wedge \mathrm{vol}_{\partial M}$.  Notice that such decomposition depends on the choice of the collar. We obtain $\varpi = \rho_a^0 \d u^a \wedge \mathrm{vol}_{\partial M} + \rho_a^k \d u^a \wedge \d t \wedge i_{\partial /\partial x^k}\mathrm{vol}_{\partial M}$.  Finally, the assignment $\varpi \mapsto (\rho_a^0 \d u^a \wedge \mathrm{vol}_{\partial M} , \rho_a^k \d u^a \wedge \wedge i_{\partial /\partial x^k}\mathrm{vol}_{\partial M} )$ provides the decomposition we are after and $p_a = \rho_a^0$, $\beta_a^k = \rho_a^k$.
\end{proof}

If we denote by $\mathcal{F}_{\partial M}$ the space of configurations of the theory, $\varphi^a$, i.e., $\mathcal{F}_{\partial M} = \Gamma(i^*E)$, then the space of momenta of the theory $p_a$ can be identified with the space of sections of the bundle $\bigwedge_1^m(i^*E) \to i^*E$, according to Lemma \ref{decomposition}.   Therefore the space of fields $(\varphi^a, p_a)$ can be identified with the contangent bundle $T^*\mathcal{F}_{\partial M}$ over $\mathcal{F}_{\partial M}$ in a natural way, i.e., each field $p_a$ can be considered as the covector at $\varphi^a$ that maps the tangent vector $\delta\varphi$ to $\mathcal{F}_{\partial M}$ at $\varphi$ into the number $\langle p, \delta \varphi \rangle$ given by,
\begin{equation}\label{pairing_cotangent}
\langle p, \delta \varphi \rangle = \int_{\partial M} p_a(x)\delta\varphi^a (x) \, \mathrm{vol}_{\partial M} \, .
\end{equation}
	
Notice that the tangent vector $\delta \varphi$ at $\varphi$ is a vertical vector field on $i^*E$ along $\varphi$, and the section $p$ is a 1-semibasic $m$-form on $i^*E$ (Lemma \ref{decomposition}). Hence the contraction of $p$ with $\delta\varphi$ is an $(m-1)$-form along $\varphi$, and its pull-back $\varphi^*\langle p, \delta\varphi \rangle$ along $\varphi$ is an $(m-1)$-form on $\partial M$ whose integral defines the pairing above, Eq. \eqref{pairing_cotangent}. 	
	
 Viewing the cotangent bundle $T^*\mathcal{F}_{\partial M}$ as double sections $(\varphi, p)$ of the bundle $\bigwedge_1^m(i^*E) \to i^*E \to \partial M$ described by Lemma \ref{decomposition}, the canonical 1-form $\alpha$ on $T^*\mathcal{F}_{\partial M}$ can be expressed as,
\begin{equation}\label{alpha}
\alpha_{(\varphi, p)} (U) = \int_{\partial M} p_a (x) \delta\varphi^a (x) \, \mathrm{vol}_{\partial M}
\end{equation}
where $U$ is a tangent vector to $T^*\mathcal{F}_{\partial M}$ at $(\varphi, p)$, that is, a vector field on the space of 1-semibasic forms on $i^*E$ along the section $(\varphi^a, p_a)$, and therefore of the form $U = \delta\varphi^a \, \partial /\partial u^a + \delta p_a \, \partial /\partial \rho_a$.
	
Finally, notice that the pull-back to the boundary map $i^*$, defines a natural map from the space of fields in the bulk, $\mathcal{F}_{P(E)}$, into the phase space of fields at the boundary $T^*\mathcal{F}_{\partial M}$.  Such map will be denoted by $\Pi$ in what follows, that is, 
$$
\Pi \colon \mathcal{F}_{P(E)}\to T^*\mathcal{F}_{\partial M} \, , \qquad \Pi(\Phi, P) = (\varphi, p) , \, \quad  \varphi = \Phi\circ i, \, p_a = P_a^0\circ i \, .
$$

With the notations above, by comparing the expression for the boundary term given by eq. \eqref{boundary_final}, and the expression for the canonical 1-form $\alpha$, eq. \eqref{alpha}, we obtain,
$$
\int_{\partial M} (\chi\circ i)^* \left(i_{\tilde U} \Theta_H\right) = (\Pi^*\alpha)_\chi (U) \, .
$$
In words, the boundary term in eq. \eqref{dSfirst} is just the pull-back of the canonical 1-form $\alpha$ at the boundary along the projection map $\Pi$.

In what follows it will be customary to use the variational derivative notation when dealing with spaces of fields. For instance, if $F(\varphi,p)$ is a differentiable function defined on $T^*\mathcal{F}_{\partial M}$ we will denote by $\delta F / \delta \varphi^a$ and $\delta F / \delta p_a$ functions (provided that they exist) such that
\begin{equation}\label{dF}
\d F_{(\varphi,p)}(\delta \varphi^a, \delta p_a) = \int_{\partial M} \left( \frac{\delta F}{\delta \varphi^a} \delta \varphi^a + \frac{\delta F}{\delta p_a} \delta p_a \right) \mathrm{vol}_{\partial M} \, ,
\end{equation}
with $U = (\delta \varphi^a, \delta p_a)$ a tangent vector at $(\varphi,p)$. 
We also use an extended Einstein's summation convention such that integral signs will be omitted when dealing with variational differentials. For instance,
\begin{equation}\delta F =   \frac{\delta F}{\delta \varphi^a} \delta \varphi^a + \frac{\delta F}{\delta p_a} \delta p_a \, ,
\end{equation} 
will be the notation that will replace $\d F$ in Eq. \eqref{dF}.   Also in this vein we will write,
$$
\alpha = p_a \, \delta \varphi^a \, ,
$$
and the canonical symplectic structure $\omega_{\partial M} = -\d \alpha$ on $T^*\mathcal{F}_{\partial M}$ will be written as,
$$
\omega_{\partial M} = \delta \varphi^a \wedge \delta p_a \, ,
$$
by which we mean
$$
\omega_{\partial M} ((\delta_1\varphi^a, \delta_1p_a), (\delta_2\varphi^a, \delta_2p_a)) = \int_{\partial M} \left( 
\delta_1\varphi^a(x) \delta_2 p_a(x) - \delta_2\varphi^a (x) \delta_1p_a(x) \right) \mathrm{vol}_{\partial M} \, ,
$$
where $(\delta_1\varphi^a, \delta_1p_a), (\delta_2\varphi^a, \delta_2p_a)$ are two tangent vectors at $(\varphi, p)$.


\subsubsection{Euler-Lagrange's equations and Hamilton's equations}
We now examine the contribution from the first term in $\d S$, eq. \eqref{dSfirst}.
Notice that such a term can be 
thought of as a 1-form on the space of fields on the bulk, $\mathcal{F}_{P(E)}$.  We will call it the Euler-Lagrange 1-form and denote it by $\mathrm{EL}$, thus with the notation of Lemma \ref{dS },
$$
\mathrm{EL}_\chi (U) = \int_M \chi^* \left(i_{\tilde U} \d \Theta _H \right)  \, .
$$
A double section $\chi = (\Phi, P)$ of $P(E) \to E \to M$ will be said to satisfy the Euler-Lagrange equations determined by the first-order Hamiltonian field theory defined by $H$, if $\mathrm{EL}_\chi = 0$, that is, if $\chi$ is a
zero of the Euler-Lagrange 1-form $\mathrm{EL}$ on $\mathcal{F}_{P(E)}$.   Notice that this is equivalent to 
\begin{equation}\label{formEL}
	\chi^*(i_{\tilde{U}} \\d \Theta _H)=0 \, ,
\end{equation}
for all vector fields $\tilde{U}$ on a tubular neighborhood of the image of $\chi$ in $P(E)$. 
The set of all such solutions of Euler-Lagrange equations will be denoted by $\mathcal{EL}_M$ or just $\mathcal{EL}$ for short.\
 
 In local coordinates $x^\mu$ such that the volume element takes the form $\mathrm{vol_M}= dx^0 \wedge\cdots \wedge dx^d$, and for natural local coordinates $(x^\mu,u^a,\rho^\mu_a)$ on $P(E)$, using eqs. \eqref{rhoH}, \eqref{ThetaH}, we have,
\begin{eqnarray*}
	i_{\partial/\partial \rho^\mu_a}\d \Theta _H &=& -\frac{\partial
	 H}{\partial \rho^\mu_a} d^mx + \d u^a\wedge \d^{m-1}x_\mu  \\
	i_{\partial/\partial u^a} \d \Theta _H &=& -\frac{\partial
	 H}{\partial u^a} d^mx - d\rho^\mu_a\wedge \d^{m-1}x_\mu.
\end{eqnarray*}
Applying Eq. \eqref{formEL} to these last two equations we obtain the Hamilton equations for the field in the bulk:

\begin{equation}\label{hamilton_equations}
	\frac{\partial u^a}{\partial x^\mu} = \frac{\partial H}{\partial \rho^\mu_a}\, ; \qquad
	\frac{\partial \rho^\mu_a}{\partial x^\mu} = -\frac{\partial H}{\partial u^a} \, ,
\end{equation}
where a summation on $\mu$ is understood in the last equation.   Note that had we not changed to normal coordinates on $M$, the volume form would not have the above simple form and therefore there would be related extra terms in the previous expressions and in Eqs. \eqref{hamilton_equations}.

These Hamilton equations are often described as being covariant. This term
must be treated with caution in this context. Clearly, by writing the equations
in the invariant form $\chi^*(i_{\tilde{U}}\d \Theta _H)=0$ we have shown that they are in
a sense covariant. However, it is important to remember that the function $H$
is, in general, only locally defined; in other words, there is in general no true
`Hamiltonian function', and the local representative $H$ transforms in a
non-trivial way under coordinate transformations. When $M(E)$ is a trivial
bundle over $P(E)$, so that there is a predetermined global section, then
the Hamiltonian section may be represented by a global function and no problem
arises. This occurs for instance when $E$ is trivial over $M$.
In general, however, there is no preferred section of $M(E)$ over
$P(E)$ to relate the Hamiltonian section to, and in order to write the
Hamilton equations in manifestly covariant form one must introduce a
connection. (See \cite{Ca91} for a more detailed discussion and \cite{Gr12} for a general treatment of these issues.)


\subsection{The fundamental formula}\label{sec:fundamental}

Thus we have obtained the formula that relates the differential of the action with a 1-form on a space of fields on the bulk manifold and a 1-form on a space of fields at the boundary.
\begin{equation}\label{fundamental}
\mathrm{d} S_\chi = \mathrm{EL}_\chi +  \Pi^* \alpha_\chi \, , \qquad \chi \in \mathcal{F}_{P(E)} \, . 
\end{equation}
In the previous equation $\mathrm{EL}_\chi$ denotes the Euler-Lagrange 1-form on the space of fields $\chi = (\Phi, P)$ with local expression (using variational derivatives):
\begin{equation}\label{ELform}
\mathrm{EL}_\chi = \left(  \frac{\partial \Phi^a}{\partial x^\mu}  - \frac{\partial H}{\partial P^\mu_a}  \right) \delta P_a^\mu  - \left( 	\frac{\partial P^\mu_a}{\partial x^\mu} + \frac{\partial H}{\partial \Phi^a}  \right) \delta \Phi^a \, ,
\end{equation}
or, more explicitly:
$$
\mathrm{EL}_\chi (\delta \Phi, \delta P) = \int_M \left[ \left(  \frac{\partial \Phi^a}{\partial x^\mu}  - \frac{\partial H}{\partial P^\mu_a}  \right) \delta P_a^\mu  - \left( \frac{\partial P^\mu_a}{\partial x^\mu} + \frac{\partial H}{\partial \Phi^a}  \right) \delta \Phi^a \right] \, \mathrm{vol}_M \, .
$$

In what follows we will denote by $(P(E), \Theta_H)$ the covariant Hamiltonian field theory with bundle structure $\pi \colon E \to M$ defined over the $m$-dimensional manifold with boundary $M$, Hamiltonian function $H$ and canonical $m$-form $\Theta_H$.    

We will say that the action $S$ is regular if the set of solutions of Euler-Lagrange equations $\mathcal{EL}_M$ is a submanifold of $\mathcal{F}_{P(E)}$. 
Thus we will also assume when needed that the action $S$ is regular (even though this must be proved case by case) and that the projection $\Pi(\mathcal{EL})$ to the space of fields at the boundary $T^*\mathcal{F}_{\partial M}$ is a smooth manifold too.

 This has the immediate implication that the projection of $\mathcal{EL}$ to the boundary $\partial M$ is an isotropic submanifold:

\begin{proposition}\label{isotropicEL}  Let $(P(E), \Theta_H)$ be a first order Hamiltonian field theory on the manifold $M$ with boundary, with regular action $S$ and such that $\Pi(\mathcal{EL})\subset T^*\mathcal{F}_{\partial M}$ is a smooth submanifold.    Then $\Pi (\mathcal{EL}) \subset T^*\mathcal{F}_{\partial M}$ is an isotropic submanifold.
\end{proposition}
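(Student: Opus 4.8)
The plan is to exploit the fundamental formula \eqref{fundamental}, $\mathrm{d}S_\chi = \mathrm{EL}_\chi + \Pi^*\alpha_\chi$, together with the fact that $\mathrm{d}S$ is an exact, hence closed, $1$-form on $\mathcal{F}_{P(E)}$. Recall that $\Pi(\mathcal{EL})$ being isotropic means precisely that the canonical symplectic form $\omega_{\partial M} = -\mathrm{d}\alpha$ restricts to zero on it; so writing $k \colon \Pi(\mathcal{EL}) \hookrightarrow T^*\mathcal{F}_{\partial M}$ for the inclusion, the goal is to establish $k^*\omega_{\partial M} = 0$.

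First I would let $j \colon \mathcal{EL}\hookrightarrow \mathcal{F}_{P(E)}$ denote the inclusion of the solution set, which is a submanifold by the regularity hypothesis. By the very definition of $\mathcal{EL}$, the Euler--Lagrange $1$-form vanishes at every $\chi\in\mathcal{EL}$, i.e. $\mathrm{EL}_\chi = 0$ as a covector; consequently its pull-back to the solution set is zero, $j^*\mathrm{EL} = 0$. Pulling back the fundamental formula along $j$ then yields $j^*\mathrm{d}S = j^*\Pi^*\alpha$.

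Next I would take the exterior derivative of this identity. Since $\mathrm{d}S$ is closed, $\mathrm{d}(j^*\mathrm{d}S) = j^*\mathrm{d}(\mathrm{d}S) = 0$, and therefore $\mathrm{d}(j^*\Pi^*\alpha) = j^*\Pi^*\mathrm{d}\alpha = -\,j^*\Pi^*\omega_{\partial M} = 0$. Thus the boundary symplectic form pulls back to zero on the bulk solution manifold $\mathcal{EL}$. Writing $\bar\Pi \colon \mathcal{EL}\to\Pi(\mathcal{EL})$ for the surjection induced by $\Pi$, so that $\Pi\circ j = k\circ\bar\Pi$, the vanishing just obtained reads $\bar\Pi^*(k^*\omega_{\partial M}) = 0$. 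If $\bar\Pi$ is a surjective submersion---equivalently, if its tangent map is pointwise surjective---then $\bar\Pi^*$ is injective on forms and we conclude $k^*\omega_{\partial M} = 0$, which is exactly the asserted isotropy.

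The main obstacle is this last descent step. Everything up to $\bar\Pi^*(k^*\omega_{\partial M}) = 0$ is formal and uses only the closedness of $\mathrm{d}S$ and the definition of $\mathcal{EL}$; the real content lies in passing from the vanishing of $\Pi^*\omega_{\partial M}$ on $\mathcal{EL}$ to the vanishing of $\omega_{\partial M}$ on the image $\Pi(\mathcal{EL})$. This requires that $\bar\Pi$ be a submersion onto its image, a genuinely analytic statement in this infinite-dimensional setting: one must know that every tangent vector to $\Pi(\mathcal{EL})$ lifts to a variation through solutions of the Euler--Lagrange equations. This is underwritten by the standing assumption that both $\mathcal{EL}$ and $\Pi(\mathcal{EL})$ are smooth manifolds with $\Pi$ restricting to a smooth surjection between them, but it is precisely the point at which well-posedness of the boundary value problem, rather than pure formalism, must be invoked.
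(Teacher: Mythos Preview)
Your argument is correct and follows essentially the same route as the paper's own proof: restrict the fundamental formula to $\mathcal{EL}$, use $\mathrm{d}^2S=0$ to kill $\Pi^*\mathrm{d}\alpha$ there, and then descend via the submersion property of $\Pi$. You are in fact more careful than the paper in distinguishing the ambient submersion $\Pi$ from the induced map $\bar\Pi\colon\mathcal{EL}\to\Pi(\mathcal{EL})$ and in flagging that the descent step is where the smoothness hypotheses on $\mathcal{EL}$ and $\Pi(\mathcal{EL})$ do real work; the paper simply asserts ``$\Pi$ being a submersion then implies that $\mathrm{d}\alpha=0$ along $\Pi(\mathcal{EL})$''.
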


\begin{proof}   Along the submanifold $\mathcal{EL} \subset T^*\mathcal{F}_{\partial M}$ we have,
$$
\d S \mid_{\mathcal{EL}} = \Pi^*\alpha\mid_{\mathcal{EL}} .
$$
Therefore $\mathrm{d} (\Pi^*\alpha) = \mathrm{d}^2 S = 0$ along $\mathcal{EL}$, and $\mathrm{d} (\Pi^*\alpha) = \Pi^* \mathrm{d} \alpha$ along $\mathcal{EL}.$
But $\Pi$ being a submersion then implies that $\mathrm{d}\alpha = 0$ along $\Pi(\mathcal{EL})$. 
\end{proof}

In many cases $\Pi(\mathcal{EL})$ is not only isotropic but Lagrangian. We will come back to the analysis of this in later sections.


\subsection{Symmetries and the algebra of currents}

Without attempting a comprehensive description of the theory of symmetry for covariant Hamiltonian
field theories, we will describe some basic elements needed in what follows (see details in \cite{Ca91}). 
Recall from Sect. \ref{sec:multisymplectic}, $(M(E),\Omega)$ is a multisymplectic manifold with $(m+1)$-dimensional multisymplectic form $\Omega = d \Theta $,  where dim $M = m$. Canonical transformations in the multisymplectic framework for Hamiltonian field theories are diffeomorphisms $\Psi \colon M(E) \to M(E)$ such that $\Psi^*\Omega = \Omega$.   Notice that if $\Psi$ is a diffeomorphism such that $\Psi^*\Theta = \Theta$, then $\Psi$ is a canonical transformation.       

A distinguished class of canonical transformations is provided by those transformations $\Psi$ induced by diffeomorphisms $\psi_E \colon E \to E$, i.e., $\Psi (\varpi) = (\psi_E^{-1})^*\varpi$, $\varpi \in M(E)$.   If the diffeomorphism $\psi_E$ is a bundle isomorphism, there will exist another diffeomorphism $\psi_M \colon M \to M$ such that $\pi\circ \psi_E = \psi_M \circ \pi$.  Under such circumstances it is clear that the induced map $ (\psi_E^{-1})^* \colon \bigwedge^m(E) \to \bigwedge^m(E)$ preserves both $\bigwedge_1^m(E)$ and $\bigwedge_0^1(E)$, thus the map $\Psi = (\psi_E^{-1})^*\colon M(E) \to M(E)$ induces a natural map $\psi_* \colon P(E) \to P(E)$ such that $\mu \circ (\psi_E^{-1})^* = \psi_* \circ \mu$.   Canonical transformations induced from bundle isomorphisms will be called covariant canonical maps.  

Given a one-parameter group of canonical transformations $\Psi_t$, its infinitesimal generator $U$ satisfies
$$
\mathcal{L}_U \Omega = 0 \, .
$$
Vector fields $U$ on $M(E)$ satisfying the previous condition will be called (locally) Hamiltonian vector fields.   Locally Hamiltonian vector fields $U$ for which there exists a $(m-1)$-form $f$ on $M(E)$ (we are assuming that $\Omega$ is a $(m+1)$-form) such that
$$
i_U \Omega = \d f \ ,
$$
will be called, in analogy with mechanical systems, (globally) Hamiltonian vector fields.  The class $\mathbf{f} = \{ f + \beta \mid \d\beta = 0, \, \beta \in \Omega^{m-1}(M(E)) \}$ determined by the $(m-1)$-form $f$ is called the Hamiltonian form of the vector field $U$ and such a vector field will be denoted as $U_{\mathbf{f}}$.   

The Lie bracket of vector fields induces a  Lie algebra structure on the space of Hamiltonian vector fields that we denote as $\mathrm{Ham}(M(E),\Omega)$.  Notice that Hamiltonian vector fields whose flows $\Psi_t$ are defined by covariant canonical transformations are globally Hamiltonian because $\mathcal{L}_U \Theta = 0$, and therefore
$i_U \d \Theta = - \d i_U\Theta$. The Hamiltonian form associated to $U$ is the class containing the $(m-1)$-form $f = i_U\Theta$.

The space of Hamiltonian forms, denoted in what follows by $\mathcal{H}(M(E))$, carries a canonical bracket defined by
$$
\{ \mathbf{f}, \mathbf{f}' \} = i_{U_f} i_{U_{f'}} \Omega + Z^{m-1}(M(E)) \, ,
$$
where $Z^{m-1}(M(E))$ denotes the space of closed $(m-1)$-forms on $M(E)$.  
The various spaces introduced so far are related by the short exact sequence \cite{Ca91}:
$$
0 \to H^{m-1} (M(E))  \to \mathcal{H}(M(E)) \to \mathrm{Ham}(M(E), \Omega) \to 0 \, .
$$

Let $G$ be a Lie group acting on $E$ by bundle isomorphisms and $\psi_g \colon E \to E$, the diffeomorphism defined by the group element $g \in G$. This action induces an action on the multisymplectic manifold $(M(E), \Omega)$ by canonical transformations.   
Given an element $\xi \in \mathfrak{g}$, where now and in what follows $\mathfrak{g}$ denotes the Lie algebra of the Lie group $G$, we will denote by $\xi_{M(E)}$ and $\xi_E$ the corresponding vector fields defined by the previous actions on $M(E)$ and $E$, respectively.      The vector fields $\xi_{M(E)}$ are Hamiltonian with Hamiltonian forms $\mathbf{J}_\xi$, that is,
\begin{equation}\label{ixiJxi}
i_{\xi_{M(E)}} \Omega = \d J_\xi \, ,
\end{equation}
with $J_\xi = i_{\xi_{M(E)}}\Theta$.
It is easy to check that
$$
\{Ê\mathbf{J}_\xi,  \mathbf{J}_\zeta \} = \mathbf{J}_{[\xi , \zeta]}  + c(\xi, \zeta) \, 
$$
where $c(\xi, \zeta)$ is a cohomology class of order $m-1$.  The bilinear map $c(\cdot, \cdot)$ defines an element in $H^2(\mathfrak{g}, H^{m-1}(M(E)))$ (see \cite{Ca91}).  In what follows we will assume that the group action is such that the cohomology class $c$ vanishes. Such actions are called strongly Hamiltonian (or just Hamiltonian, for short).

So far our discussion has not involved a particular theory, that is, a Hamiltonian $H$.     Let $(P(E), \Theta_H)$ be a covariant Hamiltonian field theory and $G$ a Lie group acting on $\mathcal{F}_{P(E)}$.   
Among all possible actions of groups on the space of double sections $\mathcal{F}_{P(E)}$ those that arise from an action on $P(E)$ by covariant canonical transformations are of particular significance.   Let $G$ be a group acting on $E$ by bundle isomorphisms. Let $\psi_*(g)$ denote the covariant diffeomorphism on $P(E)$ defined by the group element $g$. Then the transformed section $\chi^g$ is given by $\chi^g (x) = \psi_*(g)(\chi(\psi_M(g^{-1}) x))$ where $\psi_M(g)$ is the diffeomorphism on $M$ defined by the action of the group. We will often consider only bundle automorphisms over the identity, in which case $\chi^g (x) = \psi_*(g)(\chi (x))$.  Such bundle isomorphisms will be called gauge transformations and the corresponding group of all gauge transformations will be called the gauge group of the theory and denoted by $\mathcal{G}(E)$, or just $\mathcal{G}$ for short, in what follows.  

The group $G$ will be said to be a symmetry of the theory if  $S(\chi^g) = S(\chi)$ for all $\chi \in \mathcal{F}_{P(E)}$, $g\in G$.
Notice that, in general, an action of $G$ on $M(E)$ by bundle isomorphisms will leave $\Theta$ invariant and will pass to the quotient space $P(E)$, however it doesn't have to preserve $\Theta_H$.
Hence, it is obvious that a group $G$ acting on $P(E)$ by covariant transformations will be a symmetry group of the Hamiltonian field theory defined by $H$ iff $g^*\Theta_H = \Theta_H + \beta_g$, where now, for the ease of notation, we indicate the diffeomorphism $\psi_*(g)$ simply by $g$, and $\beta_g$ is a closed $m$-form on $M$.      In what follows we will assume that the group $G$ acts on $E$ and its induced action on $P(E)$ preserves the $m$-form $\Theta_H$, that is $\beta_g = 0$ for all $g$.

Because the action of the group $G$ preserves the $m$-form $\Theta_H$, the group acts by canonical transformation on the manifold $(P(E), d \Theta_H)$ with Hamiltonian forms $\mathbf{J}_\xi$ given by (the equivalence class determined by the $m$-forms):
$$
J_\xi = i_{\xi_{P(E)}}\Theta_H \, .
$$

\begin{theorem}[Noether's theorem]\label{Noether}
Let $G$ be a Lie group acting on $E$ which is a symmetry group of the Hamiltonian field theory $(P(E), \Theta_H)$ and such that it preserves the $m$-form $\Theta_H$.  If $\chi \in \mathcal{EL}$ is a solution of the Euler-Lagrange equations of the theory, then the $(m-1)$-form $\chi^*J_{\xi}$ on $M$ is closed.
\end{theorem}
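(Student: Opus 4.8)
The plan is to reduce the closedness of $\chi^* J_\xi$ to the Euler--Lagrange equations satisfied by $\chi$. Since pullback commutes with the exterior derivative, $\d(\chi^* J_\xi) = \chi^*(\d J_\xi)$, so it suffices to show that $\chi^*(\d J_\xi)$ vanishes. The bridge between $\d J_\xi$ and the multisymplectic data $\d\Theta_H$ is Cartan's magic formula applied to the hypothesis that the group action preserves $\Theta_H$.

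First I would invoke the invariance of $\Theta_H$, which at the infinitesimal level reads $\mathcal{L}_{\xi_{P(E)}}\Theta_H = 0$. Writing $\mathcal{L}_{\xi_{P(E)}} = \d\, i_{\xi_{P(E)}} + i_{\xi_{P(E)}}\,\d$ and recalling that $J_\xi = i_{\xi_{P(E)}}\Theta_H$ by definition, this gives
\begin{equation*}
0 = \mathcal{L}_{\xi_{P(E)}}\Theta_H = \d J_\xi + i_{\xi_{P(E)}}\,\d\Theta_H \, ,
\end{equation*}
so that $\d J_\xi = - i_{\xi_{P(E)}}\,\d\Theta_H$. Pulling back along $\chi$ then yields $\chi^*(\d J_\xi) = -\chi^*\!\left( i_{\xi_{P(E)}}\,\d\Theta_H\right)$. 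Second, I would invoke that $\chi$ solves the Euler--Lagrange equations: by the characterization in Eq.~\eqref{formEL}, a solution satisfies $\chi^*(i_{\tilde U}\d\Theta_H) = 0$ for every vector field $\tilde U$ defined on a tubular neighborhood of the image of $\chi$. Since $\xi_{P(E)}$ is precisely such a (globally defined) vector field on $P(E)$, the choice $\tilde U = \xi_{P(E)}$ gives $\chi^*(i_{\xi_{P(E)}}\d\Theta_H) = 0$, and hence $\d(\chi^* J_\xi) = \chi^*(\d J_\xi) = 0$, which is the claim.

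The step requiring the most care is the legitimacy of feeding the full generator $\xi_{P(E)}$ into Eq.~\eqref{formEL}. The Hamilton equations \eqref{hamilton_equations} were obtained by contracting $\d\Theta_H$ against the vertical fields $\partial/\partial u^a$ and $\partial/\partial \rho_a^\mu$ only, whereas $\xi_{P(E)}$ will in general also have a component tangent to the image of $\chi$. To close this gap I would use the pointwise splitting $T_{\chi(x)}P(E) = T_{\chi(x)}(\chi(M)) \oplus V_{\chi(x)}P(E)$, valid because $\chi$ is a section so that the image is transverse and complementary to the fibers of $\tau_1$. The vertical part of $\xi_{P(E)}$ is controlled by the Hamilton equations, while for any field $\tilde U_\parallel$ tangent to $\chi(M)$, written as $\tilde U_\parallel = \chi_* V$ along $\chi$ for a vector field $V$ on $M$, one has $\chi^*(i_{\tilde U_\parallel}\d\Theta_H) = i_V\,\chi^*\d\Theta_H = 0$, since $\chi^*\d\Theta_H$ is the pullback of an $(m+1)$-form to the $m$-dimensional manifold $M$ and therefore vanishes identically. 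By linearity in the contracted argument this confirms that Eq.~\eqref{formEL} holds for arbitrary $\tilde U$, as asserted in the text, and in particular for $\xi_{P(E)}$.

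I would close by remarking that no regularity or symmetry hypothesis beyond $\mathcal{L}_{\xi_{P(E)}}\Theta_H = 0$ and $\chi \in \mathcal{EL}$ is used, and that the sign produced by Cartan's formula is immaterial to the conclusion: all that matters is that $\d(\chi^* J_\xi)$ is, up to sign, the pullback by $\chi$ of $i_{\xi_{P(E)}}\d\Theta_H$, which the Euler--Lagrange 1-form annihilates on solutions. This exhibits $\chi^* J_\xi$ as the conserved $(m-1)$-current associated with the one-parameter symmetry generated by $\xi$.
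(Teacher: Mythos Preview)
Your proof is correct and follows essentially the same route as the paper's: use Cartan's formula together with $\mathcal{L}_{\xi_{P(E)}}\Theta_H=0$ to identify $\d J_\xi$ with (minus) $i_{\xi_{P(E)}}\d\Theta_H$, then invoke the Euler--Lagrange condition \eqref{formEL} with $\tilde U=\xi_{P(E)}$ and the commutation of pullback with $\d$. The paper compresses this into a single line, whereas you add a useful paragraph justifying that \eqref{formEL} applies to arbitrary (not merely vertical) vector fields via the tangent/vertical splitting and the dimension argument $\chi^*\d\Theta_H=0$; this is a genuine clarification of a step the paper takes for granted.
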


\begin{proof}   Because $\chi$ is a solution of Euler-Lagrange equations, recalling eq. \eqref{formEL} we have
$$
0 = \chi^*(i_{\xi_{P(E)}}\Omega_H) = \chi^*\d J_\xi = \d(\chi^*J_\xi) \, .
$$ 
\end{proof}

The de Rham cohomology classes determined by the closed $(m-1)$-forms $\chi^*J_{\xi}$ on $M$ will be called currents and denoted by $\mathbf{J}_\xi[\chi]$.  
Using the Poisson bracket $\{ \cdot, \cdot \}$ defined on the space of Hamiltonian forms $\mathcal{H}(P(E))$ we define a Lie bracket in the space of currents $\mathbf{J}_\xi[\chi] \in H^{m-1}(M)$ by
$$
\{  \mathbf{J}_\xi[\chi], \mathbf{J}_\zeta[\chi] \} = \chi^* \{  \mathbf{J}_\xi, \mathbf{J}_\zeta \} = \mathbf{J}_{[\xi, \zeta]}[\chi] \, .
$$ 
By Stokes' theorem, the $(m-1)$-forms $i^*(\mathbf{J}_\xi[\chi])$ on $\partial M$ satisfy
\begin{equation}\label{boundary_conservation}
\int_{\partial M} i^*\mathbf{J}_\xi[\chi] = 0 \, .
\end{equation}
We will refer to the quantity $Q \colon \mathcal{F}_{P(E)} \to \mathfrak{g}^*$, where $\mathfrak{g}^*$ denotes the dual of the Lie algebra $\mathfrak{g}$, defined by 
\begin{equation}\label{chargeQ}
\langle Q(\chi), \xi \rangle = \int_{\partial M} i^* \mathbf{J}_\xi[\chi]  \, , \qquad  \forall \xi \in \mathfrak{g}  \, ,
\end{equation}
as the charge defined by the symmetry group.  Notice that the pairing $\langle \cdot, \cdot \rangle$ on the left hand side of Eq. \eqref{chargeQ} is the natural pairing between $\mathfrak{g}$ and $\mathfrak{g}^*$.   As a consequence of Noether's theorem we get $Q\mid_{\mathcal{EL}} = 0$.

\subsubsection{The moment map at the boundary}

Suppose that there is an action of a Lie group $G$ on the bundle $E$ that leaves invariant the restriction of the bundle $E$ to the boundary, that is, the transformations $\Psi_g$ defined by the elements of the group $g\in G$ restrict to the bundles $i^*(P(E))$ and $E_{\partial M} := i^*E$ (see Figure \ref{sections}).   We will denote such restriction as $\Psi_g\mid_{\partial M} = g_{\partial M}$.
 
Two elements $g, g' \in G$ will induce the same transformation on the bundle $E_{\partial M}$ if there exists an element $h$ such that $g' = gh$ and $h_{\partial M} = \mathrm{id}_{E_{\partial M}}$.  If we consider now the group $\mathcal{G}$ of all gauge transformations, then the set of group elements that restrict to the identity at the boundary is a normal subgroup of $\mathcal{G}$ which we will denote by $\mathcal{G}_0$.   The induced action of $\mathcal{G}$ at the boundary is the 
 action of the group $\mathcal {G}_{\partial M} = \mathcal{G}/\mathcal{G}_0$ which is the group of gauge transformations of the bundle $E_{\partial M} = i^*E$. 

In particular the group $\mathcal{G}$ induces an action on $\mathcal{F}_{\partial M}$ by
$$
g\cdot \varphi (x) = \psi_E(g)(\Phi (g^{-1}x)) = g_{\partial M} (\varphi (g^{-1} x)) \, , \qquad \forall x \in \partial M,\,  g \in \mathcal{G}\, ,
$$
and similarly for the momenta field $p$.    

\begin{proposition}\label{boundary_moment}   Let $\mathcal{G}_{\partial M}$ denote the gauge group at the boundary, that is, the group whose elements are the transformations induced at the boundary by gauge transformations of $E$. Then the action of $\mathcal{G}_{\partial M}$ in the space of fields at the boundary is strongly Hamiltonian with moment map $$
\mathcal{J} \colon T^*\mathcal{F}_{\partial M} \to \mathfrak{g}_{\partial M}^*
$$ 
given by,
$$
\langle \mathcal{J}(\varphi, p), \xi \rangle = \langle Q(\chi), \xi \rangle \, \qquad \forall \xi \in \mathfrak{g}_{\partial M} \, ,
$$
where $\Pi(\chi) = (\varphi,p)$, and $\mathfrak{g}_{\partial M}$, $\mathfrak{g}^*_{\partial M}$ denote respectively the Lie algebra and the dual of the Lie algebra of the group $\mathcal{G}_{\partial M}$.
In other words, the projection map $\Pi$   composed with the moment map at the boundary $\mathcal{J}$ is the charge $Q$ of the symmetry group.
\end{proposition}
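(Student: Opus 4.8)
The plan is to recognize the boundary action as a cotangent lift and then invoke the standard fact that cotangent lifts are automatically strongly Hamiltonian, with moment map obtained by contracting the canonical 1-form with the infinitesimal generator of the action; the substantive content of the proposition is the identification of that contraction with the charge $Q$, which follows from the local computation already carried out in eqs. \eqref{boundaryfirst}--\eqref{boundary_final}.

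First I would check that the action of $\mathcal{G}_{\partial M}$ on $T^*\mathcal{F}_{\partial M}$ is the cotangent lift of its action on the configuration space $\mathcal{F}_{\partial M}$. The action on the configurations $\varphi = \Phi \circ i$ is induced by the bundle automorphisms $g_{\partial M}$ of $E_{\partial M} = i^*E$, and, as indicated in the construction preceding the proposition, the action on the momenta $p$ is the dual lifted action obtained from the induced action $\psi_*(g)$ on $i^*P(E)$ after the decomposition of Lemma \ref{decomposition}. Since this is exactly the natural lift of a point transformation to the cotangent bundle, it preserves the canonical 1-form, $g^*\alpha = \alpha$ for every $g \in \mathcal{G}_{\partial M}$. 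Differentiating along a one-parameter subgroup gives $\mathcal{L}_{\xi_{T^*\mathcal{F}_{\partial M}}}\alpha = 0$ for the fundamental vector field $\xi_{T^*\mathcal{F}_{\partial M}}$ of each $\xi \in \mathfrak{g}_{\partial M}$.

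Next, Cartan's formula together with $\omega_{\partial M} = -\d\alpha$ gives $i_{\xi_{T^*\mathcal{F}_{\partial M}}}\omega_{\partial M} = -\mathcal{L}_{\xi_{T^*\mathcal{F}_{\partial M}}}\alpha + \d(i_{\xi_{T^*\mathcal{F}_{\partial M}}}\alpha) = \d(i_{\xi_{T^*\mathcal{F}_{\partial M}}}\alpha)$, so the action is Hamiltonian with $\mathcal{J}_\xi = i_{\xi_{T^*\mathcal{F}_{\partial M}}}\alpha$; equivariance of this moment map, hence the ``strongly Hamiltonian'' property, is automatic for cotangent lifts, so no cocycle obstruction arises. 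Evaluating with the explicit form of $\alpha$ in eq. \eqref{alpha}, and using that the base component of $\xi_{T^*\mathcal{F}_{\partial M}}$ is the generator $\xi_{\mathcal{F}_{\partial M}}$ with local expression $\xi^a \, \partial/\partial u^a$, I obtain $\langle \mathcal{J}(\varphi, p), \xi\rangle = \int_{\partial M} p_a \, \xi^a \, \mathrm{vol}_{\partial M}$.

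Finally I would identify this integral with the charge. Because the gauge transformations act over the identity on $M$, the generator $\xi_{P(E)}$ is $\tau_1$-vertical, so $i_{\xi_{P(E)}}(\mathbf{H}\,\mathrm{vol}_M) = 0$, and using $\Theta_H = \rho_a^\mu \, \d u^a \wedge \mathrm{vol}_\mu - \mathbf{H}\,\mathrm{vol}_M$ from eq. \eqref{ThetaH} the Hamiltonian form reduces to $J_\xi = i_{\xi_{P(E)}}\Theta_H = \rho_a^\mu \, \xi^a \, \mathrm{vol}_\mu$. This is structurally identical to the integrand $i_{\tilde U}\Theta_H$ of eq. \eqref{boundaryfirst} with $A^a$ replaced by $\xi^a$; pulling back along $\chi = (\Phi, P)$ and then along $i$, and repeating the collar computation of eqs. \eqref{boundaryfirst}--\eqref{boundary_final} verbatim, gives $\langle Q(\chi), \xi\rangle = \int_{\partial M} i^*\chi^* J_\xi = \int_{\partial M} p_a \, \xi^a \, \mathrm{vol}_{\partial M}$, with $p_a = P_a^0 \circ i$. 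Comparison with the preceding paragraph then yields $\langle \mathcal{J}(\varphi, p), \xi\rangle = \langle Q(\chi), \xi\rangle$ whenever $\Pi(\chi) = (\varphi, p)$, which is the claim. The main obstacle I anticipate is precisely the first step: making it rigorous, and collar-independent where required, that the lifted action on the momenta really is the cotangent lift under the identification $T^*\mathcal{F}_{\partial M} \cong \{(\varphi, p)\}$ afforded by Lemma \ref{decomposition}, since that decomposition depends on the chosen collar; once $g^*\alpha = \alpha$ is secured, everything else is the bookkeeping already established in Section \ref{sec:cotangent_boundary}.
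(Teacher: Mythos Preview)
Your proposal is correct and follows essentially the same route as the paper: recognize the boundary action as a cotangent lift, invoke the standard moment map formula $\mathcal{J}_\xi = i_{\xi_{T^*\mathcal{F}_{\partial M}}}\alpha = \langle p, \xi_{\mathcal{F}_{\partial M}}(\varphi)\rangle$, compute $i_{\xi_{P(E)}}\Theta_H = \rho_a^\mu \xi^a\,\mathrm{vol}_\mu$ explicitly, and then identify the resulting boundary integral with $\langle Q(\chi),\xi\rangle$ via the collar computation of eqs.~\eqref{boundaryfirst}--\eqref{boundary_final}. Your version is slightly more explicit in justifying the ``strongly Hamiltonian'' claim via Cartan's formula and equivariance, and in noting the collar-dependence caveat, but the argument is the paper's.
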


\begin{proof}  The action of the group $\mathcal{G}_{\partial M}$ on $T^*\mathcal{F}_{\partial M}$ is by cotangent liftings, thus its moment map $\mathcal{J}$ takes the particularly simple form,
$$
\langle \mathcal{J}(\varphi,p) , \xi \rangle = \langle p, \xi_{\mathcal{F}_{\partial M}}(\varphi) \rangle \, ,
$$
where $\xi_{\mathcal{F}_{\partial M}}$ denotes, consistently, the infinitesimal generator defined by the action of $\mathcal{G}_{\partial M}$ on $\mathcal{F}_{\partial M}$.   Such generator, because the action is by gauge transformations, i.e., bundle isomorphisms over the identity, has the explicit expression:
$$
\xi_{\mathcal{F}_{\partial M}} = \xi \circ \varphi \frac{\delta}{\delta \varphi} \, ,
$$ 
where $\xi$ is the infinitesimal generator of the action of $\mathcal{G}_{\partial M}$ on $E_{\partial M}$.   Notice that the Lie algebra ·$\mathfrak{g}$ of the group of gauge transformations is precisely the algebra of vertical vector fields on $E$ (similarly for $\mathfrak{g}_{\partial M}$ and $E_{\partial M}$).   Hence $\xi \in \mathfrak{g}_{\partial M}$ is just a vertical vector field and the infinitesimal generator $\xi_{\mathcal{F}_{\partial M}}$ at $\varphi$, which is just a tangent vector to $\mathcal{F}_{\partial M}$ at $\varphi$, is the vector field along $\varphi$ given by the composition $\xi \circ \varphi$. 

However because the action of $\mathcal{G}_{\partial M}$ on $E_{\partial M}$ is exactly the action of $\mathcal{G}$ on $E\mid_{\partial M}$,  $\xi$ can be considered as an element on $\mathfrak{g}$ and (recalling the definition of the pairing in $T^*\mathcal{F}_{\partial M}$, eq. \eqref{pairing_cotangent}, and the discussion at the end of Sect. \ref{sec:cotangent_boundary} on the conventions with variational derivatives) we get,
\begin{eqnarray}
\langle p, \xi_{\mathcal{F}_{\partial M}}(\varphi) \rangle &=& \int_{\partial M} p_a\,  \xi^a (\varphi (x))\,  \mathrm{vol}_{\partial M}  \label{momentum_Q1} \\ &=& \int_{\partial M} i^*\left(\chi^* \left(i_{\xi_{P(E)}}\Theta_H \right) \right) = \int_{\partial M} i^*\mathbf{J}_\xi [\chi] = \langle Q(\chi), \xi \rangle \, , \label{momentum_Q2}
\end{eqnarray}
with $\Pi (\chi) = (\varphi, p)$.   In the previous computations we have used that $i_{\xi_{P(E)}}\Theta_H = \rho_a^\mu \,\xi^a (x,\rho^a) \, \mathrm{vol}_\mu$, therefore $\chi^*\left(i_{\xi_{P(E)}}\Theta_H \right)= P_a(\Phi(x))^\mu \,  \xi^a (\Phi(x)) \,  \mathrm{vol}_\mu$ and thus
$$
i^*\left(\chi^*\left(i_{\xi_{P(E)}}\Theta_H \right) \right)= p^a(\varphi(x)) \, \xi^a(\varphi(x)) \, \mathrm{vol}_{\partial M} \, .
$$
Notice the particularly simple form that the currents take in this situation $J_\xi[\chi] = p_a \xi^a(\varphi)$.
\end{proof}

Thus Noether's Theorem (which implies that $Q|_{\mathcal{EL}} = 0$) together with Prop. \ref{boundary_moment}, imply that for any $\chi = (\Phi,P) \in \mathcal{EL}$, then $(\varphi,p) \in \mathcal{J}^{-1}(\mathbf{0})$, with $(\varphi, p) = \Pi (\chi)$.\\

The main, and arguably the most significant, example of symmetries is provided by theories such that the symmetry group is the full group of automorphisms of the bundle $\pi \colon E \to M$, and in particular its normal subgroup of bundle automorphisms over the identity map, i.e., diffeomorphisms $\psi_E\colon E \to E$ preserving the structure of the bundle and such that $\pi \circ \psi_E = \psi_E$.   As indicated before, such group will be called the gauge group of the theory (or, better the group of gauge transformations of the theory) and we will denote it by $\mathcal{G}(E)$ (or just $\mathcal{G}$ if there is no risk of confusion).     In such case, eqs. \eqref{momentum_Q1}-\eqref{momentum_Q2}, imply the following:

\begin{corollary}\label{Qalpha} With the above notations, for the group of gauge transformations we obtain,
$$
Q = \Pi^*\alpha \, ,
$$
  where $\alpha$ is the canonical 1-form on $T^*\mathcal{F}_{\partial M}$,
in the sense that for any $\xi\in \mathfrak{g}$ and $\chi\in J^1\mathcal{F}^*$, 
$$
\langle Q(\chi), \xi \rangle = \alpha_{\Pi(\chi)}(\xi_{\mathcal{F}_{\partial M}}) = \Pi^*\alpha_{\chi}(\xi_{P(E)}) \, .
$$
\end{corollary}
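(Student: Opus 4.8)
The plan is to read the corollary off directly from Proposition \ref{boundary_moment} and the explicit form of the canonical 1-form $\alpha$ recorded in \eqref{alpha}. First I would recall that the cotangent-lift computation in the proof of Proposition \ref{boundary_moment} gives the moment map as $\langle \mathcal{J}(\varphi, p), \xi \rangle = \langle p, \xi_{\mathcal{F}_{\partial M}}(\varphi) \rangle$. The crucial observation is that this right-hand side is nothing but $\alpha$ contracted with the generating vector field. Indeed, since the gauge group acts by bundle automorphisms over the identity, the generator $\xi_{\mathcal{F}_{\partial M}}$ is the purely ``horizontal'' tangent vector with components $\delta\varphi^a = \xi^a\circ\varphi$ and $\delta p_a = 0$, so plugging $U = \xi_{\mathcal{F}_{\partial M}}$ into \eqref{alpha} yields $\alpha_{(\varphi,p)}(\xi_{\mathcal{F}_{\partial M}}) = \int_{\partial M} p_a\,(\xi\circ\varphi)^a\,\mathrm{vol}_{\partial M} = \langle p, \xi_{\mathcal{F}_{\partial M}}(\varphi)\rangle$. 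This is just the familiar fact that the moment map of a cotangent lift equals the tautological 1-form paired with the infinitesimal generator, and it gives $\langle \mathcal{J}(\varphi,p), \xi \rangle = \alpha_{(\varphi,p)}(\xi_{\mathcal{F}_{\partial M}})$.

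Next I would invoke the identity $Q = \mathcal{J}\circ\Pi$ already established in Proposition \ref{boundary_moment}, so that for $\chi\in\mathcal{F}_{P(E)}$ with $\Pi(\chi) = (\varphi,p)$ one gets $\langle Q(\chi), \xi\rangle = \langle \mathcal{J}(\Pi(\chi)), \xi\rangle = \alpha_{\Pi(\chi)}(\xi_{\mathcal{F}_{\partial M}})$, which is precisely the middle equality in the statement. To reach the pulled-back form $(\Pi^*\alpha)_\chi(\xi_{P(E)})$ it then remains to check that $\Pi$ intertwines the generators, i.e. $\Pi_*\,\xi_{P(E)} = \xi_{\mathcal{F}_{\partial M}}$. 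This is the $\mathcal{G}$-equivariance of $\Pi$, which holds because $\Pi$ is restriction to $\partial M$ followed by extraction of the zeroth momentum component $p_a = P^0_a\circ i$, and both operations commute with a gauge transformation $\psi_*(g)$ acting fibrewise over the identity of $M$, exactly as the boundary action of $\mathcal{G}_{\partial M}$ was defined in the paragraph preceding Proposition \ref{boundary_moment}. Differentiating $\Pi(\chi^g) = \Pi(\chi)^g$ at the identity produces $\Pi_*\xi_{P(E)} = \xi_{\mathcal{F}_{\partial M}}$, and by the definition of pull-back $\alpha_{\Pi(\chi)}(\xi_{\mathcal{F}_{\partial M}}) = \alpha_{\Pi(\chi)}(\Pi_*\xi_{P(E)}) = (\Pi^*\alpha)_\chi(\xi_{P(E)})$. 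Chaining the three equalities yields the corollary, the identity $Q = \Pi^*\alpha$ being understood in exactly this sense, as the coincidence of the $\mathfrak{g}^*$-valued function $Q$ with the $\mathfrak{g}$-indexed family of evaluations of $\Pi^*\alpha$ on the gauge generators.

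The computation is short, and the only genuine point requiring care is the equivariance step $\Pi_*\xi_{P(E)} = \xi_{\mathcal{F}_{\partial M}}$; I expect this to be the main (albeit mild) obstacle, since one must confirm that the transversal momentum components play no role, that is, that the extraction of $p_a = P^0_a\circ i$ is compatible with the gauge action. Because the gauge group acts by bundle isomorphisms over the identity and hence preserves the fibrewise linear structure used in Lemma \ref{decomposition} to split off the $p$-component, this compatibility holds, and the remaining manipulations are purely formal applications of \eqref{alpha} and the definition of the charge $Q$ in \eqref{chargeQ}.
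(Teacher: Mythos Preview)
Your proposal is correct and follows essentially the same route as the paper: the corollary is read off directly from the computation \eqref{momentum_Q1}--\eqref{momentum_Q2} in the proof of Proposition~\ref{boundary_moment}, together with the identification of the boundary term with $\Pi^*\alpha$ established earlier in Section~\ref{sec:cotangent_boundary}. The one minor difference is that where you argue the last equality via the equivariance $\Pi_*\xi_{P(E)} = \xi_{\mathcal{F}_{\partial M}}$, the paper instead invokes the direct local computation $i^*(\chi^*(i_{\xi_{P(E)}}\Theta_H)) = p_a\,\xi^a(\varphi)\,\mathrm{vol}_{\partial M}$ carried out at the end of the proof of Proposition~\ref{boundary_moment}; these are two phrasings of the same fact.
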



\subsection{Boundary conditions}  Because of the boundary term $\Pi^*\alpha$ arising in the computation of the critical points of the action $S$,  the propagator of the corresponding quantum theory will be affected by such contributions and the theory could fail to be unitary \cite{As15}.   One way to avoid this problem is by selecting a subspace of the space of fields $\mathcal{F}_{P(E)}$ such that $\Pi^*\alpha$ will vanish identically when restricted to it (see for instance an analysis of this situation in Quantum Mechanics in \cite{As05}.)

Moreover, we would like to choose a maximal subspace with this property.  Then these two requirements will amount to choosing boundary conditions determined by a maximal submanifold $\mathcal{L} \subset T^*\mathcal{F}_{\partial M}$ such that $\alpha\mid_{\mathcal{L}} = 0$, that is, $\mathcal{L}$ is a special Lagrangian submanifold of the cotangent bundle $T^*\mathcal{F}_{\partial M}$.   

In general we will consider not just a single boundary condition but a family of them defining a Lagrangian fibration of $T^*\mathcal{F}_{\partial M}$.  An example of such a choice is the Lagrangian fibration $\mathcal{L}$ corresponding to the vertical fibration of $T^*\mathcal{F}_{\partial M}$.  For $\varphi \in \mathcal{F}_{\partial M}$, the subspace of fields defined by the leaf $\mathcal{L}_{\varphi}$, $\varphi \in \mathcal{F}_{\partial M}$ is just the subspace of fields $\chi = (\Phi, P)$ such that $\Phi\mid_{\partial M} = \varphi$.   

Another argument justifying the use of special Lagrangian submanifolds of $T^*\mathcal{F}_{\partial M}$ as boundary conditions, relies just on the structure of the classical theory and its symmetries and not on its eventual quantization.  Recall from Cor. \ref{Qalpha}, that if a theory $(P(E), \Theta_H)$ has the group of gauge transformations $\mathcal{G}$ of the bundle $E$ as a symmetry group, then $Q = \Pi^*\alpha$. Therefore the admissible fields of the theory - not necessarily solutions of Euler-Lagrange equations - are those such that the charge $Q$ is preserved along the boundary, that is, those that lie on a special Lagrangian submanifold $\mathcal{L} \subset T^*\mathcal{F}_{\partial M}$.

We will say that a (classical field) theory is Dirichlet if, for any $\varphi \in \mathcal{F}_{\partial M}$, there exists a unique solution $\chi = (\Phi, P)$ of the Euler-Lagrange equations, i.e., $\chi \in \mathcal{EL}$ such that $\Phi |_{\partial M} = \varphi$. 

\begin{theorem}\label{Lagrangian_sub}
Let $S$ be a regular action defined by a first order Hamiltonian field theory $(P(E), \Theta_H)$ then, if the theory is Dirichlet, $\Pi(\mathcal{EL})$ is a Lagrangian submanifold of $T^*\mathcal{F}_{\partial M}$.
\end{theorem}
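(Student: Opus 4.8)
The plan is to promote the isotropy already established in Proposition~\ref{isotropicEL} to maximality, using the Dirichlet hypothesis precisely to exhibit $\Pi(\mathcal{EL})$ as the graph of a section of the cotangent bundle. The key elementary fact is that an isotropic submanifold of $T^*\mathcal{F}_{\partial M}$ which happens to be the image of a $1$-form is automatically Lagrangian; the whole content of the theorem is therefore to show that the Dirichlet condition forces $\Pi(\mathcal{EL})$ to be such a graph, after which Proposition~\ref{isotropicEL} closes the argument.

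First I would establish the graph property. Let $\pi_0 \colon T^*\mathcal{F}_{\partial M} \to \mathcal{F}_{\partial M}$ denote the canonical cotangent projection $(\varphi, p) \mapsto \varphi$, so that $\pi_0 \circ \Pi$ sends a double section $\chi = (\Phi, P)$ to its boundary configuration $\Phi|_{\partial M}$. The Dirichlet hypothesis says exactly that the restriction of $\pi_0$ to $\Pi(\mathcal{EL})$ is a bijection onto $\mathcal{F}_{\partial M}$: surjectivity is the existence of a solution $\chi \in \mathcal{EL}$ with prescribed boundary value $\varphi$, while injectivity is its uniqueness, since two points $(\varphi, p_1), (\varphi, p_2) \in \Pi(\mathcal{EL})$ arise from solutions sharing the boundary configuration $\varphi$ and hence coincide, forcing $p_1 = p_2$. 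Consequently $\Pi(\mathcal{EL})$ is the image of a section $s \colon \mathcal{F}_{\partial M} \to T^*\mathcal{F}_{\partial M}$, $s(\varphi) = (\varphi, p(\varphi))$, where $p(\varphi)$ is the boundary momentum of the unique solution; equivalently it is the graph of the $1$-form $\theta$ on $\mathcal{F}_{\partial M}$ given by $\theta_\varphi = p(\varphi)$. The regularity of $S$, together with the standing assumption that $\Pi(\mathcal{EL})$ is a smooth submanifold, guarantees that $s$, and hence $\theta$, is smooth.

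To conclude, I would invoke the tautological property of the canonical $1$-form, $s^*\alpha = \theta$, which gives $s^*\omega_{\partial M} = -\,s^*\d\alpha = -\d\theta$. Since $s \colon \mathcal{F}_{\partial M} \to \Pi(\mathcal{EL})$ is a diffeomorphism, the isotropy $\omega_{\partial M}|_{\Pi(\mathcal{EL})} = 0$ furnished by Proposition~\ref{isotropicEL} is equivalent to $\d\theta = 0$. A closed $1$-form has symmetric derivative, so at each point the tangent space of the graph, namely $\{(\delta\varphi, \d\theta(\delta\varphi))\}$, coincides with its own symplectic orthogonal complement inside $T(T^*\mathcal{F}_{\partial M})$; that is, the graph is maximal isotropic, i.e.\ Lagrangian, and therefore so is $\Pi(\mathcal{EL})$.

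The routine part is the linear algebra of the last paragraph; the genuine care is analytic. The main obstacle is to justify, within the formal infinite-dimensional calculus adopted in this paper, that $\Pi(\mathcal{EL})$ is a bona fide submanifold presented as a smooth graph (this is exactly where regularity enters) and that the pointwise identity $L = L^\omega$ legitimately certifies \emph{Lagrangian} in $T^*\mathcal{F}_{\partial M}$, rather than merely \emph{isotropic of half the dimension}, a distinction that is transparent in finite dimensions but must be read through the conventions fixed in the introduction.
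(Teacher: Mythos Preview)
Your argument is correct and close in spirit to the paper's, but the route is genuinely a bit different. You show that the Dirichlet hypothesis makes $\Pi(\mathcal{EL})$ the graph of some $1$-form $\theta$, and then use the isotropy from Proposition~\ref{isotropicEL} to deduce $\d\theta=0$, hence Lagrangian. The paper instead constructs the generating function explicitly: it defines the on-shell action $W=S\circ D$, where $D\colon\mathcal{F}_{\partial M}\to\mathcal{F}_{P(E)}$ sends $\varphi$ to the unique solution with boundary value $\varphi$, and then uses the fundamental formula $\d S=\mathrm{EL}+\Pi^*\alpha$ (with $\mathrm{EL}_\chi=0$ on solutions) to compute directly that $\d W(\varphi)(\delta\varphi)=p_a\delta\varphi^a$, so that $\Pi(\mathcal{EL})=\mathrm{graph}(\d W)$ is the graph of an \emph{exact} $1$-form and hence Lagrangian.

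The trade-off: your approach is more economical, since it reuses Proposition~\ref{isotropicEL} as the sole analytic input and never touches the action again. The paper's approach yields more information, identifying $W$ as a Hamilton--Jacobi generating function for $\Pi(\mathcal{EL})$; this is the classical ``on-shell action'' and carries physical meaning beyond the bare Lagrangian property. Your caveats in the last paragraph about the infinite-dimensional meaning of ``Lagrangian'' apply equally to both proofs.
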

\begin{proof}  Recall the discussion in Sec. \ref{sec:fundamental}. If the action is regular, i.e. if the solutions of the Euler-Lagrange equations $\mathcal{EL}$ define a submanifold of $\mathcal{F}_{P(E)}$, then from the fundamental relation eq. \eqref{fundamental}, we get Prop. \ref{isotropicEL}, that $\Pi(\mathcal{EL})$ is an isotropic submanifold of $T^*\mathcal{F}_{\partial M}$.

 Let the functional $W$ denote the composition $W = S\circ D$ where $S : \mathcal{F}_{P(E)} \rightarrow \mathbb{R}$ is the action of our theory defined by $H$, i.e. $S(\chi)= \int_M \chi^*\Theta_H \,$ and $D : \mathcal{F}_{\partial M} \rightarrow \mathcal{F}_{P(E)}$ is the map that assigns to any boundary data $\varphi$ the unique solution $(\Phi, P)$  of the Euler-Lagrange equations such that $\Phi |_{\partial M} =\varphi$.
Thus 
$$
W(\varphi)= \int_M \chi^*\Theta_H  = S (\chi) \, ,
$$ 
for any $\varphi \in \mathcal{F}_{\partial M}$.  By Eq. \eqref{ELform} and since $D(\phi)= (\Phi,P)=\chi \in \mathcal{EL}$ it follows that $\mathrm{EL}_{\chi}=0$.  A simple computation then shows that
$$
\mathrm{d}W (\varphi) (\delta \varphi) = p_a\delta \varphi^a \, ,
$$
where $p_a = P_a^0\mid_{\partial M}$.
Thus the graph of the 1-form $\mathrm{d} W =\Pi(\mathcal{EL})$, i.e., $W$ is a generating function for $\Pi(\mathcal{EL})$ and therefore $\Pi(\mathcal{EL})$ is a Lagrangian submanifold of $\mathcal{T^*F}_{\partial M}$.

\end{proof}

The Dirichlet condition can be weakened and a corresponding proof for a natural extension of Theorem \ref{Lagrangian_sub}  can be provided. See our work on Hamiltonian dynamics \cite{Ib16}.

\newpage

\section{The presymplectic formalism at the boundary}\label{sec:presymplectic}


\subsection{The evolution picture near the boundary}\label{sec:dynamical_eqs}

  We discuss in what follows the evolution picture of the system near the boundary.  As discussed in Section \ref{sec:cotangent_boundary}, we assume that there exists a collar $U_\epsilon \cong (-\epsilon , 0]\times\partial M$ of the boundary \ $\partial M$ with adapted coordinates $(t; x^1,\ldots, x^d)$, where $t = x^0$ and where $x^i$, $i = 1,\ldots, x^d$ define a local chart in $\partial M$. The normal coordinate $t$ can be used as an evolution parameter in the collar. We assume again that the volume form in the collar is of the form $\mathrm{vol}_{U_{\epsilon}} = dt\wedge \mathrm{vol}_{\partial M}$. 
  
 If $M$ happens to be a globally hyperbolic space-time $M \cong [t_0,t_1]\times \Sigma$ where $\Sigma$ is a Cauchy surface, $[t_0,t_1] \subset \mathbb{R}$ denotes a finite interval in the real line, and the metric has the form $-dt^2 + g_{\partial M}$ where $g_{\partial M}$ is a fixed Riemannian metric on $\partial M$, then $t$ represents a time evolution parameter throughout the manifold and the volume element has the form $\mathrm{vol}_M = dt \wedge \mathrm{vol}_{\partial M}$. Here, however, all we need to assume is that our manifold has a collar at the boundary as described above.

Restricting the action $S$ of the theory to fields defined on $U_\epsilon$, i.e., sections of the pull-back of the bundles $E$ and $P(E)$ to $U_\epsilon$, we obtain,
\begin{equation}\label{Sepsilon}
S_\epsilon (\chi ) = \int_{U_\epsilon} \chi^*\Theta_H = \int_{-\epsilon}^0 \d t \int_{\partial M} \mathrm{vol}_{\partial M} \left[ P_a^0 \partial_0 \Phi^a + P_a^k\partial_k \Phi^a - H(\Phi^a, P_a^0, P_a^k) \right]\, . 
\end{equation}
Defining the fields at the boundary as discussed in Lemma \ref{decomposition},
$$
\varphi^a = \Phi^a|_{\partial M} \, , \qquad p_a = P_a^0|_{\partial M} \, , \qquad \beta_a^{k} = P_a^{k}|_{\partial M} \, ,
$$
we can rewrite \eqref{Sepsilon} as
$$
S_\epsilon (\chi) = \int_{-\epsilon}^0 \d t \int_{\partial M} \mathrm{vol}_{\partial M}[ p_a \dot{\varphi}^a + \beta_a^k\partial_k\varphi^a -H(\varphi^a,p_a,\beta_a^k)] \, .
$$
Letting $\langle p, \dot{\varphi} \rangle = \int_{\partial M} p_a \dot{\varphi}^a \,\,  \mathrm{vol}_{\partial M}$ denote, as in \eqref{pairing_cotangent}, the natural pairing and, similarly, 
$$\langle \beta, \mathrm{d}_{\partial M}\varphi \rangle = \int_{\partial M} \beta_a^k \partial_k \varphi^a \,  \mathrm{vol}_{\partial M},
$$
we can define a density function $\mathcal{L}$ as,
\begin{equation}\label{densityL}
\mathcal{L}(\varphi,\dot{\varphi},p,\dot{p},\beta,\dot{\beta})=\langle p,\dot{\varphi}\rangle + \langle\beta, \mathrm{d}_{\partial M}\varphi \rangle - \int_{\partial M} H(\varphi^a,p_a,\beta_a^k) \, \mathrm{vol}_{\partial M} \, ,
\end{equation}
and then
$$
S_\epsilon (\chi) = \int_{-\epsilon}^0 \d t \, \, {\mathcal{L}}(\varphi,\dot{\varphi},p,\dot{p},\beta,\dot{\beta}) \, .
$$

Notice again that because of the existence of the collar $U_\epsilon$ near the boundary and the assumed form of  $\mathrm{vol}_ {U_\epsilon}$,  the elements in the bundle $i^*P(E)$ have the form $\rho_a^0 \d u^a \wedge \mathrm{vol}_{\partial M} + \rho_a^k \d u^a \wedge \d t \wedge i_{\partial /\partial x^k}\mathrm{vol}_{\partial M}$ and, as discussed in Lemma \ref{decomposition}, the bundle $i^*P(E)$ over $i^*E$ is isomorphic to the product $\bigwedge_1^m(i^*E) \times B$, where $B = \bigwedge_1^{m-1}(i^*E)$.   The space of double sections $(\varphi,p)$ of the bundle $\bigwedge_1^m(i^*E) \to i^*E \to \partial M$ correspond to the cotangent bundle $T^*\mathcal{F}_{\partial M}$ and the double sections $(\varphi, \beta)$ of the bundle $B \to i^*E \to \partial M$ correspond to a new space of fields at the boundary denoted by $\mathcal{B}$.

We will introduce now the total space of fields at the boundary $\mathcal{M}$ which is the space of double sections of the iterated bundle $i^*P(E) \to i^*E \to \partial M$. Following the previous remarks it is obvious that $\mathcal{M}$ has the form,
$$\mathcal{M}= \mathcal{T}^*\mathcal{F}_{\partial M} \times_{\mathcal{F}_{\partial M}} \mathcal{B} = \{(\varphi, p, \beta)\} \, .$$

Thus the density function $\mathcal{L}$, Eq. \eqref{densityL}, is defined on the tangent space $T\mathcal{M}$ to the total space of fields at the boundary and could be called accordingly the boundary Lagrangian of the theory.

Consider the action $ A = \int_{-\epsilon}^0 \mathcal{L} \,\, \d t$ defined on the space of curves $\sigma\colon ( -\epsilon, 0] \to \mathcal{M}$.
If we compute $\mathrm{d}A$ we obtain a bulk term, that is, an integral on $(-\epsilon, 0]$, and a term evaluted at $\partial [-\epsilon,0] = \{-\epsilon, 0\}$. Setting the bulk term equal to zero, we obtain the Euler-Lagrange equations of this system considered as a Lagrangian system on the space $\mathcal{M}$ with Lagrangian function $\mathcal{L}$, 
\begin{equation}\label{Euler-Lagrange Equations 1}
 \frac{\d}{\d t} \frac{{\delta}{\mathcal{L}}}{{\delta}{\dot{{\varphi}^a}}}= \frac{{\delta}{\mathcal{L}}}{{\delta}{{\varphi}^a}} \, ,
 \end{equation}
 which becomes,
 \begin{equation}\label{pidot}
\dot{p_a}= -{\partial}_k{\beta}_a^k - \frac{{\partial}H}{{\partial}{\varphi}^a} \, .
\end{equation}
Similarly, we get for the fields $p$ and $\beta$:
$$
\frac{\d}{\d t}\frac{{\delta}{\mathcal{L}}}{{\delta}{\dot{p}_a}}=\frac{{\delta}{\mathcal{L}}}{{\delta}{p_a}} \, , \quad
 \frac{\d}{\d t}\frac{{\delta}{\mathcal{L}}}{{\delta}{\dot{{\beta}}_a^k}}=\frac{{\delta}{\mathcal{L}}}{{\delta}{{\beta}_a^k}}
 $$
that become respectively,
\begin{equation}\label{phidot}
\dot{\varphi}_a = \frac{\partial H}{\partial p_a} \, ,
\end{equation}
and, the constraint equation:
\begin{equation}\label{dconstraint}
\d_{\partial M}{\varphi}-\frac{\partial H}{{\partial}{{\beta}_a^k}}=0 \, .
\end{equation}
Thus, Euler-Lagrange equations in a collar $U_\epsilon$ near the boundary, can be understood as a system of evolution equations on $T^*\mathcal{F}_{\partial M}$ depending on the variables $\beta_a^k$, together with a constraint condition on the extended space $\mathcal{M}$.  The analysis of these equations, Eqs. \eqref{pidot}, \eqref{phidot} and \eqref{dconstraint}, is best understood in a presymplectic framework.

\subsection{The presymplectic picture at the boundary and constraints analysis}\

We will introduce now a presymplectic framework on $\mathcal{M}$ that will be helpful in the study of Eqs.  \eqref{pidot}-\eqref{dconstraint}.

Let $\varrho :\mathcal{M} \longrightarrow \mathcal{T}^*\mathcal{F}_{\partial M}$ denote the canonical projection $\varrho(\varphi,p,\beta)=(\varphi,p)$.  (See Figure \ref{diagram}.)
Let $\Omega$ denote the pull-back of the canonical symplectic form ${\omega}_{\partial M}$ on $\mathcal{T}^*\mathcal{F}_{\partial M}$ to $\mathcal{M}$, i.e., let $\Omega=\varrho^*\omega_{\partial M}$.   Note that the form $\Omega$ is closed but degenerate, that is, it defines a presymplectic structure on $\mathcal{M}$. An easy computation shows that the characteristic distribution $\mathcal{K}$ of $\Omega$, is given by
$$
\mathcal{K} = \ker\Omega= \mathrm{span} \left\{ \frac{\delta}{{\delta}{\beta}_a^k} \right\} \, .
$$

Let us consider the function defined on $\mathcal{M}$, 
$$
\mathcal{H}(\varphi,p,\beta)= -\langle \beta, d_{\partial M}\varphi \rangle + \int_{\partial M} H({\varphi}^a, p_a, {\beta}_a^k)\, \mathrm{vol}_{\partial M} \, .
$$

\begin{figure}[ht]
\centering
\includegraphics[width=10cm]{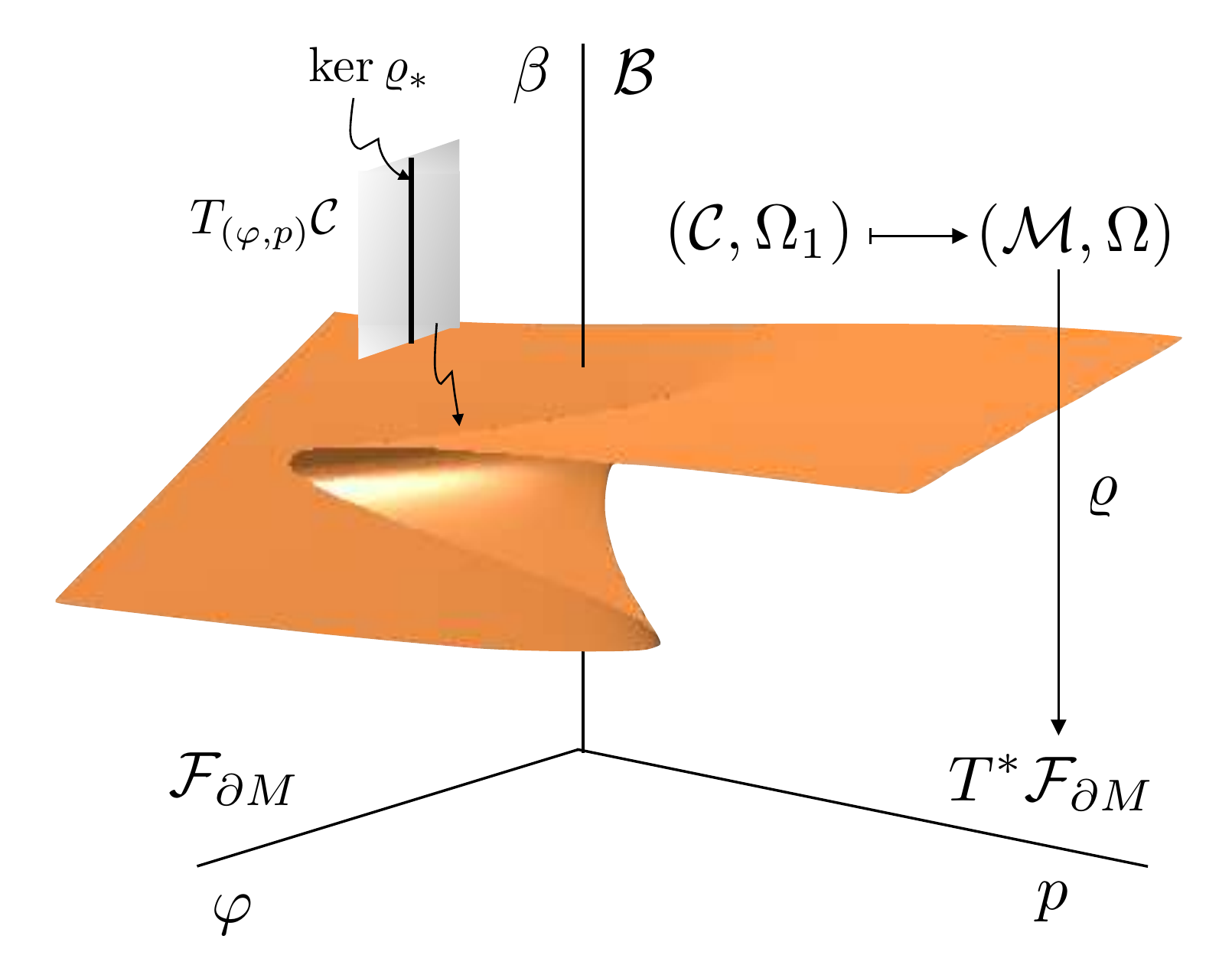}
\caption{The space of fields at the boundary $\mathcal{M}$ and its relevant structures.}\label{diagram}
\end{figure}

We will refer to $\mathcal{H}$ as the boundary Hamiltonian of the theory.
 Thus $\mathcal{L}$ can be rewritten as
$$\mathcal{L}(\varphi, \dot{\varphi},p,\dot{p},\beta, \dot{\beta})=\langle p,\dot{\varphi} \rangle - \mathcal{H}(\varphi,p,\beta)$$ and 
\begin{equation}\label{Lagrangian with Boundary Hamiltonian}
 S_{\epsilon}(\varphi, p, \beta) =\int_{-\epsilon}^0 [\langle p,\dot{\varphi} \rangle - \mathcal{H}(\varphi,p,\beta)] \d t \, ,
 \end{equation}
and therefore the Euler-Lagrange equations \eqref{phidot}, \eqref{pidot} and \eqref{dconstraint} can be written as
\begin{equation}\label{Hamilton's Evolution Equations}
 \dot{\varphi}^a = \frac{\delta \mathcal{H}}{\delta p_a}\, ,\qquad 
\dot{p}_a = -\frac{\delta \mathcal{H}}{\delta{\varphi}^a}  \, ,
\end{equation}
and
\begin{equation}\label{Hamilton's Constraint Equation}
 0= \frac{\delta\mathcal{H}}{\delta{\beta}_a^k} \, .
\end{equation}

Now it is easy to prove the following:

\begin{theorem}\label{presymplectic_equation}
The solutions to the equations of motion defined by the Lagrangian $\mathcal{L}$ over a collar $U_\epsilon$ at the boundary, $\epsilon$ small enough, are in one-to-one correspondence with the integral curves of the presymplectic system $(\mathcal{M},\Omega,\mathcal{H})$, i.e., with the integral curves of the vector field $\Gamma$ on $\mathcal{M}$ satisfying 
\begin{equation}\label{presymplectic_equation1}
i_\Gamma \Omega = \d\mathcal{H} \, .
\end{equation}
\end{theorem}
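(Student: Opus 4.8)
The plan is to show that equation \eqref{presymplectic_equation1} is simply a repackaging of the Euler--Lagrange equations \eqref{Hamilton's Evolution Equations}--\eqref{Hamilton's Constraint Equation} already derived from the boundary Lagrangian $\mathcal{L}$. First I would write down the most general vector field $\Gamma$ on $\mathcal{M}$ in the coordinates $(\varphi^a, p_a, \beta_a^k)$, namely
\begin{equation*}
\Gamma = \Gamma_\varphi^a \frac{\delta}{\delta \varphi^a} + \Gamma^p_a \frac{\delta}{\delta p_a} + \Gamma_\beta{}^k_a \frac{\delta}{\delta \beta_a^k} \, ,
\end{equation*}
with the three component families to be determined. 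Since $\Omega = \varrho^*\omega_{\partial M}$ and $\varrho(\varphi,p,\beta) = (\varphi,p)$, in the variational notation of Sect.~\ref{sec:cotangent_boundary} we have $\Omega = \delta\varphi^a \wedge \delta p_a$, with no $\delta\beta$ appearing. I would then compute $i_\Gamma\Omega$ explicitly, obtaining $i_\Gamma\Omega = \Gamma_\varphi^a\,\delta p_a - \Gamma^p_a\,\delta\varphi^a$, which confirms that the $\beta$-component $\Gamma_\beta{}^k_a$ of $\Gamma$ is entirely unconstrained by the left-hand side of \eqref{presymplectic_equation1}; this is exactly the statement that $\mathcal{K} = \ker\Omega = \mathrm{span}\{\delta/\delta\beta_a^k\}$.

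Next I would expand the right-hand side $\d\mathcal{H}$ as a 1-form on $\mathcal{M}$,
\begin{equation*}
\d\mathcal{H} = \frac{\delta\mathcal{H}}{\delta\varphi^a}\,\delta\varphi^a + \frac{\delta\mathcal{H}}{\delta p_a}\,\delta p_a + \frac{\delta\mathcal{H}}{\delta\beta_a^k}\,\delta\beta_a^k \, ,
\end{equation*}
and match coefficients of the independent 1-forms $\delta\varphi^a$, $\delta p_a$, $\delta\beta_a^k$ on the two sides of \eqref{presymplectic_equation1}. Matching $\delta p_a$ gives $\Gamma_\varphi^a = \delta\mathcal{H}/\delta p_a$, matching $\delta\varphi^a$ gives $-\Gamma^p_a = \delta\mathcal{H}/\delta\varphi^a$, and matching $\delta\beta_a^k$ forces $\delta\mathcal{H}/\delta\beta_a^k = 0$, since the left-hand side has no $\delta\beta$-component. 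Reading off the integral curves of $\Gamma$, for which $\dot\varphi^a = \Gamma_\varphi^a$ and $\dot p_a = \Gamma^p_a$, reproduces precisely the evolution equations \eqref{Hamilton's Evolution Equations} together with the constraint \eqref{Hamilton's Constraint Equation}. Since those equations were shown in the preceding subsection to be equivalent to the Euler--Lagrange equations of $\mathcal{L}$ on the collar, the one-to-one correspondence follows.

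The one genuine subtlety, which I would address rather than the routine coefficient-matching, is the role of $\Gamma_\beta{}^k_a$: because $\Omega$ is degenerate along the $\beta$-directions, equation \eqref{presymplectic_equation1} does not determine $\Gamma$ uniquely, but rather up to an arbitrary section of the characteristic distribution $\mathcal{K}$. I would remark that this indeterminacy is consistent with, and indeed reflects, the fact that $\beta$ carries no evolution equation of its own in the Lagrangian picture: the value of $\beta$ along a solution is fixed implicitly by the constraint \eqref{Hamilton's Constraint Equation}, not by a first-order flow, so the freedom in $\Gamma_\beta{}^k_a$ is exactly the freedom to move within the fibers of $\varrho$ while staying on the constraint locus. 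The correspondence in the statement should therefore be understood between solution curves of $\mathcal{L}$ and those integral curves of $\Gamma$ that lie on the constraint submanifold $\{\delta\mathcal{H}/\delta\beta_a^k = 0\}$; the constraint analysis of the presymplectic system $(\mathcal{M},\Omega,\mathcal{H})$ then selects the admissible $\Gamma$, and this is the point I expect to require the most care in stating cleanly.
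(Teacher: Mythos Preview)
Your proposal is correct and follows essentially the same approach as the paper: write a general vector field $\Gamma$ on $\mathcal{M}$, use $\Omega = \delta\varphi^a \wedge \delta p_a$ to compute $i_\Gamma\Omega$, and match coefficients against $\d\mathcal{H}$ to recover exactly the evolution equations \eqref{Hamilton's Evolution Equations} and the constraint \eqref{Hamilton's Constraint Equation}. Your additional remark on the indeterminacy of $\Gamma_\beta{}^k_a$ along $\mathcal{K}$ is accurate and anticipates the constraint analysis that the paper carries out immediately afterwards, though the paper's own proof does not include that discussion.
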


\begin{proof}  Let $\Gamma = A^a\frac{\delta}{\delta{\varphi}^a} + B^a\frac{\delta}{{\delta}p^a} + C^a\frac{\delta}{{\delta}{\beta}_a^k}$ be a vector field on $\mathcal{M}$ (notice that we are using an extension of the functional derivative notation introduced in Section \ref{sec:cotangent_boundary} on the space of fields $\mathcal{M}$).  Then because $\Omega = \delta{\varphi}^a \wedge \delta p_a$, we get from $i_{\Gamma}\Omega= \d\mathcal{H}$ that,
$$
A^a = \frac{{\delta}{\mathcal{H}}}{{\delta} p_a},\qquad
B^a = -\frac{{\delta}{\mathcal{H}}}{{\delta}{\varphi}^a} \, ,\qquad 0 = \frac{{\delta}{\mathcal{H}}}{{\delta}{\beta}_a^k} \, .
$$

Thus, $\Gamma$ satisfies Eq. \eqref{presymplectic_equation1} iff 
$$
\dot{\varphi}^a =\frac{{\delta}{\mathcal{H}}}{{\delta}p_a}, \qquad
\dot{p}_a = -\frac{{\delta}{\mathcal{H}}}{{\delta}{\varphi}^a} \, , \quad 
\mathrm{and} \quad 
0= \frac{{\delta}{\mathcal{H}}}{{\delta}{\beta}_a^k} \, .
$$
\end{proof}

Let us denote by $\mathcal{C}$ the submanifold of the space of fields $\mathcal{M} =T^*\mathcal{F}_{\partial M} \times \mathcal{B}$ defined by eq. $(3.9)$.  It is clear that the
restriction of the solutions of the Euler-Lagrange equations on $M$ to the boundary ${\partial}M$, are contained in $\mathcal{C}$; i.e., $ \Pi (\mathcal{EL})\subset\mathcal{C}.$

Given initial data $\varphi, p$ and fixing $\beta$, existence and uniqueness theorems for initial value problems when applied to the initial value problem above, would show the existence of solutions for small intervals of time, i.e., in a collar near the boundary.   

However, the constraint condition given by eq. \eqref{Hamilton's Constraint Equation}, satisfied automatically by critical points of $S_\epsilon$ on $U_\epsilon$, must be satisfied along the integral curves of the system, that is, for all $t$ in the neighborhood $U_\epsilon$ of $\partial M$.   This implies that consistency conditions on the evolution  must be imposed.  Such consistency conditions are just that the constraint condition eq. \eqref{Hamilton's Constraint Equation}, is preserved under the evolution defined by eqs. \eqref{Hamilton's Evolution Equations}. This is the typical situation that we will find in the analysis of dynamical problems with constraints and that we are going to summarily analyze in what follows.


\subsubsection{The Presymplectic Constraints Algorithm (PCA)}

Let $i$ denote the canonical immersion $\mathcal{C}=\{(\varphi,p,\beta)| \frac{\delta {\mathcal{H}}}{\delta{\beta}}=0\}\to \mathcal{M}$ and consider the pull-back of $\Omega$
to $\mathcal{C}$, i.e., $\Omega_1 = i^*\Omega$. Clearly then, $\ker \Omega_1 = \ker \varrho_* \cap T\mathcal{C}$.
But $\mathcal{C}$ is defined as the zeros of the function $\delta \mathcal{H}/\delta \beta$. Therefore if $\delta^2 \mathcal{H}/\delta^2\beta$ is nondegenerate (notice that the operator $\delta^2 \mathcal{H}/\delta
{\beta}_a^i{\delta}{\beta}_b^j$ becomes the matrix $\partial^2 H /\partial \beta_a^i \partial\beta_b^j$), by an appropriate extension of the Implicit Function Theorem, we could solve $\beta$ as a function of $\varphi$ and $p$.  In such case, locally, $\mathcal{C}$ would be the graph of a function $F\colon T^*\mathcal{F}_{\partial M} \to \mathcal{B}$, say $\beta = F(\varphi, p)$. This is precisely the situation we will see in the simple example of scalar fields in the next section.
Collecting the above yields:

\begin{proposition}  The submanifold $(\mathcal{C}, \Omega_1)$ of $(\mathcal{M},\Omega,\mathcal{H})$ is symplectic iff $H$ is regular, i.e., 
$\partial^2 H /\partial \beta_a^i \partial\beta_b^j$ is non-degenerate.  In such case the projection $\varrho$ restricted to $\mathcal{C}$, which we denote by $\varrho_C$, is a local symplectic diffeomorphism and therefore $\varrho_C^*\omega_{\partial M} = \Omega_1$.
\end{proposition}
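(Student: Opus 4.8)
The plan is to reduce the whole statement to the nondegeneracy of $\Omega_1$ on $\mathcal{C}$ and, ultimately, to the invertibility of the $\beta$-Hessian of $H$. First I would record that closedness of $\Omega_1$ is free: since $\Omega_1 = i^*\Omega = i^*\varrho^*\omega_{\partial M}$ with $i\colon\mathcal{C}\hookrightarrow\mathcal{M}$ the inclusion, we have $\mathrm{d}\Omega_1 = i^*\varrho^*\mathrm{d}\omega_{\partial M} = 0$, so everything is about nondegeneracy. The key preliminary observation is that the kernel of $\Omega$ is exactly the vertical distribution of $\varrho$: from $\Omega(X,Y) = \omega_{\partial M}(\varrho_*X,\varrho_*Y)$, the fact that $\varrho$ is a surjective submersion (so $\varrho_*Y$ ranges over all of $T(T^*\mathcal{F}_{\partial M})$) together with nondegeneracy of $\omega_{\partial M}$ forces $\varrho_*X = 0$ whenever $i_X\Omega = 0$. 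Hence $\ker\Omega = \ker\varrho_* = \mathcal{K} = \operatorname{span}\{\delta/\delta\beta_a^k\}$, as already noted in the excerpt. In particular, \emph{every} vertical vector annihilates $\Omega$.

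Next I would compute the constraint on vertical directions. Writing $G_a^k := \delta\mathcal{H}/\delta\beta_a^k = -\partial_k\varphi^a + \partial H/\partial\beta_a^k$, we have $\mathcal{C} = \{G = 0\}$ and $T\mathcal{C} = \ker\mathrm{d}G$. For a purely vertical vector $X_C = C_a^k\,\delta/\delta\beta_a^k$ only the $\beta$-derivative of $G$ survives, so $X_C \in T\mathcal{C}$ if and only if $\tfrac{\partial^2 H}{\partial\beta_a^k\partial\beta_b^j}C_b^j = 0$. This is the hinge of the argument. If $H$ is \emph{not} regular, choose $C\neq 0$ in the kernel of the matrix $\partial^2 H/\partial\beta_a^k\partial\beta_b^j$; then $X_C$ is a nonzero vector which, being vertical, lies in $\ker\Omega$, and being tangent to $\mathcal{C}$ therefore lies in $\ker\Omega_1$. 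Thus $\Omega_1$ is degenerate and $(\mathcal{C},\Omega_1)$ fails to be symplectic, which gives one implication of the equivalence.

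For the reverse implication and the symplectomorphism statement I would invoke the implicit-function argument already sketched before the proposition. When $\partial^2 H/\partial\beta_a^i\partial\beta_b^j$ is nondegenerate, the relation $\delta\mathcal{H}/\delta\beta = 0$ can be solved locally for $\beta = F(\varphi,p)$, so that $\mathcal{C}$ is the graph of $F$ over $T^*\mathcal{F}_{\partial M}$. Consequently $\varrho_C := \varrho\circ i\colon\mathcal{C}\to T^*\mathcal{F}_{\partial M}$ is a local diffeomorphism, with inverse $(\varphi,p)\mapsto(\varphi,p,F(\varphi,p))$. The identity $\varrho_C^*\omega_{\partial M} = i^*\varrho^*\omega_{\partial M} = i^*\Omega = \Omega_1$ then holds tautologically, and being the pullback of the symplectic form $\omega_{\partial M}$ under a local diffeomorphism, $\Omega_1$ is closed and nondegenerate. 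Hence $(\mathcal{C},\Omega_1)$ is symplectic and $\varrho_C$ is a local symplectomorphism, establishing both the ``in such case'' clause and the remaining direction of the equivalence.

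The step I expect to be the real obstacle is the passage through the implicit function theorem in this infinite-dimensional variational setting. The condition $\delta\mathcal{H}/\delta\beta = 0$ is pointwise algebraic in $\beta$ with $(\varphi,p)$ as parameters, so heuristically the finite-dimensional Hessian condition should suffice; but turning $F$ into a genuine smooth map of section spaces requires the ``appropriate extension of the Implicit Function Theorem'' the paper refers to. Within the declared level of rigor I would take this for granted and devote the written proof to the linear-algebra dichotomy for the $\beta$-Hessian, which is what actually drives the equivalence, noting that the degenerate direction needs only the elementary fact that vertical vectors lie in $\ker\Omega$.
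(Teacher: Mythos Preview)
Your proposal is correct and follows essentially the same route as the paper: the paper's ``proof'' is the short paragraph preceding the proposition, which asserts $\ker\Omega_1 = \ker\varrho_*\cap T\mathcal{C}$ and then invokes the implicit function theorem on the $\beta$-Hessian to realize $\mathcal{C}$ as a graph over $T^*\mathcal{F}_{\partial M}$. You do exactly this, but more carefully: you use only the easy inclusion $\ker\varrho_*\cap T\mathcal{C}\subset\ker\Omega_1$ to handle the degenerate case, and the graph description for the regular case, thereby avoiding reliance on the full equality of kernels (whose $\subset$ direction tacitly needs $\varrho|_{\mathcal{C}}$ to be a submersion, i.e.\ already presupposes something close to regularity).
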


When the situation is not as described above,
and $\beta$ is not a function of $\varphi$ and $p$, then $(\mathcal{C},\Omega_1)$ is indeed a presymplectic submanifold of $\mathcal{M}$ and $i_{\Gamma}{\Omega}=d\mathcal{H}$ will not hold necessarily at every point in $\mathcal{C}$.  In this case 
 we would apply Gotay's Presymplectic Constraints Algorithm [Go78], to obtain  the maximal submanifold of $\mathcal{C}$ for which 
 $i_{\Gamma}{\Omega}=d\mathcal{H}$ is consistent and that can be summarized as follows.


Consider a presymplectic system $(\mathcal{M}, \Omega, \mathcal{H})$ where $\mathcal{M}= T^*\mathcal{F}_{\partial M}\times\mathcal{B}$ and, $\Omega$ and $\mathcal{H}$ are as defined above. Let $\mathcal{M}_0 = \mathcal{M}$, $\Omega_0 = \Omega$, $\mathcal{K}_0 = \ker \Omega_0$, and $\mathcal{H}_0 = \mathcal{H}$.  We define the primary constraint submanifold $\mathcal{M}_1$ as the submanifold defined by the consistency condition for the equation $i_\Gamma \Omega_0 = \d\mathcal{H}_0$, i.e., 
$$
\mathcal{M}_1  = \{ \chi \in \mathcal{M}_0 \mid \langle Z_0(\chi) , \d\mathcal{H}_0(\chi) \rangle = 0, \, \,  \forall Z_0 \in \mathcal{K}_0 \} \, .
$$
 Thus $\mathcal{M}_1= \mathcal{C}$. Denote by $i_1 \colon \mathcal{M}_1 \to \mathcal{M}_0$ the canonical immersion. Let $\Omega_1 = i_1^*\Omega_0$, $\mathcal{K}_1 = \ker \Omega_1$, and $\mathcal{H}_1 = i_1^*\mathcal{H}_0$.   We now define recursively the $(k+1)$-th constraint submanifold as the consistency condition for the equation $i_\Gamma \Omega_k = \d\mathcal{H}_k$, that is,
$$
\mathcal{M}_{k+1}  = \{ \chi \in \mathcal{M}_k \mid \langle Z_k(\chi) , \d\mathcal{H}_k(\chi) \rangle = 0, \, \,  \forall Z_k \in \mathcal{K}_k \} \, \qquad k \geq 1 \, ,
$$
and $i_{k+1}\colon \mathcal{M}_{k+1} \to \mathcal{M}_k$ is the canonical embbeding (assuming that $\mathcal{M}_{l+1}$ is a regular submanifold of $\mathcal{M}_k$), and  $\Omega_{k+1} = i_{k+1}^*\Omega_k$, $\mathcal{K}_{k+1} = \ker \Omega_{k+1}$ and $\mathcal{H}_{k+1} = i_{k+1}^*\mathcal{H}_k$.

The algorithm stabilizes if there is an integer $r> 0$ such that 
$\mathcal{M}_{r} = \mathcal{M}_{r+1}$.
We refer to this $\mathcal{M}_r$ as the final constraints submanifold and we denote it by $\mathcal{M}_\infty$. Letting $i_\infty\colon \mathcal{M}_\infty \to \mathcal{M}_0$ denote the canonical immersion, we define,
$$
\Omega_\infty = i_\infty^*\Omega_0, \qquad \mathcal{K}_\infty = \ker \Omega_\infty\, , \qquad \mathcal{H}_\infty = i_\infty^*\mathcal{H}_0 \, .
$$
Notice that the presymplectic system $(\mathcal{M}_\infty, \Omega_\infty,  \mathcal{H}_\infty )$ is always consistent, that is, the dynamical equations defined by $i_\Gamma \Omega_\infty = d\mathcal{H}_\infty$ will always have solutions on $\mathcal{M}_\infty$.  The solutions will not be unique if $\mathcal{K}_\infty \neq 0$, hence the integrable distribution $\mathcal{K}_\infty$ will be called the ``gauge'' distribution of the system, and its sections (that will necessarily close a Lie algebra), the ``gauge'' algebra of the system.\\
    
In the particular theories considered in this work we found that $\mathcal{M}_{\infty}=\mathcal{M}_1=\mathcal{C}$ and we do not needed to go beyond the first step of the algorithm to obtain the final constraints submanifold.\\ 

The quotient space $\mathcal{R} = \mathcal{M}_\infty / \mathcal{K}_\infty$, provided it is a smooth manifold,  inherits a canonical symplectic structure $\omega_\infty$ such that $\pi_\infty^*\omega_\infty = \Omega_\infty$, where $\pi_\infty \colon \mathcal{M}_\infty \to \mathcal{R}$ is the canonical projection. We will refer to it as the reduced phase space of the theory.   Notice that the Hamiltonian $\mathcal{H}_\infty$ also passes to the quotient and we will denote its projection by $h_\infty$ i.e., $\pi_\infty^* h_\infty = \mathcal{H}_\infty$.  

Thus the Hamiltonian system $(\mathcal{R}, \omega_\infty, h_\infty)$ will provide the canonical picture of the theory at the boundary and its quantization will describe the states and dynamics of the theory with respect to observers sitting at the boundary $\partial M$.   

 Of course all the previous constructions depend on the boundary $\partial M$ of the manifold $M$.  For instance, if we assume that $M$ is a globally hyperbolic space-time of the form $M \cong [t_0,t_1] \times \Sigma$, then $\partial M = \{t_0\}\times \Sigma \cup \{t_1\}\times \Sigma$.  But if we use a different Cauchy surface $\Sigma'$, the boundary of our space-time will vary and we will get a new reduced phase space $(\mathcal{R}', \omega', h')$ for the theory.  However in this case it is easy to show that there is a canonical symplectic diffeomorphism $S\colon \mathcal{R} \to \mathcal{R}'$ such that $h = S^*h'$. (Recall that in such case there will exist a canonical diffeomorphism $\Sigma \to \Sigma'$ that will eventually induce the map $S$ above.)

Recall that $\Pi (\mathcal{EL}) \subset \mathcal{C}$. We easily show then that after the reduction to $\mathcal{R}$, the reduced submanifold of boundary values of Euler-Lagrange solutions of the theory, $\widetilde{\Pi}(\mathcal{E}\mathcal{L})$, 
is an isotropic submanifold, now of the reduced phase space.   


\begin{theorem}\label{reduction_theorem}
The reduction $\widetilde{\Pi}(\mathcal{EL})$  of the submanifold of Euler-Lagrange fields of the theory is an isotropic submanifold of the reduced phase space $\mathcal{R}$ of the theory.
\end{theorem}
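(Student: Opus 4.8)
The plan is to reduce the statement to Proposition \ref{isotropicEL}, which already establishes that $\Pi(\mathcal{EL})$ is isotropic in $(T^*\mathcal{F}_{\partial M}, \omega_{\partial M})$, by exhibiting a symplectic map that relates the reduced phase space $\mathcal{R}$ back to $T^*\mathcal{F}_{\partial M}$. Throughout I write $\Pi$ for the full boundary restriction $\chi = (\Phi,P)\mapsto (\varphi,p,\beta)\in\mathcal{M}$ (so that $\varrho\circ\Pi$ recovers the map $(\varphi,p)$ of Section \ref{sec:cotangent_boundary}), and I use that the boundary data of a genuine solution of the Euler-Lagrange equations is consistent for every value of the normal parameter, hence survives the whole Presymplectic Constraints Algorithm; thus $\Pi(\mathcal{EL})\subset\mathcal{M}_\infty$ and $\widetilde\Pi(\mathcal{EL}) = \pi_\infty(\Pi(\mathcal{EL}))$ is well defined.

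First I would identify the characteristic distribution of the final constraint submanifold. Since $\Omega = \varrho^*\omega_{\partial M}$, one has $\Omega_\infty = i_\infty^*\Omega = (\varrho\circ i_\infty)^*\omega_{\partial M}$, and because $\omega_{\partial M}$ is nondegenerate this yields
$$
\mathcal{K}_\infty = \ker\Omega_\infty = \ker(\varrho\circ i_\infty)_* = \ker\varrho_* \cap T\mathcal{M}_\infty .
$$
In particular $\mathcal{K}_\infty\subset\ker\varrho_*$, so $\varrho|_{\mathcal{M}_\infty}$ is constant along the leaves of $\mathcal{K}_\infty$ and descends to a map $\bar{\varrho}\colon\mathcal{R}\to T^*\mathcal{F}_{\partial M}$ with $\bar{\varrho}\circ\pi_\infty = \varrho|_{\mathcal{M}_\infty}$. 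Moreover $\bar{\varrho}$ is an immersion, precisely because the quotient has divided out the entire kernel of $(\varrho|_{\mathcal{M}_\infty})_*$.

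Next I would check that $\bar{\varrho}$ is symplectic, i.e. $\bar{\varrho}^*\omega_{\partial M} = \omega_\infty$. This follows by pulling back along the surjective submersion $\pi_\infty$: on one hand $\pi_\infty^*(\bar{\varrho}^*\omega_{\partial M}) = (\varrho|_{\mathcal{M}_\infty})^*\omega_{\partial M} = i_\infty^*\Omega = \Omega_\infty$, and on the other $\pi_\infty^*\omega_\infty = \Omega_\infty$ by the definition of the reduced form; since $\pi_\infty^*$ is injective on forms the two agree. Finally, from $\bar{\varrho}\circ\pi_\infty = \varrho|_{\mathcal{M}_\infty}$ we obtain $\bar{\varrho}(\widetilde\Pi(\mathcal{EL})) = \varrho(\Pi(\mathcal{EL}))$, which is exactly the isotropic submanifold of Proposition \ref{isotropicEL}. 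Hence for tangent vectors $v,w$ to $\widetilde\Pi(\mathcal{EL})$ we get $\omega_\infty(v,w) = \omega_{\partial M}(\bar{\varrho}_*v,\bar{\varrho}_*w) = 0$, the last equality because $\bar{\varrho}_*v,\bar{\varrho}_*w$ are tangent to $\varrho(\Pi(\mathcal{EL}))$, on which $\omega_{\partial M}$ vanishes. This is precisely the isotropy claim.

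The main obstacle is not the symplectic bookkeeping above but the two regularity points that make it meaningful: verifying that $\Pi(\mathcal{EL})$ actually lands inside $\mathcal{M}_\infty$ rather than merely inside the primary constraints $\mathcal{C} = \mathcal{M}_1$, and that $\widetilde\Pi(\mathcal{EL})$ is a genuine submanifold of $\mathcal{R}$. The first requires arguing that a solution defined on the whole collar, not just a formal initial datum, automatically satisfies every secondary constraint generated by the algorithm; in the examples treated here the algorithm stabilizes at the first step, so $\mathcal{M}_\infty = \mathcal{C}$ and this is immediate, but in general it rests on the consistency built into the PCA. The second is the smoothness hypothesis already assumed for $\Pi(\mathcal{EL})$, together with the requirement that $\pi_\infty$ restrict to a submersion onto $\widetilde\Pi(\mathcal{EL})$, which is what legitimizes transporting the vanishing of $\omega_{\partial M}$ to the vanishing of $\omega_\infty$ on the quotient.
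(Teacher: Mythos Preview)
Your argument contains a genuine error at the step where you identify the characteristic distribution. You claim that nondegeneracy of $\omega_{\partial M}$ forces
\[
\mathcal{K}_\infty \;=\; \ker\Omega_\infty \;=\; \ker(\varrho\circ i_\infty)_* \;=\; \ker\varrho_* \cap T\mathcal{M}_\infty ,
\]
but only the inclusion $\ker\varrho_*\cap T\mathcal{M}_\infty \subset \mathcal{K}_\infty$ holds in general (the paper records exactly this inclusion in Section~\ref{reduction_gauge}). Nondegeneracy of $\omega_{\partial M}$ tells you that $(\varrho\circ i_\infty)_*v=0$ implies $v\in\ker\Omega_\infty$; the converse would require $(\varrho\circ i_\infty)_*(T_\zeta\mathcal{M}_\infty)$ to be the whole tangent space of $T^*\mathcal{F}_{\partial M}$, which fails whenever $\varrho(\mathcal{M}_\infty)$ is a proper (e.g.\ coisotropic) submanifold. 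Concretely, in the Poisson $\sigma$-model the paper computes $\ker\Omega_1 = \mathcal{K}\oplus\ker\Omega_{\mathcal{C}}$ with $\ker\Omega_{\mathcal{C}}$ spanned by the Hamiltonian vector fields $X_a$ of the constraints, so $\mathcal{K}_\infty$ is strictly larger than $\ker\varrho_*\cap T\mathcal{M}_\infty$. In such cases $\varrho$ does \emph{not} descend to a map $\bar\varrho\colon\mathcal{R}\to T^*\mathcal{F}_{\partial M}$ (it is not constant along the full $\mathcal{K}_\infty$-leaves), and your symplectic immersion $\bar\varrho$ simply does not exist.

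The paper avoids this obstruction by working entirely with pull-backs rather than trying to push $\varrho$ down. It uses $\pi_\infty^*\omega_\infty = \Omega_\infty = i_\infty^*\varrho^*\omega_{\partial M}$ and observes that $\varrho^*\omega_{\partial M}$ already vanishes on $\Pi(\mathcal{EL})\subset\mathcal{M}_\infty$, because $\varrho$ carries $\Pi(\mathcal{EL})$ onto the isotropic submanifold of Proposition~\ref{isotropicEL}. Hence $\pi_\infty^*\bigl(\omega_\infty|_{\widetilde\Pi(\mathcal{EL})}\bigr)=0$, and surjectivity of $\pi_\infty$ gives the conclusion. Your final paragraph and your last displayed computation are essentially this correct argument; what needs to be excised is the construction of $\bar\varrho$, which is neither needed nor available.
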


\begin{proof}      
It is clear that $\Pi(\mathcal{EL}) \subset \Pi(\mathcal{EL}_\epsilon) \subset \mathcal{M}_{\infty}$ where $\mathcal{EL}_\epsilon = \mathcal{EL}_{U_\epsilon}$ are 
the critical points of the action $S_\epsilon$, i.e., solutions of the Euler-Lagrange equations of the theory on $U_\epsilon$.

 The reduction $\widetilde{\Pi}(\mathcal{EL}) = \Pi(\mathcal{EL})/ (\mathcal{K}_\infty\cap T\, \Pi (\mathcal{EL}))$ of the isotropic submanifold  $ \Pi(\mathcal{EL})$ to the reduced phase space $\mathcal{R} = \mathcal{M}_\infty / \mathcal{K}_\infty$ is isotropic because $\pi_\infty^*\omega_\infty = \Omega_\infty$, hence $\pi_\infty^* (\omega_\infty\mid_{\widetilde{\Pi}(\mathcal{EL})}) = (\pi_\infty^* \omega_\infty)\mid_{\Pi(\mathcal{EL})} = \varrho^*\mathrm{d} \alpha \mid_{\Pi(\mathcal{EL})} = 0$. 
\end{proof}


\subsection{Reduction at the boundary and gauge symmetries}\label{reduction_gauge}

If our theory $(P(E), \Theta_H)$ has $\mathcal{G}$ as a covariant symmetry group, then because of Noether's theorem, Thm. \ref{Noether}, and Eq. \eqref{boundary_conservation}, we have that $ J_\xi[\chi]$, with $\Pi (\chi) = (\varphi, p)$ a closed $(m-1)$-form. Hence $\int_{\partial M} i^*J_\xi[\chi] = 0$, and so 
$$
\langle \mathcal{J}(\varphi, p), \xi \rangle = \int_{\partial M}  i^*J_\xi[\chi] = 0 \, .
$$
Then $\mathcal{J}(\Pi (\chi )) = 0$, and therefore,
$$
\Pi (\mathcal{EL}) \subset \mathcal{J}^{-1} (\mathbf{0}) \, .
$$ 

  There is a natural reduction of the theory at the boundary defined by the covariant symmetry $\mathcal{G}$ for the following reason:  Provided that the value $\mathbf{0}$ of the moment map $\mathcal{J}$ is weakly regular, the submanifold $\mathcal{J}^{-1}(\mathbf{0}) \subset T^*\mathcal{F}_{\partial M}$ is a coisotropic submanifold and the characteristic distribution $\ker i_0^*\omega_{\partial M}$ of the pull-back of the canonical symplectic form on $T^*\mathcal{F}_{\partial M}$ to it, is the distribution defined by the orbits of the group $\mathcal{G}_{\partial M}$.  From Prop. \ref{boundary_moment},  $\mathcal{J}$ is the moment map of the canonical lifting of the action of the group $\mathcal{G}_{\partial M}$ on $\mathcal{F}_{\partial M}$.
  
 From the above and by Thm \ref{presymplectic_equation}, Lemma $3.4$ follows easily. 

\begin{lemma}
$$
\mathcal{J}^{-1}(\mathbf{0})  \subset   \varrho (\mathcal{M}_\infty)\, .
$$
\end{lemma}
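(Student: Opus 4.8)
The plan is to reduce the claimed inclusion to a fibrewise solvability statement for the primary constraint, and then to settle that statement by exploiting the gauge invariance of the boundary Hamiltonian $\mathcal{H}$.

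First I would observe that for the present presymplectic system the constraints algorithm terminates at once, $\mathcal{M}_\infty = \mathcal{M}_1 = \mathcal{C}$. Indeed $\mathcal{K}_1 = \ker \Omega_1 = \ker \varrho_* \cap T\mathcal{C}$ is spanned by the $\beta$-vertical fields $Z = C_a^k\,\delta/\delta\beta_a^k$ that are tangent to $\mathcal{C}$, and for any such $Z$ one has $\langle Z, \d\mathcal{H}_1\rangle = C_a^k\,\delta\mathcal{H}/\delta\beta_a^k$, which vanishes identically on $\mathcal{C} = \{\delta\mathcal{H}/\delta\beta = 0\}$; hence $\mathcal{M}_2 = \mathcal{M}_1$ and $\varrho(\mathcal{M}_\infty) = \varrho(\mathcal{C})$. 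By Theorem \ref{presymplectic_equation} a point $(\varphi,p,\beta)$ lies in $\mathcal{M}_\infty = \mathcal{C}$ exactly when $\delta\mathcal{H}/\delta\beta_a^k = -\partial_k\varphi^a + \partial H/\partial\beta_a^k = 0$. Thus the lemma is equivalent to the assertion that every $(\varphi,p) \in \mathcal{J}^{-1}(\mathbf{0})$ can be completed by some $\beta$ to a solution of this equation, i.e. to the fibrewise solvability in $\beta$ of $\partial H/\partial\beta_a^k = \partial_k\varphi^a$.

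Next I would analyze this solvability. When $H$ is regular the fibre map $\beta \mapsto \partial H/\partial\beta$ is a local diffeomorphism, so a solution always exists, $\varrho(\mathcal{C}) = T^*\mathcal{F}_{\partial M}$, and the inclusion is trivial. In the degenerate (gauge) case the obstruction to solving for $\beta$ lies in the cokernel of this map, i.e. along the degeneracy directions of the Hessian $\partial^2 H/\partial\beta_a^i\partial\beta_b^j$, and the point is to identify these directions and the associated compatibility conditions with the moment map. Since the theory has $\mathcal{G}$ as a covariant symmetry, $\mathcal{H}$ is $\mathcal{G}_{\partial M}$-invariant; differentiating this invariance along the generator $\xi_{\mathcal{F}_{\partial M}} = (\xi\circ\varphi)\,\delta/\delta\varphi$ yields a Noether identity relating $\delta\mathcal{H}/\delta\varphi$, $\delta\mathcal{H}/\delta p$ and $\delta\mathcal{H}/\delta\beta$. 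Combined with the explicit form of $\mathcal{J}$ from Proposition \ref{boundary_moment}, namely $\langle\mathcal{J}(\varphi,p),\xi\rangle = \int_{\partial M} p_a\,\xi^a(\varphi)\,\mathrm{vol}_{\partial M}$, this exhibits the component of the constraint along each degeneracy direction $\xi\in\mathfrak{g}_{\partial M}$, after the (boundaryless) integration by parts on $\partial M$, as precisely $\langle\mathcal{J}(\varphi,p),\xi\rangle$. Consequently on $\mathcal{J}^{-1}(\mathbf{0})$ every obstructing component vanishes, the fibre equation is solvable, and $(\varphi,p)\in\varrho(\mathcal{C}) = \varrho(\mathcal{M}_\infty)$.

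The step I expect to be the main obstacle is exactly this last identification: showing that the cokernel of $\beta\mapsto\partial H/\partial\beta$ is spanned by the gauge directions $\mathfrak{g}_{\partial M}$ and that the corresponding compatibility conditions coincide with the components of $\mathcal{J}$. This is the familiar ``secondary constraint $=$ Gauss law $=$ moment map'' mechanism, and what makes it work in the present generality is the coisotropy of $\mathcal{J}^{-1}(\mathbf{0})$ recorded above, whose characteristic distribution is the $\mathcal{G}_{\partial M}$-orbits: this is what forces the Hessian degeneracy of $H$ in the transversal momenta $\beta$ to be aligned with the gauge directions. Once the obstruction is seen to vanish, the actual construction of $\beta$ is routine, given that $\mathbf{0}$ is a weakly regular value of $\mathcal{J}$.
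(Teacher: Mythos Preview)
Your approach is quite different from the paper's, and the step you yourself flag as the ``main obstacle'' is a genuine gap rather than a routine verification.

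The paper does not attempt a fibrewise lifting argument at all. Its proof is dynamical: starting from $\zeta\in\mathcal{M}_\infty$, it takes an integral curve $\gamma$ of the presymplectic dynamics, notes that $\varrho(\gamma)$ is the projection of a solution of the Euler--Lagrange equations on $U_\epsilon$, and invokes the $\mathcal{G}_{\partial M}$-invariance of $\mathcal{H}$ to conclude that $\varrho(\gamma)$ lies in a level set of $\mathcal{J}$, namely $\mathcal{J}^{-1}(\mathbf{0})$. Observe that this reasoning, as written, actually yields the \emph{reverse} containment $\varrho(\mathcal{M}_\infty)\subset\mathcal{J}^{-1}(\mathbf{0})$; the direction displayed in the lemma does not follow from it, so you should not treat the paper's argument as a template for the inclusion you are trying to prove.

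As for your own argument, two things remain open. First, the identification $\ker\Omega_1=\ker\varrho_*\cap T\mathcal{C}$ is valid only when $\varrho|_{\mathcal{C}}$ is a submersion onto an open set of $T^*\mathcal{F}_{\partial M}$; in the degenerate situation, which is precisely the case of interest, $\ker\Omega_1$ can contain further directions projecting to the characteristic distribution of $\omega_{\partial M}|_{\varrho(\mathcal{C})}$, and these may well generate secondary constraints. The paper itself asserts $\mathcal{M}_\infty=\mathcal{M}_1$ only for the specific examples treated, not in general. Second, and more seriously, the Noether identity from $\mathcal{G}_{\partial M}$-invariance of $\mathcal{H}$ relates $\delta\mathcal{H}/\delta\varphi$, $\delta\mathcal{H}/\delta p$, $\delta\mathcal{H}/\delta\beta$ along the gauge orbit, but it does not by itself show that the image of the fibre map $\beta\mapsto\partial H/\partial\beta$ is cut out precisely by the equations $\langle\mathcal{J}(\varphi,p),\xi\rangle=0$. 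That the Hessian degeneracy of $H$ in the transversal momenta $\beta$ is aligned with $\mathfrak{g}_{\partial M}$ is an additional structural hypothesis on $H$ (satisfied in the paper's examples), not a consequence of the coisotropy of $\mathcal{J}^{-1}(\mathbf{0})$: coisotropy tells you the characteristic distribution of $\mathcal{J}^{-1}(\mathbf{0})$ is given by the gauge orbits, but says nothing about the cokernel of $\beta\mapsto\partial H/\partial\beta$. Without that identification the degenerate case, and hence the lemma, remains unproved.
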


\begin{proof}   If $\mathcal{G}$ is a symmetry group of the Hamiltonian $H$ of the theory, then it is clear that $\mathcal{G}_{\partial M}$ is a symmetry group of the function $\mathcal{H}$, with the canonical action of $\mathcal{G}_{\partial M}$ on the total space of fields at the boundary $\mathcal{M}$.   

 Then if $\zeta \in \mathcal{M}_\infty$, there exists $\Gamma$ at $\zeta$ such that $i_\Gamma \Omega_\infty = \d\mathcal{H}_\infty$ and the integral curve $\gamma$ of $\Gamma$ passing through $\zeta$ lies in $\mathcal{M}_\infty$.  But $\varrho(\gamma) \subset \mathcal{J}^{-1}(\mathbf{0})$, because it is the projection of an integral curve of a solution of Euler-Lagrange equations in $U_\epsilon$. But because the Hamiltonian $\mathcal{H}$ is invariant, the trajectory must lie in a level surface of the moment map $\mathcal{J}$. Hence $\mathcal{J}^{-1}(\mathbf{0}) \subset \varrho (\mathcal{M}_\infty)$.
\end{proof}

 Because, $\mathcal{R} = \mathcal{M}_\infty /\mathcal{K}_\infty$ and $\ker \varrho_* \cap T\mathcal{M}_\infty \subset \mathcal{K}_\infty$, we get that $\mathcal{M}_\infty /\mathcal{K}_\infty \cong \varrho (\mathcal{M}_\infty) / \varrho_*(\mathcal{K}_\infty)$.   Now if we are in the situation where $\varrho(\mathcal{M}_\infty) = \mathcal{J}^{-1}(\mathbf{0})$, then $\mathcal{R} \cong \varrho (\mathcal{M}_\infty) / \varrho_*(\mathcal{K}_\infty) = \mathcal{J}^{-1}(\mathbf{0}) / \ker \omega_{\partial M}\mid_{\mathcal{J}^{-1}(\mathbf{0})}$.   Hence
because of the standard Marsden-Weinstein reduction theorem the reduced phase space of the theory is obtained simply as,
\begin{equation}\label{MWR}
\mathcal{R} \cong  \mathcal{J}^{-1}(\mathbf{0})/ \mathcal{G}_{\partial M} \, .
\end{equation}


\subsection{A simple example: the scalar field}\label{scalar}

We will consider the simple example of a real scalar field on a globally hyperbolic space-time $(M, \eta)$ of dimension $m = 1 +d$ with boundary $\partial M$ a Cauchy surface and hence $M \cong (-\infty, a] \times \partial M$.  The configuration fields of the system are sections of the (real) line bundle $\pi \colon E \to M$, where $\pi$ is projection onto the first factor.  Bundle coordinates will have the form $(x^\mu,u)$, $\mu=0,1,...,d$.

If the bundle $E \to M$ were trivial, $E \cong M \times \mathbb{R}$, the first jet bundle $J^1E$ would be the affine bundle $J^1E \cong T^*M \times \mathbb{R} \rightarrow E$ with bundle coordinates $(x^\mu,u; u_\mu)$, $\mu = 0,1,...,d$.  The covariant phase space $P(E)$, in such case, would be isomorphic to $TM\times \mathbb{R}$ with bundle coordinates  $(x^{\mu},u; \rho^\mu)$.

As explained in Section \ref{sec:multisymplectic}, by using the volume form $\mathrm{vol}_M = \sqrt{| \eta |} \, \d^mx$ defined by the metric $\eta$ (in arbitrary local coordinates $x^\mu$), elements in $P(E)$ can be identified with semi-basic $m$-forms on $E$, $w \in\bigwedge^m_1(E)$, $w = \rho^\mu \d u \wedge \mathrm{vol}_\mu^d + \rho_0 \mathrm{vol}_M$, $\mathrm{vol}_\mu^d = i_{\partial/\partial x^\mu} \mathrm{vol}_M$, after we mod out basic $m$-forms, $\rho_0 \mathrm{vol}_M$.

The space of fields in the bulk, $\mathcal{F}_{P(E)}= \{ \chi = ( \Phi , P)\}$ consists of double sections of the iterated bundle $P(E) \to E \to M$, $\Phi \colon M \to E$, $u = \Phi (x)$, and $P\colon E \to P(E)$, $\rho = P(u)$ that, in the instance of a trivial bundle $E$, can be described as maps $\Phi \colon M \to \mathbb{R}$, the configuration fields, and $(m-1)$-forms, $P = P^{\mu}(x)\mathrm{vol}^d_\mu$, the momenta fields.

The Hamiltonian $H$ of the theory determines a section of the projection $M(E) \to P(E)$ by fixing the variable $\rho_0$ above, i.e., $\rho_0 = -H(x^\mu,u, \rho^\mu)$. One standard choice for $H$ in such case is:
$$
H(x^\mu,u; \rho^\mu) = \frac{1}{2}\eta_{\mu\nu}\rho^\mu  \rho^\nu + V(u) \, ,
$$ 
with $V(u)$ a smooth function on $\mathbb{R}$.   The particular instance of $V(u) = m^2u^2$ gives us the Klein-Gordon system.  

The canonical $m$-form $\Theta$ in $\bigwedge^m_1(E)$ can be pulled back to $P(E)$ along $H$ and takes the form,
$$
\Theta_H = \rho^{\mu} du \wedge \mathrm{vol}^d_{\mu} - H (u) \mathrm{vol}_M \, . 
$$

With the above choice for $H$, the action functional of the theory becomes:
\begin{equation}
S(\Phi,P) =  \int_M \left[ P^\mu(x) \partial_\mu\Phi (x) - \frac{1}{2} \eta_{\mu \nu} P^\mu P^\nu - V(\Phi) \right] \sqrt{|\eta |}\, \d^mx \, .
\end{equation}

The space of boundary fields  $T^*\mathcal{F}_{\partial M} = \{ (\varphi , p) \}$ is given by $\varphi = \Phi\mid_{\partial M}$, $p = P^0\mid_{\partial M}$.
Computing the differential of the action we get,
\begin{eqnarray*}
\mathrm{d} S_{(\Phi, P)} ( \delta \Phi, \delta P ) &=& \int_M  [  \delta P^\mu ( \partial_\mu\Phi  - 
\eta_{\mu \nu} P^\nu) + \delta \Phi (-\frac{1}{\sqrt{|\eta|}} \partial_\mu (P^\mu\sqrt{|\eta |})  \\ &-& V'(\Phi))] \sqrt{|\eta |}\, \d^mx + \int_{\partial M}p \delta \varphi \, \mathrm{vol}_{\partial M} \, ,
\end{eqnarray*}
and the Euler-Lagrange equations of the theory are given by,
\begin{equation}\label{scalar_EL}
\frac{1}{\sqrt{|\eta |}}{\partial}_{\mu}(P^{\mu}\sqrt{|\eta |})+ V'(\Phi) =0 \, , \qquad {\partial}_\mu \Phi - \eta_{\mu\nu}P^\nu = 0 \, 
.
\end{equation}

From the second of the Euler equations we get, $P^\nu = \eta^{\mu\nu} \partial_\mu \Phi$, and
substituting into the first we get 
\begin{equation}\label{Laplace}
\frac{1}{\sqrt{|\eta |}} \partial_\mu(\sqrt{|\eta|} \eta^{\mu\nu} \partial_\nu\Phi)= -V'(\Phi) \, .
\end{equation}
The first term is the Laplace-Beltrami operator of the metric $\eta$, i.e., the d'Alembertian in the case of the Minkowski metric.\\
 
Note that had we instead chosen normal local coordinates on $M$, the volume element in such charts would take the form
 $$
\mathrm{vol}_M = \d x^0 \wedge \d x^1 \wedge \cdots \wedge \d x^d $$
  and then equations $(3.13)$ would just be Hamilton's equations:
 
$$ 
\partial_\mu P^\mu = -\frac{\partial H}{ \partial \Phi } \, \qquad \partial_\mu\Phi = \frac{\partial H}{\partial P^\mu} \, .
$$


\subsubsection*{The evolution picture near the boundary}\label{sec:scalar_boundary}
We consider a collar around the boundary $U_\epsilon = (- \epsilon,0] \times \partial M$ with coordinates $t=x^0$ and $x^i$, $i = 1, \ldots, d$.  We assume that $\eta=-\d t^2 + \eta_{0i}(x)\d t \otimes \d x^i + g_{ij}(x)\d x^i \otimes \d x^j$ and $g = g_{ij}(x) \d x^i\otimes \d x^j$ defines a Riemannian metric on $\partial M$.   Writing again the action functional S restricted to fields $\Phi , P$ defined on $U_{\epsilon}$, we have,
$$
S_\epsilon (\Phi , P) =  \int_{-\epsilon}^0 \d t \int_{\partial M} \mathrm{vol}_{\partial M} \sqrt{|\eta |} (P^0 \partial_0 \Phi + P^i \partial_i \Phi - \frac{1}{2} \eta_{\mu\nu} P^\mu P^\nu - V(\Phi)) \, .
$$
Consider the fields at the boundary $\varphi$ and $p$ defined before and
 $\beta^i = P^i\mid_{\partial M}$.
Also, let $\Delta = \sqrt{|\eta|}/\sqrt{|g|}$.  Then $\sqrt{|\eta |} d^mx = \Delta \, \d t \wedge \mathrm{vol}_{\partial M}$.

Therefore we can write,
\begin{eqnarray*}
S_{\epsilon}({\Phi},P) &=& \int_{-\epsilon}^0 \d t \int_{{\partial}M}\, \mathrm{vol}_{\partial M} \, \Delta\,  [ p \dot{\varphi} + \beta^i \partial_i \phi + \frac{1}{2} p^2 - \eta_{0i} p \beta^i - \frac{1}{2} g_{ij} \beta^i \beta^j - V(\phi)] 
\\ &=& \int_{-\epsilon}^0 \d t\, \, \left[ \langle p, \dot{\varphi} \rangle - \mathcal{H}(\varphi,p,\beta )  \right]\, 
\end{eqnarray*}
where
\begin{equation}\label{Deltaproduct}
\langle p, \dot{\varphi} \rangle = \int_{\partial M} \,  p(x) \dot{\varphi}(x)  \Delta\, \mathrm{vol}_{\partial M} \, ,
\end{equation}
denotes the scalar product on functions on $\partial M$ defined by the volume $\Delta\, \mathrm{vol}_{\partial M}$,
and $\mathcal{H} \colon \mathcal{M} \to \mathbb{R}$ denotes the Hamiltonian function induced from the Hamiltonian $H$ of the theory,
$$
\mathcal{H} (\varphi,p,\beta) =  -\langle \beta, d_{\partial M} \varphi \rangle - \frac{1}{2} \langle p, p \rangle + \langle p, \tilde{\beta} \rangle + \frac{1}{2} \langle \beta,\beta \rangle + \int_{\partial M} V(\varphi) \Delta\, \mathrm{vol}_{\partial M} \, ,
$$
with $\tilde{\beta} = \eta(d/\d t,\beta) = \eta_{i0}\beta^i$, $\langle p, p \rangle$ and $\langle p, \tilde{\beta} \rangle$ defined as in eq. \eqref{Deltaproduct}.  The product $\langle \beta,\beta \rangle$ denotes the scalar product of vector fields defined by the metric $g$, i.e.,
$$
\langle \beta,\beta \rangle = \int_{\partial M} g_{ij} \beta^i(x) \beta^j(x) \Delta\, \mathrm{vol}_{\partial M} \, ,
$$
and $\langle \beta, d_{\partial M} \varphi \rangle$ is the natural pairing between vector fields and 1-forms on $\partial M$, that is
$$
\langle \beta,d_{\partial M} \rangle = \int_{\partial M} \beta^i(x) \partial_i \varphi(x) \Delta\, \mathrm{vol}_{\partial M} \, .
$$

As in Section \ref{sec:dynamical_eqs} we denote the space of all fields at the boundary, 
the dynamical fields $\varphi$, $p$ and the fields $\beta^i$, as 
$\mathcal{M} = T^*\mathcal{F}_{\partial M} \times \mathcal{B} = \{ (\varphi, p; \beta ) \}$ and
Hamilton's equations for $\mathcal{H}$ are given by,
$$
\dot{\varphi} = \frac{\delta \mathcal{H}}{\delta p} = -p + \tilde{\beta}   \, , \qquad
\dot{p} = - \frac{\delta \mathcal{H}}{\delta \varphi} = -V'(\varphi)-\mathrm{div\,}\beta  \, , 
$$
together with the constraint equation obtained from the variation of $S_\epsilon$ with respect to $\beta$,
$$
0 = \frac{{\delta}\mathcal{H}}{{\delta}{\beta}^i} = -\partial_i \phi + \eta_{0i}p + g_{ij} \beta^j \, .
$$
Thus we get, 
\begin{eqnarray*} 
\dot{p} &=& - \mathrm{div\,}\beta - V'(\varphi) \\ 
\dot{\varphi} &=& - p + \tilde{\beta} 
\end{eqnarray*}
and the constraints equations,
\begin{equation}\label{scalar_constraint}
- \d_{\partial M} \varphi + p^\flat + \beta^\flat = 0 \, ,
\end{equation}
where $\beta^\flat = g(\beta, \cdot)$ is the 1-form associated to the vector $\beta$ by the metric $g$, and $p^\flat$ is the 1-form associated to the vector $p\, \partial / \partial t$.
   
  Let $\mathcal{C}= \{ (\varphi,p,\beta)\in\mathcal{M} \mid \delta \mathcal{H}/ \delta \beta = 0 \}$, the submanifold of $\mathcal{M}$ defined by the constraints \eqref{scalar_constraint}, and let $\varrho \colon \mathcal{M} \to T^*\mathcal{F}_{\partial M}$ denote the canonical projection.  We can solve for $\beta^i$ as a function of $\varphi$ and $p$ in the constraint equation \eqref{scalar_constraint}, obtaining
 $\beta^j = g^{ij}(\partial_i \varphi - g_{0i}p)$ or, more intrisically,
 $$
 \beta = \d_{\partial M}\varphi^\sharp - p \frac{\partial}{\partial t} \, ,
 $$
 where $d_{\partial M}\varphi^\sharp$ is the vector field associated to the 1-form $d_{\partial M}\varphi$ by means of the metric $g$.
Thus the restriction of $\varrho$ to $\mathcal{C}$ is a diffeomorphism onto $T^*\mathcal{F}_{\partial M}$. If we denote by $\Omega$ the pull-back $\varrho^*\omega_{\partial M}$ to $\mathcal{M}$ of the canonical symplectic form on $T^*\mathcal{F}_{\partial M}$ and by $\Omega_{\mathcal{C}}$ its restriction to the submanifold $\mathcal{C}$,
the restriction of the canonical projection $\varrho \colon \mathcal{M} \to T^*\mathcal{F}_{\partial M}$ to $\mathcal{C}$ provides a symplectic diffeomorphism $(\mathcal{C}, \Omega_\mathcal{C}) \cong (T^*\mathcal{F}_{\partial M}, \omega_{\partial M})$.  

Moreover, the projection $\Pi (\mathcal{EL})$ of the space of solutions to the Euler-Lagrange equations \eqref{scalar_EL}  to the boundary, defines, wherever it is a smooth submanifold, an isotropic submanifold of $T^*\mathcal{F}_{{\partial}M}$, as shown in Thm. \ref{Lagrangian_sub}.    
 $\Pi (\mathcal{EL})$ is not necessarily a Lagrangian submanifold because in general the Dirichlet problem defined by boundary conditions $(\varphi, p)$ for Eq. \eqref{Laplace} doesn't have a solution.    The situation is different in the Euclidean case, i.e., if $(M, \eta)$ is a Riemannian manifold, the Laplace-Beltrame operator would be elliptic and the Dirichlet problem would always have a unique solution.  In such case the space $\Pi (\mathcal{EL})$ would certainly be a Lagrangian submanifold of $T^*\mathcal{F}_{\partial M}$.

\subsection{Another example: The Poisson $\sigma$-model}

We will illustrate the previous ideas as they apply to the case of the Poisson $\sigma$-model.
We note that the Poisson $\sigma$-model  (P$\sigma$M for short) was analyzed in depth by A. Cattaneo \emph{et al}  \cite{Ca00} and provides a quantum field theory interpretation of Konsevitch's quantization of Poisson structures. We will just concentrate on its first order covariant Hamiltonian formalism along the lines described earlier in this paper.   

We will consider a Riemann surface $\Sigma$ with smooth boundary $\partial \Sigma \neq \emptyset$.  We may assume that $\Sigma$ also carries a Lorentzian metric. This will not play a significant role in the discussion and we can stick to a Euclidean picture by selecting a Riemannian metric on $\Sigma$.  Local coordinates on $\Sigma$ will be denoted as always by $x^\mu$, $\mu = 0,1$.  

Let $(P, \Lambda)$ be a Poisson manifold with local coordinates $u^a$, $a = 1, \ldots, r$.   The Poisson tensor $\Lambda$ will be expressed in local coordinates as
$$   
\Lambda = \Lambda^{ab}(y) \frac{\partial}{\partial u^a} \wedge \frac{\partial}{\partial u^b} \, , 
$$
and it defines a Poisson bracket on functions $f,g$ on $P$,
$$
\{ f, g \} = \Lambda (\d f, \d g) \, .
$$
The bundle $E$ of the theory, will be the trivial bundle $E = \Sigma \times P$ with projection $\pi$, the canonical projection on the first factor.  The first jet bundle $J^1E$ is the affine bundle over $E$ modeled on $VE \otimes T^*\Sigma$, however in this case, because of the triviality of $E$, we have that $VE \cong TP$ and the affine bundle is trivial.  Now the dual bundle $ P(E)$ will be naturally identified with the vector bundle over $E$ modeled on $T^*P \otimes T\Sigma$, that is, its sections will be vector fields on $\Sigma$ with values on 1-forms on $P$. However as shown in the general case, we may use a volume form $\mathrm{vol}_\Sigma$ on $\Sigma$ (for instance that provided by a Riemannian metric) to identify elements on $P(E)$ with 1-semibasic forms on $E$, i.e.
$$
P = P_a^\mu \,  \d u^a \wedge i_{\partial / \partial x^\mu} \mathrm{vol}_\Sigma \, ,
$$
and the corresponding double sections $\chi = (\Phi, P)$ of $P(E) \to E \to \Sigma$, with 1-forms $\eta$ on $\Sigma$ with values on 1-forms on $P$ along the map $\Phi \colon \Sigma \to P$, that is,
$$
P \colon T\Sigma \to T^*P \, , \qquad \tau_P \circ P = \Phi \, .
$$
The covariant Hamiltonian of the theory will be given by,
$$
H ( x, u ; P) = \frac{1}{2} \Lambda^{ab}(u) (P_a^\mu, P_b^\nu) \epsilon_{\mu\nu} \, 
$$
with $\mathrm{vol}_{\Sigma} = \epsilon_{\mu\nu} \d x^\mu\wedge \d x^\nu$.
The action of the theory is thus
\begin{equation}\label{action_PsM}
S_P (\chi ) = \int_\Sigma \chi^*\Theta_H = \int_{\Sigma} \left[ P_a^\mu \partial_\mu \Phi^a - H \right] \mathrm{vol}_{\Sigma}.
\end{equation}
Notice that $P_a = P_a^\mu \d x_\mu$ and that $\d x_\mu = i_{\partial / \partial x^\mu} \mathrm{vol}_\Sigma$ is a 1-form on $\Sigma$.  $H \mathrm{vol}_{\Sigma}$ can be expressed as
$$
H ( x, u ; P) \mathrm{vol}_{\Sigma}= \frac{1}{2} \Lambda^{ab}(u) (P_a \wedge P_b)
$$
and the first term in the action becomes simply $P_a\wedge d\Phi^a$. Thus the action of the theory is simply given as
$$
S_P (\Phi, P) = \int_{\Sigma} P_a (x) \d\Phi^a (x) - \frac{1}{2} \Lambda^{ab}(\Phi (x)) (P_a(x) \wedge P_b(x)) \, ,
$$
or more succinctly, 
$$
S_P (\Phi, P) = \int_{\Sigma} \langle P \wedge \d\Phi \rangle - \frac{1}{2} (\Lambda \circ \Phi) (P \wedge P ) \, ,
$$
where $\langle \cdot, \cdot \rangle$ now denotes the natural pairing between $T^*P$ and $TP$.

To get the evolution picture of the theory near the boundary, we choose a collar $U_\epsilon \cong (-\epsilon, 0 ]\times \partial \Sigma$ around the boundary $\partial \Sigma$ and we expand the action $S_P$ of the theory, eq. \eqref{action_PsM} restricted to fields defined on $U_\epsilon$.    We obtain,
$$
S_{P, U_\epsilon} = \int_{-\epsilon} \d t \int_{\partial \Sigma} \d u \left[ p_a \dot{\varphi}^a + \beta_a \acute{\varphi}^a - \Lambda^{ab} p_a \beta_b \right] \, ,
$$
where the boundary fields $p_a$ and $\beta_a$ are defined as before,
$$
p_a = P_a^0 \mid_{\partial \Sigma} \, , \qquad \beta_a = P_a^1\mid_{\partial \Sigma} \, .
$$
The volume form and the coordinate $u$ along the boundary $\partial \Sigma$ have been chosen so that $\mathrm{vol_\Sigma} = \d t \wedge \d u$, and $\acute{\varphi}^a$ denotes $\partial \varphi^a /\partial u$.

As before, the cotangent bundle of boundary fields is $T^*\mathcal{F}_{\partial \Sigma}$ with the canonical form $\alpha = p_a \delta \varphi^a$.  In order to analyze the consistency of the Hamiltonian theory at the boundary, we introduce the extended phase space $\mathcal{M} = T^*\mathcal{F}_{\partial \Sigma} \times \mathcal{B}$, with its presymplectic structure $\Omega = \delta \varphi^a \wedge \delta p_a$ and the boundary Hamiltonian
$$
\mathcal{H} (\varphi, p, \beta) = - \beta_a \acute{\varphi}^a + \Lambda^{ab}(\varphi)p_a\beta_b \, .
$$
Solving for the Euler-Lagrange equations we obtain two evolution equations,

$$
\dot{\varphi}^aÊ= \frac{\delta \mathcal{H}}{\delta p_a} = \Lambda^{ab} \beta_b \, , \qquad \dot{p_a} = -\frac{\delta \mathcal{H}}{\delta \varphi^a} = - \acute{\beta}_a - \frac{\partial \Lambda^{bc}}{\partial \xi^a} p_b\beta_c \, , 
$$
and one constraint equation equation, 
\begin{equation}\label{constraint_PsM}
0 = \frac{\delta \mathcal{H}}{\delta \beta_a} = - \acute{\varphi}^a - \Lambda^{ab}(\varphi)p_b \, .
\end{equation}
Thus the first constraints submanifold $\mathcal{M}_1$ will be defined by eq. \eqref{constraint_PsM}.  Notice the constraint defining $\mathcal{M}_1$ does not depend on the fields $\beta^a$, thus $\mathcal{M}_1$ is a cylinder along the projection $\varrho$ over its projection $W = \varrho (\mathcal{M}_1) \subset T^* \mathcal{F}_{\partial M}$.

Notice that $\Omega = \varrho^*\omega_{\partial M}$ is such that $\ker \Omega = \mathcal{K} = \{ \delta/ \delta \beta^a \} $.  Thus, $\mathcal{K} \subset \ker \Omega_1$, where $\Omega_1$ is the restriction of $\Omega$ to $\mathcal{M}_1$.  It is easy to check that $\ker \Omega_1 = \mathcal{K} \oplus \ker \Omega_{\mathcal{C}}$, where $\Omega_{\mathcal{C}}$ is the pull-back of $\omega_{\partial M}$ to $\mathcal{C}$.

The submanifold $ W \subset T^* \mathcal{F}_{\partial M}$ is defined by the constraint 
$$
\Psi^a(\varphi, p) = - \acute{\varphi}^a - \Lambda^{ab}(\varphi ) p_b \, ,
$$
whose Hamiltonian vector field $X_a$, i.e. $X_a$ such that
$$
i_{X_a} \omega_{\partial M} = \d\Psi^a \, ,
$$
is given by
$$
X_a (\varphi, p) = \Lambda^{ab}(\varphi) \frac{\delta}{\delta \varphi^b} - \left( \partial_u \delta_a^c- p_b \frac{\partial \Lambda^{ab}}{\partial \varphi^c}  \right) \frac{\delta}{\delta p_c}  \, .
$$
 A simple computation shows that
$$
X_a (\Psi^b)\mid_{\mathcal{C}} = 0 \, .
$$
Hence $TW^\perp \subset TW$ and consequently, not only $W$, but also $\mathcal{M}_1$ are coisotropic submanifolds.( In describing $\mathcal{M}_1$ as a coisotropic submanifold of the presymplectic manifold $\mathcal{M}$ we mean simply that $T\mathcal{M}_1^\perp \subset T\mathcal{M}_1$.)

The stability of the constraints shows that the PCA algorithm stops at $\mathcal{M}_1$.   Then the reduced (or physical) phase space of the theory is
$$
\mathcal{R} = \mathcal{M}_1/\ker \Omega_1 \cong \mathcal{C}/ \mathrm{span}\{X_a \} \, .
$$
The reduced phase space is a symplectic manifold, that in this case happens to be finite-dimensional.

In some particular cases it can be computed explicitly (for instance $\Sigma = [0,1]\times [0,1]$ with appropriate boundary conditions).   In some instances it happens to inherit a groupoid structure that becomes the symplectic groupoid integrating the Poisson manifold $P$ \cite{Ca01}.


\section{Yang-Mills theories on manifolds with boundary as a covariant Hamiltonian field theory}\label{sec:Yang-Mills}


\subsection{The multisymplectic setting for Yang-Mills theories}\

Recall from the introduction, $(M, \eta)$ is an oriented smooth manifold of dimension $m = 1 + d$ with boundary $\partial M \neq \emptyset$. It carries either a Riemannian or a Lorentzian  metric $\eta$, in the later case of signature $(-+\cdots +)$ and such that the connected components of $\partial M$ are space-like submanifolds, that is, the restriction $\eta_{\partial M}$ of the Lorentzian metric to them is a Riemannian metric.   

Yang-Mills fields are principal connections $A$ on some principal fiber bundle $\rho \colon P \to M$ with structural group $G$.   For clarity in the exposition we are going to make the assumption that $P$ is trivial (which is always true locally), i.e., $P \cong M \times G  \to M$ where (again, for simplicity) $G$ is a compact semi-simple Lie group with Lie algebra $\mathfrak{g}$.

Under these assumptions, principal connections on $P$ can be identified with $\mathfrak{g}$-valued 1-forms on $M$, i.e., with sections of the bundle $E=T^*M \otimes \mathfrak{g} \longrightarrow M$.  Local bundle coordinates in the bundle $E \to M$ will be written as $(x^\mu, A_\mu^a)$, $\mu = 1, \ldots, m$, $a= 1, \ldots, \dim\mathfrak{g}$, where 
$A = A_\mu^a \xi_a \in \mathfrak{g}$ with ${\xi}_a$ a basis of the Lie algebra $\mathfrak{g}$.  Thus, a section of the bundle can be written as 
\begin{equation}\label{connectionA}
A(x) = A^a_{\mu}(x)\, \d x^{\mu}{\otimes}{\xi}_a \, .
\end{equation}
 
The covariant Hamiltonian formalism will be formulated by considering the bundle $P(E)$, the affine dual of the first jet bundle $J^1E$.   Let us recall from the general discussion on Sect. \ref{sec:general}, that $J^1E$ is an affine bundle modeled on the vector bundle $T^*M\otimes VE \cong T^*M\otimes T^*M \otimes \mathfrak{g}$.   The affine dual of $J^1E$ can thus be modeled on the vector bundle $TM\otimes TM \otimes \mathfrak{g}^*$.   

The multisymplectic formalism is described in the manifold $P(E)$ whose elements can be identified with 1-semibasic $m$-forms
$$
P =  P^{\mu \nu}_a{\d A}^a_{\mu} \wedge \d^{m-1}x_{\nu} \, ,
$$
where $\\d^{m-1}x_\nu = i_{\partial/\partial x^\nu} \mathrm{vol}_\eta$ and $\mathrm{vol}_M$ is
the canonical volume form on $M$ defined by the metric $\eta$.  Thus the fields of the theory in the multisymplectic picture are provided by sections $(A,P)$ of the double bundle $P(E)\to E \to M$.

We will formulate our theory directly in terms of the natural fields $A, P$, and we will write the action functional following the general principle, eq. \eqref{action_phip}:
\begin{equation}\label{ymap}
S_{\mathrm{YM}}(A,P) = \int_M P_a^{\mu\nu} \d A_\mu^a \wedge \d x^{m-1}_\nu - H(A,P) \mathrm{vol}_M \, .
\end{equation}
The Hamiltonian function is defined as,
\begin{equation}\label{hamiltonian}
H(A,P) = \frac{1}{2} \epsilon_{bc}^aP^{\mu\nu}_a A_\mu^bA_\nu^c + \frac{1}{4}P^{\mu\nu}_a P_{\mu\nu}^a \, ,
\end{equation}
where the indexes $\mu\nu$  ($a$) in $P_a^{\mu\nu}$ have been lowered (raised) with the aid of the Lorentzian metric $\eta$ (the Killing-Cartan form on $\mathfrak{g}$, respect.). 
Expanding the right hand side of eq. \eqref{ymap}, we get\footnote{The minus sign in front comes form the expansion of $P_a^{\mu\nu} dA_\mu^a \wedge \d x^{m-1}_\nu$, that gives $P^{\mu\nu}_a (\partial_\nu A_\mu^a - \partial_\mu A_\nu^a) \mathrm{vol}_M$.},
\begin{equation}\label{ymP}
S_{\mathrm{YM}}(A,P) = -\int_M \frac{1}{2} \left[ P^{\mu\nu}_a (\partial_\mu A_\nu^a - \partial_\nu A_\mu^a + \epsilon_{bc}^a A_\mu^b A_\nu^c) +  \frac{1}{2} P^{\mu\nu}_aP^a_{\mu\nu} \right] \, \mathrm{vol}_M \, .
\end{equation}
Notice that if $A$ is given by eq. \eqref{connectionA}, then, its curvature is given by,
\begin{eqnarray}\label{Fmunu}
F_A &=& \d_A A = \d A + \frac{1}{2}[A\wedge A] = F_{\mu\nu} \d x^\mu \wedge \d x^\nu \\
&=& \frac{1}{2}\left( \partial_\mu A_\nu^a - \partial_\nu A_\mu^a + \epsilon_{bc}^a A_\mu^b A_\nu^c\right) \d x^\mu \wedge \d x^\nu \otimes \xi_a\, . \nonumber
\end{eqnarray}
Thus the previous expression for the Yang-Mills action becomes,
$$
S_{\mathrm{YM}}(A,P) = -\int_M \left[ P_a^{\mu\nu} F_{\mu\nu}^a + \frac{1}{4}P^{\mu\nu}_aP^a_{\mu\nu} \right] \, \mathrm{vol}_M \,.
$$
The Euler-Lagrange equations of the theory are very easy to obtain from the previous expression, they are,
\begin{equation}\label{ymfo}
\frac{1}{2}P_{\mu\nu}^a = - F_{\mu\nu}^a \, , \qquad \partial_\mu P_a^{\mu\nu} + \epsilon_{ab}^c A_\mu^b P_c^{\mu\nu}= 0 \, .
\end{equation}

\subsection{The canonical formalism near the boundary}\label{sec:canonical}
In order to obtain an evolution description for Yang-Mills and to prepare the ground for the discussion of its canonical quantization, we need to introduce a local time parameter. 

 In the case that $M$ is a Lorenztian manifold it is customary to assume that $M$ is globally hyperbolic (even if far less strict causality assumptions on $M$ would suffice), therefore the time parameter can be chosen globally.   
Actually we will only assume that a collar $U_{\epsilon} = (-\epsilon, 0]\times \partial M$ around the boundary can be chosen and so that a choice of a time parameter $t = x^0$ can be made near the boundary that would be used to describe the evolution of the system.  The fields of the theory would then be considered as fields defined on a given spatial frame that evolve in time for $t \in (-\epsilon, 0]$.  

The dynamics of such fields would be determined by the restriction of the Yang-Mills action \eqref{ymP} to the space of fields 
on $U_{\epsilon}$,
\begin{equation}\label{ymPepsilon}
S_{\mathrm{YM},U_{\epsilon}}(A,P)= - \int_{-\epsilon}^0 \d t \int_{\partial M} \mathrm{vol}_{\partial M}  \left[ P^{\mu\nu}_a F_{\mu\nu}^a + \frac{1}{4} P^{\mu\nu}_aP^a_{\mu\nu} \right] \, ,
\end{equation}
where now we are assuming that the collar $U_\epsilon$ is strongly hyperbolic and $\mathrm{vol}_{U_\epsilon} = \d t \wedge \mathrm{vol}_{\partial M}$ where $\mathrm{vol}_{\partial M}$ is the canonical volume defined by the restriction of the metric $\eta$ to the boundary.

  Expanding \eqref{ymPepsilon} we obtain,
\begin{eqnarray*}
S_{\mathrm{YM},U_\epsilon} (A,P) & = & -\frac{1}{2} \int^0_{- \epsilon} \d t \int_{\partial M} \mathrm{vol}_{\partial M}
\left[ P^{\mu\nu}_a \left( \partial_\mu A_\nu^a - \partial_\nu A_\mu^a + \epsilon_{bc}^a A_\mu^b A_\nu^c \right) + \frac{1}{2}P^{\mu\nu}_a P_{\mu\nu}^a \right] \\
& = & - \frac{1}{2} \int^0_{- \epsilon} \d t \int_{\partial M} \mathrm{vol}_{\partial M}
\left[  P^{k0}_a \left( \partial_k A_0^a - \partial_0 A^a_k + \epsilon^a_{bc} A^b_k A^c_0 \right) \right. + \\
&+& P^{0k}_a \left(\partial_0 A^a_k  - \partial_k A_0^a + \epsilon^a_{bc} A^b_0 A^c_k \right) +  \\
&+&  \left. P^{kj}_a \left( \partial_k A^a_j - \partial_j A^a_k +  \epsilon^a_{bc}A^b_kA^c_j \right)  +  \frac{1}{2} P^{k0}_a P^a_{k0} + \frac{1}{2} P^{0k}_a P^a_{0k}  +  \frac{1}{2}P^{kj}_aP_{kj}^a \right]  \\
&=& \int^0_{- \epsilon} \d t \int_{\partial M} \mathrm{vol}_{\partial M} 
\left[  P^{k0}_a \left( \partial_0 A^a_k - \partial_k A_0^a - \epsilon^a_{bc} A^b_k A^c_0 \right) \right. + \\
&- & \left. \frac{1}{2} P^{kj}_a \left( \partial_k A^a_j - \partial_j A^a_k +  \epsilon^a_{bc}A^b_kA^c_j \right)  -   \frac{1}{2}P^{k0}_a P^a_{k0}  -  \frac{1}{4}P^{kj}_aP_{kj}^a \right] \, .
\end{eqnarray*}
In the previous expressions $\epsilon_{bc}^a$ denote the structure constants of the Lie algebra $\mathfrak{g}$ with respect to the basis $\xi_a$, that is $[\xi_b, \xi_c] = \epsilon_{bc}^a \xi_a$.
Notice that  $\epsilon^a_{bc}A^b_0A^c_0=0$ because for fixed a, ${\epsilon}^a_{bc}$ is skew-symmetric.  Moreover the indexes $\mu$ and $a$ have been pushed down and up by using the metric $\eta$ and the Killing-Cartan form $\langle\cdot, \cdot \rangle$ respectively.

 In equation \eqref{ymP} we introduced the assumption that $P$ is a bivector, i.e., $P_a^{\mu\nu}$ is skew symmetric in $\mu$ and $\nu$. Therefore $P_a^{00} = 0$, and also $P^{k0}_a P^a_{k0} = P^{0i}_a P^a_{0i}$, because $P^{k0} = - P^{0k}$, etc.  This assumption will be justified later on (see Sect. \ref{sect:Legendre})

The previous expression acquires a clearer structure by introducing the appropriate notations for the fields restricted at the boundary and assuming that they evolve in time $t$.
Thus the pull-backs of the components of the fields $A$ and $P$ to the boundary will be denoted respectively as,
\begin{eqnarray*}
a^a_k &:=& A^a_k\mid_{\partial M} ; \qquad  a = (a^a_k) \, , \qquad a^a_0 := A^a_0\mid_{\partial M} ; \qquad a_0 = (a^k_0) \, , \\
p^k_a &:=& P^{k0}_a \mid_{\partial M} ; \qquad p = (p^k_a)\, , \qquad p^0_a := P^{00}_a\mid_{\partial M}= 0 ; \qquad p_0 = (p^0_a)=0 \, , \\
\beta^{ki}_a &:=& P^{ki}_a\mid_{\partial M} ; \qquad \beta =( \beta^{ki}_a) \, .
\end{eqnarray*}
Given two fields at the boundary, for instance $p$ and $a$, we will denote as usual by $\langle p, a\rangle$ the following expression: 
$$
\langle p, a\rangle = \int_{\partial M} p_a^\mu a_\mu^a \, \mathrm{vol}_{\partial M}\, ,
$$
and the contraction of the inner (Lie algebra) indices by using the Killing-Cartan form and the integration over the boundary is understood. 

Introducing the notations and observations above in the expression for $S_{\mathrm{YM}, U_\epsilon}$ we obtain,
\begin{eqnarray}
S_{\mathrm{YM},U_\epsilon}(A,P) &=& \int^0_{- \epsilon}\d t \int_{ \partial M} \mathrm{vol}_{\partial M} 
\left[ p^k_a \left( \dot{a}^a_k - \partial_ka_0^a -\epsilon^a_{bc} a^b_k a^c_0 \right) \right. + \nonumber
\\
&-& \frac{1}{2}  \beta^{ki}_a \left( \partial_k a^a_i - \partial_i a^a_k + \epsilon^a_{bc}a^b_ka^c_i \right) -
\frac{1}{4}  \beta^{ki}_a \beta^a_{ki}  - \frac{1}{2}p_a^kp^a_k  = \nonumber \\
&=& \int^0_{- \epsilon} \d t \, \mathcal{L}(a,\dot{a},a_0,\dot{a}_0,p,\dot{p}, \beta,\dot{\beta}) \,   \label{ym_boundary}
\end{eqnarray}
where now $\mathcal{L}$ denotes the boundary Lagrangian, Eq. \eqref{densityL}, and depends on the restrictions to the boundary of the fields of the theory.
Collecting terms and simplifying we can then write $\mathcal{L}$ as,
\begin{equation}
 \mathcal{L}(a,\dot{a},a_0,\dot{a}_0,p,\dot{p},\beta,\dot{\beta}) 
 = Ê\langle p,\dot{a} - \d_a a_0 \rangle - \langle \beta , F_a \rangle - \frac{1}{2} \langle p,p \rangle - \frac{1}{4} \langle \beta, \beta \rangle \, .
\end{equation}

Now we can find the\ Euler-Lagrange equations corresponding to the Lagrangian function $\mathcal{L}$ as an infinite-dimensional mechanical system defined on the configuration space $P(E) = \{ a, a_0, p, \beta \}$.  Notice that the fields $a$, $p$ are 1-forms on $\partial M$ with values in the Lie algebra $\mathfrak{g}$, while the field $a_0$ is a function on $\partial M$ with values in $\mathfrak{g}$, and the field $\beta$ is a 2-form on $\partial M$ with values in $\mathfrak{g}$ too.   Thus the configuration space is the space of sections of the bundle $(T^*M\oplus T^*M\oplus \Lambda^2(T^*M)\oplus \mathbb{R} )\otimes \mathfrak{g}$.   

Euler-Lagrange equations will have the form:
$$
\frac{d}{\d t} \frac{\delta \mathcal{L}}{\delta \dot{\chi}} = \frac{\delta \mathcal{L}}{\delta \chi} \, ,
$$
where $\chi \in P(E)$ and $\delta /\delta \chi$ denotes the variational derivative of the functional $\mathcal{L}$.

Thus for $\chi = p$ we obtain,
$$
\frac{\delta \mathcal{L}}{\delta \dot{p}}=0, \quad \mathrm{hence} \quad 0 = \frac{\delta \mathcal{L}}{\delta p} = -p + \dot{a} - \d_a a_0 \, ,
$$
and thus,
 \begin{equation}\label{control}
 \dot{a} = p  +  \d_a a_0 \, .
 \end{equation}
 This equation corresponds to the Legendre transformation of the velocity and agrees with the standard minimal coupling definition of the momenta $p = \dot{a} - \mathrm{d}_a a_0$.
 
For $\chi = \beta$ we obtain,
$$
\frac{\delta \mathcal{L}}{\delta \dot{\beta}} = 0, \quad \mathrm{thus} \quad 0=\frac{\delta \mathcal{L}}{\delta \beta } = - F_a - \frac{1}{2}{\beta}
$$
and consequently,
\begin{equation}\label{betafa}
\beta = -2 F_a \, .
\end{equation}

For $\chi = a$ we obtain,
$$
\frac{\delta \mathcal{L}}{\delta \dot{a}} = p , \quad \mathrm{hence} \quad \dot{p} = \frac{d}{\d t}\frac{\delta \mathcal{L}}{\delta \dot{a}} = \frac{\delta \mathcal{L}}{\delta a} =  \mathrm{d}^*_a\beta + [p,a_0].
$$
Thus we get the equation determining the evolution of the momenta field (the Yang-Mills electric field) $p$:
\begin{equation}\label{pdot}
\dot{p} = \mathrm{d}^*_a\beta + [p,a_0] \, .
\end{equation}

Finally for $\chi = a_0$ we obtain,
$$
\frac{\delta \mathcal{L}}{\delta \dot{a_0}} = 0, \quad \mathrm{and \,\,\, therefore,} \quad \frac{\delta \mathcal{L}}{\delta a_0} = \mathrm{d}_a^*p  \, .
$$
Thus we obtain,
\begin{equation}\label{gauss}
\mathrm{d}^*_ap = 0
\end{equation}
that must be interpreted as Yang-Mills Gauss law (in the absence of charges). 
Thus we have two evolution equations, \eqref{control} and 
\eqref{pdot}, and two constraint equations \eqref{betafa} and
 \eqref{gauss}.

Notice that the field $a_0$ is undetermined.  This fact, clearly a consequence of the gauge invariance of the theory,
will be interpreted in the next section.

We will study the consistency of the previous equations in the following section.

\subsection{The Legendre transform}\label{sect:Legendre}

\subsubsection{The Legendre transform in the bulk}

So far we have presented a covariant Hamiltonian theory, equation following (4.5), whose Euler-Lagrange equations are equivalent to Yang-Mills equations. However it is not automatically true that such theory is equivalent to the standard Yang-Mills theory.  The standard Yang-Mills theory is a Lagrangian theory determined by a Lagrangian density which is nothing but the square norm of the curvature $F_A$ of the connection 1-form $A$, and its action the $L^2$ norm of $F_A$, i.e.
\begin{equation}\label{yms}
S = - \frac{1}{4}\int_{M} \mathrm{Tr\,} (F_A \wedge \star F_A) = \int_M L_{\mathrm{YM}}(A) \mathrm{vol}_M \, .
\end{equation}
Standard quantum field theories describing gauge interactions use exactly this Lagrangian description (and provide accurate results).   Thus if we will assume that the correct Yang-Mills theory is provided by the action above, eq. \eqref{yms}, then we would like to relate the covariant Hamiltonian picture above to this Lagrangian picture.  

For this task we have to introduce the natural extension of Legendre transform to the setting of covariant first order Lagrangian field theories.     The Legendre transform is defined \cite{Ca91} as the bundle map $\mathcal{F}L_{YM} \colon J^1E \to P(E)$, as $\mathcal{F}L_{YM}(x^\mu, A_\mu^a; A_{\mu\nu}^a) = (x^\mu, A_\mu^a; P_a^{\mu\nu})$, where
$$
P_a^{\mu\nu} = \frac{\partial L_{\mathrm{YM}}}{\partial A_{\mu\nu}^a} \, 
$$
and $L_{\mathrm{YM}} = - \frac{1}{4} \mathrm{Tr\,} (F_A \wedge \star F_A)$.
Now recall that $\alpha \wedge \star \beta = (\alpha, \beta )_\eta \mathrm{vol}_M$, $\alpha, \beta$, $k$-forms,  where $(\cdot, \cdot )_\eta$ denotes the inner product on $k$-forms.  Thus we will write $\alpha \wedge \star \beta = \alpha_{\mu_1\cdots \mu_k} \beta^{\mu_1 \cdots \mu_k} \mathrm{vol}_M$ where we have raised the indexes by using the 
$\eta^{\mu\nu}$.   Hence,
\begin{equation}\label{LYM}
L_{YM} = \frac{1}{2} F_{\mu\nu} F^{\mu\nu} \, .
\end{equation}
Hence in bundle coordinates $(x^\mu, A_\mu^a; A_{\mu\nu}^a)$, we have,
\begin{equation}F_{\mu\nu} = \frac{1}{2}\left( A_{\nu\mu}^a - A_{\mu\nu}^a + \epsilon_{bc}^a A_\mu^b A_\nu^c\right) \, .
\end{equation}
Thus 
$$
P_a^{\mu\nu} = F^{\mu\nu}_a \, .
$$
Notice that on the graph of the Legendre map, the Yang-Mills action in the Hamiltonian first order formalism, eq. \eqref{ymP}, is just, up to a coefficient, the previous action eq. \eqref{yms}.

It was mentioned at the end of Section \ref{sec:canonical} that the momenta fields $P^{\mu\nu}$ are skew-symmetric in the indices $\mu$ and $\nu$.  Notice that from the definition of the momenta fields as sections of the bundle $P(E)$ there is no restriction on them.  However because Yang-Mills theories are Lagrangian theories, the Legendre transform selects a subspace on the space of momenta that corresponds to fields $P$ which are skew-symmetric on the indices $\mu$, $\nu$.


\subsection{The presymplectic formalism: Yang-Mills at the boundary and reduction}
  As discussed in general in section $3.2$, we define the extended Hamiltonian, $\mathcal{H}$, so that $\mathcal{L} = \langle p, \dot{a}\rangle - \mathcal{H}$.
 Thus  
\begin{equation}\label{PH}
\mathcal{H} (a,\beta) =  \langle p, \d_a a_0 \rangle + \frac12 \langle p, p \rangle +  \langle \beta, F_a + \frac{1}{2} \beta \rangle \, .
\end{equation}

Thus the Euler-Lagrange equations can be rewritten as 
\begin{equation}\label{PMP1}
\dot{a} = \frac{\delta \mathcal{H}}{\delta p}; \quad \dot{p} = - \frac{\delta \mathcal{H}}{\delta a} \, ,
\end{equation}
\begin{equation}\label{PMP0}
\frac{\delta \mathcal{H}}{\delta a_0} = 0 \,
\end{equation}

\begin{equation}\label{PMP2}
\frac{\delta \mathcal{H}}{\delta \beta} = 0 \, .
\end{equation}

We denote again by $\varrho \colon \mathcal{M} \to T^*\mathcal{F}_{\partial M}$ the canonical projection $\varrho(a,a_0,p,\beta)=(a,a_0,p)$.  Let $\omega_{\partial M}$ denote the form on the cotangent bundle $T^*\mathcal{F}_{\partial M}$, 
$$
\omega_{\partial M} =  \delta a \wedge \delta p . 
$$
We will denote again by $\Omega$ the pull-back of this form to $\mathcal{M}$ along $\varrho$, i.e., $\Omega = \varrho^*\omega_{\partial M}$.  Clearly, $\ker {\Omega} = \mathrm{span} \{ \delta /\delta \beta, \delta/\delta a_0 \}$, and we have the particular form that Thm.  \ref{presymplectic_equation} takes here.

\begin{theorem}
The solution to the equation of motion defined by  the Lagrangian $L_{\mathrm{YM}}$, i.e. the Yang-Mills equations,
are in one-to-one correspondence with the integral 
curves of the presymplectic system $(\mathcal{M},\Omega,\mathcal{H})$, i.e. with the integral curves of the vector field $\Gamma$ on $\mathcal{M}$ such that $i_\Gamma\Omega= \mathrm{d} \mathcal{H}$.
\end{theorem}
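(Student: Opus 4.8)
The plan is to read off the general presymplectic statement, Theorem \ref{presymplectic_equation}, in the concrete Yang-Mills boundary coordinates and then to reconcile the resulting first-order boundary system with the second-order equations coming from $L_{\mathrm{YM}}$ via the Legendre transform of Section \ref{sect:Legendre}. First I would write a generic vector field on $\mathcal{M}$ in the variational-derivative notation as
$$
\Gamma = A\,\frac{\delta}{\delta a} + B\,\frac{\delta}{\delta p} + C\,\frac{\delta}{\delta \beta} + D\,\frac{\delta}{\delta a_0} \, ,
$$
and contract it with $\Omega = \delta a \wedge \delta p$. Since $\Omega$ involves only the $a$ and $p$ differentials, $i_\Gamma\Omega = A\,\delta p - B\,\delta a$, whereas $\d\mathcal{H}$ expands into all four variational derivatives of the boundary Hamiltonian $\mathcal{H}$ of Eq. \eqref{PH}.

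Matching the $\delta a$ and $\delta p$ components of $i_\Gamma\Omega = \d\mathcal{H}$ fixes $A = \delta\mathcal{H}/\delta p$ and $B = -\delta\mathcal{H}/\delta a$, which along the integral curves of $\Gamma$ become precisely the evolution equations \eqref{PMP1}. Matching the two remaining components, which lie in the annihilator of $\Omega$, forces $\delta\mathcal{H}/\delta\beta = 0$ and $\delta\mathcal{H}/\delta a_0 = 0$, i.e. the constraint equations \eqref{PMP2} and \eqref{PMP0}. Thus the compatibility of $i_\Gamma\Omega = \d\mathcal{H}$ reproduces exactly the full set of Euler-Lagrange equations \eqref{control}, \eqref{betafa}, \eqref{pdot} and \eqref{gauss} obtained directly from $\mathcal{L}$ in Section \ref{sec:canonical}, and conversely any curve solving those equations is an integral curve of such a $\Gamma$. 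This is Theorem \ref{presymplectic_equation} written out in the present coordinates, and it establishes the bijection at the level of the boundary system.

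To complete the identification with the Yang-Mills equations \emph{proper} I would invoke the Legendre transform $\mathcal{F}L_{\mathrm{YM}}$ of Section \ref{sect:Legendre}: on its graph the constraint \eqref{betafa}, $\beta = -2F_a$, together with the minimal-coupling relation \eqref{control}, $p = \dot a - \d_a a_0$, eliminate the auxiliary momenta, and substituting these into the evolution equation \eqref{pdot} and the constraint \eqref{gauss} recovers the standard second-order Yang-Mills field equations and the Gauss law associated with the Lagrangian \eqref{yms}. Hence an integral curve of $\Gamma$ projects, through $\varrho$ and the Legendre relations, to a genuine solution of the Yang-Mills equations on the collar, while the reconstruction of the auxiliary fields $\beta$ and $p$ from a Yang-Mills solution is unique, yielding the claimed one-to-one correspondence.

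The step I expect to be the main obstacle is the bookkeeping surrounding the temporal component $a_0$. Unlike the generic situation of Theorem \ref{presymplectic_equation}, here $\ker\Omega$ acquires the extra direction $\delta/\delta a_0$, so the kernel is two-dimensional and produces a second constraint, the Gauss law \eqref{gauss}, rather than a single one. Since $a_0$ is left undetermined by the equations (it generates the gauge freedom remarked upon after Eq. \eqref{gauss}), some care is needed to state the correspondence cleanly: one must either fix a choice of $a_0$ or phrase the bijection modulo the gauge distribution $\mathcal{K}_\infty$, so that integral curves differing only in their $a_0$-component are identified with the same physical Yang-Mills solution.
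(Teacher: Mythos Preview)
Your proposal is correct and follows essentially the same approach as the paper: the paper treats this theorem as the specialization of Theorem~\ref{presymplectic_equation} to the Yang-Mills coordinates, and your computation of $i_\Gamma\Omega=\d\mathcal{H}$ component by component is exactly that specialization spelled out. Your additional paragraph invoking the Legendre transform of Section~\ref{sect:Legendre} to tie the first-order boundary system back to the second-order equations from $L_{\mathrm{YM}}$, and your closing remark on the gauge indeterminacy in $a_0$, go beyond what the paper actually argues at this point (it simply states the theorem and moves on), but they are correct and make the ``one-to-one'' claim more honest.
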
          

The primary constraint submanifold $\mathcal{M}_1$ is
 defined by the two constraint equations,
$$
\mathcal{M}_1= \{(a,a_0,p,\beta)|\mathcal{F}_a + \beta = 0, d_a^*p = 0 \} \, .
$$

Since $\beta$ is just a function of $a$, we have that $\mathcal {M}_1\cong \{(a,a_0,p)| d_a^*p = 0\}$ and 
$\ker \Omega|_{\mathcal{M}_1} = \mathrm{span} \{\frac{\partial}{\partial a_0}\}$.    Thus
$\mathcal{M}_2 = \mathcal{M}_1/(\ker \Omega|\mathcal{M}_1) \cong \{(a,p)| \mathrm{d}_a^*p = 0\}.$


\subsection{Gauge transformations: symmetry and reduction}

The group of gauge transformations $\mathcal{G}$, i.e, the group of automorphisms of the principal bundle $P$ over the identity, is a fundamental symmetry of the theory.      Notice that the action $S_{\mathrm{YM}}$ is invariant under the action of $\mathcal{G}$ (however it is not true that $H$ is $\mathcal{G}$-invariant).

The quotient of the group of gauge transformations by the normal subgroup of identity gauge transformations at the boundary defines the group of gauge transformations at the boundary $\mathcal{G}_{\partial M}$, and it constitutes a symmetry group of the theory at the boundary, i.e. it is a symmetry group both of the boundary Lagrangian $\mathcal{L}$ and of the presymplectic system $(\mathcal{M}, \Omega, \mathcal{H})$.    We may take advantage of this symmetry to provide an alternative description of the constraints found in the previous section.    

\begin{proposition}
With the notations above, $\mathcal{J}(a,p) = \d_a^*p$.
\end{proposition}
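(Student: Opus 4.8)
The plan is to obtain the formula directly from the general cotangent-lift description of the boundary moment map in Proposition \ref{boundary_moment}, followed by a single integration by parts over $\partial M$. First I would invoke that proposition: since $\mathcal{G}_{\partial M}$ acts on $T^*\mathcal{F}_{\partial M}$ by cotangent lifts, its moment map is
$$
\langle \mathcal{J}(a,p),\xi\rangle = \langle p,\, \xi_{\mathcal{F}_{\partial M}}(a)\rangle, \qquad \xi\in\mathfrak{g}_{\partial M},
$$
where $\xi_{\mathcal{F}_{\partial M}}(a)$ is the infinitesimal generator of the gauge action evaluated at the boundary connection $a$, and the pairing on the right is the one in Eq. \eqref{pairing_cotangent}: contraction of the form indices with the metric on $\partial M$, of the internal indices with the Killing--Cartan form, and integration against $\mathrm{vol}_{\partial M}$.

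The crucial step is to identify this generator, and this is where the Yang-Mills structure enters. The elements of $\mathfrak{g}_{\partial M}$ are $\mathfrak{g}$-valued functions $\xi$ on $\partial M$, but the gauge group acts on connections \emph{affinely} rather than fibrewise-linearly; consequently the vertical vector field on $E_{\partial M}$ generating $\xi$, evaluated along the section $a$, is the covariant differential rather than $\xi$ itself. I would verify this by differentiating the finite gauge action $a\mapsto g^{-1}a\,g + g^{-1}\,\d g$ at $g=\mathrm{id}$ with $g=e^{t\xi}$, which yields
$$
\xi_{\mathcal{F}_{\partial M}}(a) = \d_a\xi = \d\xi + [a,\xi].
$$
Substituting gives $\langle \mathcal{J}(a,p),\xi\rangle = \langle p,\, \d_a\xi\rangle$.

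It then remains to move $\d_a$ onto $p$. Because $\partial M$ is boundaryless, $\partial(\partial M)=\emptyset$, and the Killing--Cartan form is $\mathrm{Ad}$-invariant, integrating by parts produces no boundary contribution and turns $\d_a$ into its formal $L^2$ adjoint, the covariant codifferential $\d_a^*$:
$$
\langle p,\, \d_a\xi\rangle = \langle \d_a^* p,\, \xi\rangle.
$$
Since this holds for every $\xi\in\mathfrak{g}_{\partial M}$ and the pairing is non-degenerate, I would conclude $\mathcal{J}(a,p)=\d_a^* p$, which is exactly the Gauss-law expression of Eq. \eqref{gauss}, so that $\mathcal{J}^{-1}(\mathbf{0})$ is precisely the Gauss constraint surface. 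The only genuinely delicate point is the identification of the generator as $\d_a\xi$: one must keep track of the affine (as opposed to fibrewise-linear) nature of the gauge action on connections, since reading the generator off the literal statement of Proposition \ref{boundary_moment} as $\xi\circ a$ would miss the inhomogeneous term $\d\xi$. Everything after that is the one-line integration by parts above.
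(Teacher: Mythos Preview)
Your proof is correct and follows essentially the same route as the paper: invoke the cotangent-lift moment map from Proposition~\ref{boundary_moment}, compute the infinitesimal gauge generator on connections as $\d_a\xi$ by differentiating $g_s^{-1}a\,g_s+g_s^{-1}\d g_s$, and then identify $\langle p,\d_a\xi\rangle$ with $\langle \d_a^*p,\xi\rangle$. If anything you are more explicit than the paper, which stops at $\langle p,\d_a\xi\rangle$ and leaves the passage to $\d_a^*p$ implicit; your remarks on $\partial(\partial M)=\emptyset$ and Ad-invariance of the Killing form are exactly what justifies that step.
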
 

\begin{proof}  The moment map $\mathcal{J} \colon T^*\mathcal{F}_{\partial M} \to \mathfrak{g}_{\partial M}^*$ is given by,
$$
\langle \mathcal{J} (a,p), \xi \rangle = \langle p, \xi_{\mathcal{F}_{\partial M}}\rangle = \langle
 p, \d_a\xi \rangle \, ,
$$
because the gauge transformation $g_s = \exp s \xi$ acts in $a$ as $a \mapsto g_s\cdot a =   g_s^{-1} a g_s + g_s^{-1}\d g_s$
  and
the induced tangent vector is given by,
$$
\xi_{\mathcal{A}_{\partial M}} (a)= \frac{\d}{\d s}  g_s \cdot a \mid_{s = 0} = \d_a\xi \, .
$$

\end{proof}

Let $\mathcal{A}_{\partial M}$ denote the space of connections $a$ defined on the boundary $\partial M $.  The constraint submanifold 
 $\mathcal{M}_1$ projected to the space $T^*\mathcal{A}_{\partial M}$, by means of the projection map $(a,a_0,p) \mapsto (a,p)$, is such that $\mathcal{C} = \mathcal{J}^{-1}(\mathbf{0})$. This is exactly the situation depicted in Sect.\ref{reduction_gauge}. Hence the standard Marsden-Weinstein reduction, eq. \eqref{MWR}, will give the reduced phase space,
$$
\mathcal{R}_{\mathrm{YM}} = \mathcal{J}^{-1}(\mathbf{0})/ \mathcal{G}_{\partial M} \, .
$$
and its Hamiltonian,
$$
h ([a], [p] )= \frac{1}{2}\langle p, p \rangle - \frac{1}{2}\langle F_a, F_a\rangle \, ,
$$
where $[a]$ and $[p]$ denote equivalence classes of connections and momenta with respect to the action of the gauge group $\mathcal{G}_{\partial M}$.   Notice that both terms in the Hamiltonian function $h$ are $\mathcal{G}_{\partial M}$-invariant, and the Hamiltonian system $h$ defined on the reduced phase space $\mathcal{R}_{\mathrm{YM}}$ 
has the structure of an infinite-dimensional mechanical system with potential function $V([a]) = \frac{1}{2}|| F_a ||^2$.

The reduction of the boundary values of solutions of Yang-Mills equation in the bulk is of course, an isotropic submanifold of the reduced space.  In the case where $M$ is Riemannian, an existence and uniqueness theorem for solutions of Yang-Mills equations on manifolds with boundary can be proved and hence this submanifold, following the proof of Theorem 2.7, is a Lagrangian submanifold.

 \section{Conclusions and discussion}

It has been shown that the multisymplectic geometry of the covariant phase space $P(E)$ provides a convenient framework to study first order covariant Hamiltonian field theories on manifolds with boundaries.  In particular it induces a natural presymplectic structure on the total space of fields at the boundary whose reduction provides the symplectic phase space of the theory.  The solution of the Euler-Lagrange equations on the bulk induce an isotropic  submanifold in the reduced symplectic phase space at the boundary. Provided that the boundary conditions are well-posed, this submanifold is in fact Lagrangian.

The gauge symmetries of the theory fit nicely into the picture and the symplectic reduction of the theory at the boundary induced by the moment map, i.e., by the conserved charges of the theory, is in perfect agreement with the presymplectic analysis of the theory.   Various instances are discussed illustrating the main features of the theoretical framework:  the real scalar field, the Poisson $\sigma$-model and Yang-Mills theories.  Each of them allows as to stress different aspects of the theory.  The regular situation for the scalar field, the coisotropic structure at the boundary in the case of the Poisson $\sigma$-model and the reduction using the moment map at the boundary in the case of Yang-Mills theories.   

The theory presented in this work is particularly well suited for describing Palatini's gravity. C. Rovelli's
 \cite{Ro04}, \cite{Ro06}, can be read in part as seeking and arguing for precisely such a theory. We interpret Rovelli's canonical form ${\Theta}_H$ as alluding to a multisymplectic structure in the bulk. Such aspects will be discussed in a subsequent paper where the reduction of Topological Field Theories at the boundary and Palatini's gravity will be discussed from a common perspective.
 

\section*{Acknowledgements}\label{sec:acknowledgements}
A.I. was partially supported by the Community of Madrid project QUITEMAD+, S2013/ICE-2801, and MINECO grant MTM2014-54692-P.   Part of this work was completed while A.S. was a guest of the Mathematics department at University Carlos III de Madrid, and supported by Spain's Ministry of Science and Innovation Grant MTM2010-21186-C02-02. 
She thanks the Mathematics department for their warm hospitality and for their financial support.
A.S. also thanks Nicolai Reshetikhin for suggesting to her a problem that motivated this work. The solution to that other problem will appear in another publication. Finally, the authors would like to thank the referee for carefully reading over the manuscript and for drawing their attention to important points that were in need of clarification.


\end{document}